\documentclass[12pt,a4paper]{article}
\usepackage{amsfonts}
\usepackage[countmax]{subfloat}
\usepackage{algorithm, algorithmicx}
\usepackage[noend]{algpseudocode}
\usepackage{graphics}
\usepackage{tikz}
\usetikzlibrary{decorations.markings,quantikz}
\usepackage{enumerate}
\usepackage{hyperref}
\usepackage{ulem}
\usepackage{mathtools}
\mathtoolsset{showonlyrefs}
\usepackage{amsmath}
\usepackage{graphicx}
\usepackage{enumerate}
\usepackage{amssymb}
\usepackage{color}
\usepackage{amsthm}
\usepackage{caption}
\usepackage{subcaption}
\usepackage{authblk}

\usepackage[left=0.75in,right=0.75in,top=0.75in,bottom=0.75in]{geometry}
\newcommand{\appref}[1]{\hyperref[#1]{{Appendix~\ref*{#1}}}}
\newcommand{\be}{\begin{eqnarray} \begin{aligned}}
\newcommand{\ee}{\end{aligned} \end{eqnarray} }
\newcommand{\benn}{\begin{eqnarray*} \begin{aligned}}
\newcommand{\eenn}{\end{aligned} \end{eqnarray*}}
\newcommand*{\textfrac}[2]{{{#1}/{#2}}}

\newcommand*{\cC}{\mathcal{C}}
\newcommand*{\cE}{\mathcal{E}}

\newcommand*{\cG}{\mathcal{G}}

\newcommand*{\cI}{\mathcal{I}}

\newcommand*{\cL}{\mathcal{L}}

\newcommand*{\cN}{\mathcal{N}}

\newcommand*{\cQ}{\mathcal{Q}}
\newcommand*{\cR}{\mathcal{R}}
\newcommand*{\cO}{\mathcal{O}}

\newcommand*{\tr}{\mathop{\mathrm{tr}}\nolimits}

\newcommand*{\supp}{\mathrm{supp}}

\newcommand{\bc}{\begin{center}}
\newcommand{\ec}{\end{center}}

\newtheorem{theorem}{Theorem}[section]
\newtheorem{lemma}[theorem]{Lemma}

\newtheorem{corollary}[theorem]{Corollary}

\usepackage{amsfonts}

\def\01{\{0,1\}}

\newcommand{\eps}{\varepsilon}

\definecolor{myred}{RGB}{204,51,17}
\definecolor{myblue}{RGB}{0,119,187}
\definecolor{mygrey}{RGB}{187,187,187}

\tikzset{bluepauli/.style={
        draw,
        thick,
        circle,
        fill=myblue!50,
        minimum size=1cm }}
\tikzset{bluecliff/.style={
        draw,
        thick,
        rectangle,
        fill=myblue!90,
        minimum size=1cm }}
\tikzset{greypauli/.style={
        draw,
        thick,
        circle,
        fill=mygrey!50,
        minimum size=1cm }}
\tikzset{redcliff/.style={
        draw,
        thick,
        rectangle,
        fill=myred!90,
        minimum size=1cm }}
\tikzset{redpauli/.style={
        draw,
        thick,
        circle,
        fill=myred!50,
        minimum size=1cm }}
\tikzset{greycliff/.style={
        draw,
        thick,
        rectangle,
        fill=white,
        minimum size=1cm
 }}

\newcommand*{\ExpE}{\mathbb{E}}

\newcommand*{\Utelep}{U^{\mathsf{Telep}}}

\newcommand*{\telep}{\mathsf{Telep}}

\newcommand*{\restrictedtelepR}{R_{\Utelep_n}\restriction_\xi}
\newcommand*{\telepR}{R_{\Utelep_n}}

\begin{document}

\newcommand*{\AC}{\mathsf{AC}}
\newcommand*{\NC}{\mathsf{NC}}
\newcommand*{\QNC}{\mathsf{QNC}}
\renewcommand*{\P}{\mathsf{P}}
\renewcommand*{\L}{\mathsf{L}}
\newcommand*{\TC}{\mathsf{TC}}
\newcommand*{\NL}{\mathsf{NL}}

\newcommand*{\DT}{\mathsf{DT}}
\newcommand*{\poly}{\mathsf{poly}}

\newcommand*{\Cliff}{\mathsf{Clifford}}

\newcommand{\ZZ}{\mathbb{Z}}

\newcommand{\labelgroup}[5]{\POS"#1,#5"."#2,#5"."#1,#5"."#2,#5", \POS"#1,#5"."#2,#5"."#1,#5"."#2,#5"*!C!<1em,#3>=<0em>{#4}}
\newcommand*{\cfinal}{C_{\textrm{tot}}}

\DeclareDocumentCommand{\makereversebit}{O{+45}O{}m}{
	\arrow[arrows,line cap=round,to path={(\tikztostart) -- ($(\tikztostart)!{+0.5/cos(#1)}!#1:(\tikztotarget)$) node [anchor=west,style={#2}]{#3} -- (\tikztotarget)}]{d}
}
\newcommand*{\localfunctions}{\mathsf{Local}}

\newcommand*{\teleport}{\mathsf{Telep}}

\newcommand{\coml}[1]{{\textcolor{blue}{L: #1}}}
\newcommand{\comx}[1]{{\textcolor{red}{X: #1}}}

\newcommand*{\cliff}{\mathsf{Cliff}}
\newcommand*{\pauli}{\mathsf{Pauli}}
\newcommand*{\CNOT}{\mathsf{CNOT}}
\newcommand*{\pthres}{p_{\textrm{thres}}}
\newcommand*{\manc}{{m_{\textrm{aux}}}}
\newcommand*{\rec}{\mathsf{Rec}}
\newcommand*{\rep}{\mathsf{Rep}}
\newcommand*{\parity}{\mathsf{Parity}}
\newcommand*{\dec}{\mathsf{Dec}}
\newcommand*{\Uext}{U^{\textrm{ext}}}
\newcommand*{\cdepth}{\mathsf{depth}} %
\newcommand*{\csize}{\mathsf{size}} %
\newcommand*{\enc}{\mathsf{Enc}}

\title{A colossal advantage:  3D-local noisy shallow quantum circuits defeat unbounded fan-in classical circuits}
\author{Libor Caha}
\author{Xavier Coiteux-Roy}
\author{Robert K\"onig}
\affil{\small School of Computation, Information and Technology, Technical University of Munich \& \\
Munich Center for Quantum Science and Technology, Munich, Germany.}

\maketitle

\begin{abstract}
We present a computational problem with the following properties:
(i) Every instance can be solved with near-certainty by a constant-depth quantum circuit using only nearest-neighbor gates in $3D$ even when its implementation is corrupted by noise.
(ii) Any constant-depth classical circuit composed of unbounded fan-in AND, OR, as well as NOT gates, i.e., an $\AC^0$-circuit,  of size smaller than a certain subexponential, fails to solve a uniformly random instance with probability greater than a certain constant. 
Such an advantage against unbounded fan-in classical circuits was previously only known in the noise-free case or without locality constraints. We overcome these limitations, proposing a quantum advantage demonstration amenable to  experimental realizations. 
  Subexponential circuit-complexity lower bounds have traditionally been referred to as exponential. We use the term colossal since our fault-tolerant $3D$ architecture resembles a certain Roman monument.
\end{abstract}

\tableofcontents

\section{Introduction}
In the present era of near- and intermediate-term quantum devices, fully fault-tolerant, universal, scalable quantum computers belong to the realm of science fiction. Fortunately, even with significantly more modest, imperfect resources, quantum information-processing can be superior to purely classical protocols. A key challenge is to identify such cases and characterize the potential of limited quantum devices. In the context of computation, this not only involves identifying computational problems that straddle the fine line of being amenable to quantum algorithmic solutions while being beyond the reach of comparable classical devices. It also requires designing computational architectures that can withstand noise, and can realistically be built. We make a proposal in this direction, establishing the strongest known complexity-theoretic separation between the computational power of noisy, $3D$-local shallow quantum circuits as opposed to that of shallow classical circuits with unbounded fan-in gates.

The study of circuit complexity has a long history in classical computer science, where capabilities and limitations of different classes of circuits can be accurately characterized. Celebrated results include for example the statement that the parity of $n$~bits cannot be computed in~$\AC^0$, i.e., by polynomial-size, constant-depth circuits with unbounded fan-in AND, OR, and NOT gates~\cite{furst_parity_1984,Ajtai1983,HastadThesis}. It is also known for example that~$\AC^0$ is richer than the class~$\NC^0$ of problems solved by constant-depth bounded fan-in circuits. Trivially, a computational problem separating these two classes is that of computing the AND of $n$~bits.

Motivated by the successes of classical circuit complexity, analogous quantum circuit classes have been investigated from the early days of quantum computing, see e.g.,~\cite{hoyerspalek}.  More recent work studied how quantum circuit complexity classes compare to classical ones. In~\cite{BGK}, it was shown that there is a relation problem that is solvable by a constant-depth (so-called shallow) quantum circuit, but whose solution by a classical circuit with bounded fan-in gates requires at least logarithmic depth. This provides an unconditional separation between~$\QNC^0$ (the class of shallow quantum circuits) and~$\NC^0$. A stronger complexity-theoretic result 
separating~$\QNC^0$ from~$\AC^0$
was subsequently established by Bene Watts et~al.~\cite{BeneWattsKothariSchaefferTalAC0}: They showed that a certain problem, the so-called relaxed parity-halving problem, has the property of being solvable by a constant-depth quantum circuit, yet any classical circuit with unbounded fan-in gates requires superpolynomial (in fact subexponential) size for this task.

{\bf Quantum advantage in the presence of noise.} While these results establish an advantage of certain shallow quantum circuits compared to similarly defined classical circuits (respectively associated complexity classes), it is important to study whether or not these findings translate to an experimentally observable  difference between quantum and classical devices in  the real world.  That is, is it possible to benefit from quantum information-processing when all operations and building blocks are noisy? Not surprisingly, this central question was also posed immediately after Shor's discovery of the factoring algorithm and other early quantum algorithms. In that case, the (theoretical) resolution is the fault-tolerance threshold theorem~\cite{Aharonov1997}: It  ensures that quantum computations can be rendered fault-tolerant while incurring only a polynomial overhead in resources assuming the noise strength is below some threshold. 

When studying shallow (i.e., constant-depth) quantum circuits, the standard fault-tolerance constructions from the fault-tolerance threshold theorem do not apply: This is because these typically do not preserve shallowness. This means that rendering complexity-theoretic separations fault-tolerant generally requires new,  non-standard error-correction techniques. A first result in this direction was obtained in~\cite{BGKT}, where it was shown that the separation 
between shallow quantum and bounded fan-in classical circuits (first demonstrated in~\cite{BGK}) can be made  robust to noise: Even noisy shallow quantum circuits beat  such classical $\NC^0$-circuits at a certain task. 

\subsection{Our result}
Here we seek a stronger separation: We show that noisy shallow $3D$-local quantum circuits solve a computational task with higher probability than (ideal) unbounded fan-in classical~$\AC^0$-circuits of subexponential size. This can thus be seen as a fault-tolerant counterpart to  the work~\cite{BeneWattsKothariSchaefferTalAC0}. More precisely, we show the following:
\begin{theorem}[Fault-tolerant quantum advantage against $\AC^0$, informal version]\label{thm:mainresultinformal}
There is a computational problem with the following properties:
\begin{enumerate}[(i)]
\item
The problem  is beyond the reach of $AC^0$-circuits: Any $AC^0$-circuit solving the problem with probability at least $0.9888$ on average over a randomly chosen instance has superpolynomial (in fact subexponential) size.
\item
The problem  can be solved with average probability at least~$0.99$ by a $3D$-local shallow quantum circuit even in the presence of local stochastic noise.  That is, the quantum advantage can be observed using a shallow, noisy quantum circuit which only involves nearest-neighbor gates on qubits arranged on a regular $3D$~lattice.
\end{enumerate}
 \end{theorem}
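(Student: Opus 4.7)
The plan is to combine the noise-free $\AC^0$-separation of Bene Watts et al., based on the relaxed parity-halving problem, with a fault-tolerant, constant-depth, $3D$-local quantum implementation in the spirit of the BGKT refinement of BGK. The classical lower bound part of the theorem is essentially inherited from existing work, whereas all the new technical content lives in the quantum construction and its noise analysis.

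For part (i), I would take the computational problem to be (a variant of) the relaxed parity-halving problem, whose subexponential $\AC^0$ lower bound is precisely the content of the Bene Watts et al.\ theorem. The specific constant $0.9888$ should either match or follow from their bound via standard tricks (padding, independent repetitions, thresholded combining) that preserve $\AC^0$-hardness, so no substantial new classical combinatorial work is needed. The gap to the quantum success probability $0.99$ is the budget within which the quantum noise analysis must fit.

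For part (ii), the plan has three ingredients. (a) Encode each logical qubit in a $3D$ topological code (for example a $3D$ surface or gauge color code) supporting single-shot error correction, so that one round of noisy syndrome extraction alone already suppresses errors below a threshold. (b) Since the ideal parity-halving protocol requires only Clifford operations, measurements in the computational and Hadamard bases, and classical postprocessing, realize each of these at the logical level by transversal Cliffords and transversal measurements, which are each constant depth. (c) Embed the whole stack of code blocks and the wiring connecting them into a region of $\mathbb{R}^3$ so that every gate acts between geometric nearest neighbors; this is the ``colossal'' $3D$ layout that the title refers to.

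The main obstacle is the rigorous error analysis under local stochastic noise. One must show that (1) noisy encoded state preparation together with one round of noisy syndrome measurement and classical decoding gives logical qubits whose residual error is below a small constant; (2) the composition of a constant number of such logical operations only multiplies these error rates by a constant and does not cause catastrophic error accumulation across logical blocks; and (3) the classical postprocessing that maps the encoded measurement record to the final parity-halving answer is simple enough that it does not weaken the $\AC^0$-hardness on the classical side (i.e., it can be absorbed into the problem specification or into an $\AC^0$-computable output transformation). Keeping all of these constants explicit, so that the overall success probability of the quantum protocol provably exceeds $0.99$ while $\AC^0$-circuits are provably stuck below $0.9888$, is the delicate quantitative core of the proof.
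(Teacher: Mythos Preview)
Your plan has a genuine gap at exactly the point the paper identifies as the crux of the result: step~(c), the $3D$-local embedding.  You propose to reuse the relaxed parity-halving problem and then ``embed the whole stack \ldots\ so that every gate acts between geometric nearest neighbors,'' but this is precisely the step that is not known to go through.  The parity-halving circuit is not $1D$-local (it is built on a poor-man's cat/GHZ state with essentially all-to-all structure), and when one wraps it in a fault-tolerant encoding the resulting extended circuit is non-local; this is exactly the situation of Grier--Ju--Schaeffer~\cite{grier2021interactiveNoisy}, which the paper explicitly cites as achieving a noisy advantage against $\AC^0$ \emph{without} locality.  Your proposal would reproduce that result, not the present one.

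The paper's fix is to change the underlying problem rather than the encoding.  It introduces the single-qubit gate-teleportation relation~$R_{\Utelep_n}$, whose ideal circuit is $1D$-local on a ring; this is what makes the fault-tolerant extension (folded surface codes with single-shot Bell-state preparation via $3D$ cluster states, as in~\cite{BGKT}) embed as a torus of wedges in~$\mathbb{R}^3$.  Because the problem is new, the $\AC^0$ lower bound is \emph{not} inherited from Bene Watts et al.; Sections~2--6 develop it from scratch via a tailored magic-square variant, a light-cone argument for $\NC^0$, and then the multi-switching lemma.  So your assumption that ``no substantial new classical combinatorial work is needed'' is also off: about half the paper is exactly that work.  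Finally, the code used is the $2D$ folded surface code, not a $3D$ topological code; the third spatial dimension comes from the single-shot preparation circuit, not from the code itself.
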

Our result thus strengthens the result of~\cite{BGKT}: While requiring a comparable amount of (imperfect) quantum resources/capabilities, and only local operations in~$3D$, it establishes a quantum advantage against $\AC^0$ instead of $\NC^0$.
  
To establish our main result, we borrow some of the fault-tolerance techniques from Ref.~\cite{BGKT}.  We note that these  techniques have also been applied in Ref.~\cite{grier2021interactiveNoisy}  to render various quantum advantage proposals fault-tolerant,
including interactive settings (as considered in the noise-free setup in~\cite{grier2020interactive}).
In particular, one of the claims of~\cite{grier2021interactiveNoisy} combined with the noise-free result from Ref.~\cite{BeneWattsKothariSchaefferTalAC0} may superficially appear identical to our main claim. Namely that noisy shallow quantum circuits are superior to~$\AC^0$-circuits. However, our result is stronger than that of~\cite{grier2021interactiveNoisy} because it only requires a $3D$-local quantum circuit. As we argue in Section~\ref{sec:localityconsiderationsnecessary}, locality considerations are essential not only for technological reasons, but also in order to ensure that the noise model is meaningful.

\subsection{The need for considering locality\label{sec:localityconsiderationsnecessary}} 
Ref.~\cite{grier2021interactiveNoisy} shows a quantum advantage of noisy shallow quantum circuits against~$\AC^0$, but the results obtained in that paper  (including additional results for interactive settings) do not take into account the problem of (non)-locality of operations.  Let us briefly comment on how our contribution goes beyond the current state-of-the-art, and  why the consideration of geometric locality is essential. The noise-tolerant quantum circuit constructed in~\cite{grier2021interactiveNoisy} for demonstrating an advantage against $\AC^0$ is non-local: It  does not specify a geometric layout of the qubits, and thus essentially assumes all-to-all-connectivity. 
More precisely, when considering such non-local circuits, the noise model of local stochastic noise~\cite{gottesmanoverhead,fawzi2018constant} occurring between circuit gate layers (see Section~\ref{sec:localstochasticnoise})   does not accurately reflect typical physical setups where a (large) number of qubits are spatially arranged, e.g., on a lattice. In such a setting, gates between distant qubits typically need to be implemented by a sequence of SWAP-gates. This requires a more detailed error analysis: The local stochastic noise model  is only meaningful if  these additional SWAP-gates (and hence the geometric layout) are explicitly incorporated in the analysis. Further complications may arise (when geometrically embedding qubits into space in some manner)  because such SWAP-gates will increase the circuit depth, thus potentially going outside the regime of shallow, that is, constant-depth circuits.

We note that a fault-tolerant quantum advantage experiment robust to  an interesting alternative noise model involving 
adversarial (instead of stochastic) 
qubit loss was proposed in~\cite{hasegawalegallArbitrarycorruption}. The corresponding circuit is local on an expander graph, but presumably not local for any embedding in~$3D$.

This motivates the question of whether or not a quantum advantage against $\AC^0$ can be demonstrated by means of a {\em geometrically local} noise-resilient quantum circuit, where qubits are arranged 
 (with constant density) in space and gates act locally on constant-size neighborhood of qubits. We answer this affirmatively by 
 Theorem~\ref{thm:mainresultinformal} above. Clearly, the fact that the corresponding advantage can be observed by a {\em local}, error-resilient quantum circuit makes this result particularly attractive for potential experimental implementations. 

\subsection{From single-qubit gate teleportation to complexity theory}
 
The key difference of our work to~\cite{grier2021interactiveNoisy} is in the choice of underlying problem: Whereas the authors of~\cite{grier2021interactiveNoisy} 
rely on the  so-called relaxed parity-halving problem, our construction is derived from what we call the single-qubit gate-teleportation circuit (see Fig.~\ref{fig:gateteleportationcircuit} in Sec.~\ref{sec:singlequbitteleportationrelation}). 
The latter is a concatenation of multiple applications of the standard gate-teleportation procedure~\cite{GottesmanChuangNature} for single-qubit Clifford gates, but without the application of the corresponding Pauli corrections: It is a classically controlled Clifford circuit (taking $n$~single-qubit Clifford elements as input), and outputting the result of $n$~Bell measurements (equivalently, a sequence of $n$~Pauli observables), see Section~\ref{sec:singlequbitteleportationrelation}. The geometrically $1D$-local nature ultimately yields  a fault-tolerant quantum advantage proposal with a $3D$-local quantum circuit.

On our route to establishing a quantum advantage of noisy shallow quantum circuits against~$\AC^0$, we establish the following result for (ideal) circuits.
\begin{theorem}[Single-qubit gate teleportation yields a quantum advantage against~$\AC^0$, informal version]\label{thm:gateteleportationqadvantage}
Let $\cC$ be an $\AC^0$-circuit which, for a uniformly  chosen sequence~$C=(C_0,\ldots,C_{n-1})\in\cliff^n$ of $n$~single-qubit Clifford gates, produces --- with probability at least~$0.9888$ on average --- a string~$x\in \{0,1\}^{2n}$ which  occurs with non-zero probability in the output distribution of the single-qubit gate-teleportation circuit on input~$C$. Then the size of~$\cC$ is superpolynomial (in fact subexponential). 
\end{theorem}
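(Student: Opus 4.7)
My plan is to reduce the relaxed parity-halving problem of Bene Watts et al.\ to the single-qubit gate-teleportation task, and then invoke the subexponential $\AC^0$ lower bound established in that work. The structural fact driving the reduction is that each single-qubit Clifford acts by conjugation as a symplectic automorphism of the single-qubit Pauli group. Consequently, the Bell measurement outcomes $x\in\{0,1\}^{2n}$ along the teleportation chain determine an accumulated Pauli correction $P(x,C)$ on the output wire which is an affine function of $x$ over $\mathbb{F}_2$ whose coefficients are fixed by the symplectic images of the individual $C_i$'s. The set of $x$'s with nonzero probability under the circuit on input $C$ is therefore an affine subspace of $\{0,1\}^{2n}$ whose defining constraints depend algebraically on $C$; this is the handle I would exploit.

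First, I would write out the joint distribution of the $2n$ Bell outcomes explicitly in terms of the symplectic tableau of $C$, making the above affine structure completely transparent and identifying which bits of $x$ are forced and which are free (and uniform). Second, I would construct an $\NC^0$ encoding $z\mapsto C(z)$ of parity-halving instances into Clifford sequences together with an $\NC^0$ post-processing map that turns any valid teleportation output $x$ for $C(z)$ into a valid parity-halving answer for $z$; the natural candidate restricts each $C_i$ to a carefully chosen two-element subset of $\cliff$ whose symplectic images implement precisely the linear relations that appear in the parity-halving constraint. Composing such an $\AC^0$ algorithm for gate teleportation with these two $\NC^0$ wrappers yields an $\AC^0$ algorithm of essentially the same size and success probability for relaxed parity-halving, and the lower bound of Bene Watts et al.\ then gives the claimed subexponential size bound.

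The main obstacle I anticipate is the distributional matching: Theorem~\ref{thm:gateteleportationqadvantage} quantifies success probability under the uniform distribution over $\cliff^n$, whereas parity-halving has its own input distribution, and a naive restriction to a structured sub-family of $\cliff^n$ loses exponentially much mass, which would completely destroy the quantitative success-probability guarantee. I would address this either by (i) averaging the parity-halving reduction over the Pauli gauge freedom already present in the Clifford group so that the structured inputs become a constant-weight slice of the uniform distribution on $\cliff^n$, tracking the conditional success probability on that slice and verifying that the $0.9888$ threshold survives with enough slack; or, failing that, by (ii) re-running the Bene Watts et al.\ switching-lemma / polynomial-approximation argument directly against the affine constraint system of the teleportation circuit, with the symplectic-image equations playing the role that the parity-halving constraint plays in their original proof. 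A secondary but routine matter is to confirm that the $\NC^0$ encoding/decoding maps do not themselves increase the effective circuit size beyond polynomial factors, so that the subexponential bound transfers verbatim.
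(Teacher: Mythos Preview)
Your proposal has a genuine gap, and it is precisely the one you flag under ``distributional matching.'' A reduction that maps parity-halving instances $z$ into a structured sub-family of $\cliff^n$ (say two Clifford choices per coordinate) hits an input set of density $(2/24)^n$ under the uniform measure on $\cliff^n$; even exploiting the full Pauli coset freedom only brings this to $(8/24)^n=(1/3)^n$. An $\AC^0$ circuit that succeeds with probability $0.9888$ on the uniform distribution can therefore fail on \emph{every} instance in the image of your encoding without contradicting the hypothesis, so option~(i) does not work as stated: there is no ``constant-weight slice'' large enough to transfer the $0.9888$ guarantee with any usable slack. Your option~(ii) is not a reduction at all but a promise to redo the switching-lemma analysis from scratch for a different target problem; at that point you have abandoned the reduction strategy, and what remains is only a plan outline, not a proof.

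For comparison, the paper does \emph{not} reduce to parity-halving. It proceeds in two self-contained steps. First, it proves average-case $\NC^0$-hardness of the gate-teleportation relation directly, by a light-cone argument that locates two non-signaling input coordinates and shows (via commuting Paulis through the Clifford chain) that a high-success $\NC^0$ circuit would yield a classical winning strategy for a single-qubit-controlled variant of the magic-square pseudo-telepathy game; this gives a clean $35/36$ (resp.\ $80/81$ in the bit-encoded version) upper bound on $\NC^0$ success, and crucially the argument survives when all but $\Omega(n^{3/4})$ of the Clifford inputs are fixed. Second, it applies the multi-switching lemma (in the multi-output form of Bene Watts et al.) to a putative small $\AC^0$ circuit: a random block-restriction leaves $\Omega(n^{3/4})$ Clifford inputs active and collapses the circuit to $\NC^0$ with high probability, and since fixed bits are chosen uniformly the average-case $\NC^0$ bound on the restricted problem contradicts the assumed $0.9888$ success rate. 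The distributional issue you worry about never arises, because the restriction is \emph{random} rather than structured, so conditioning on the restriction preserves uniformity on the active coordinates.
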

In the terminology of~\cite{wang2021possibilistic}, Theorem~\ref{thm:gateteleportationqadvantage} states that 
the computational problem of ``possibilistically'' simulating the single-qubit gate-teleportation circuit is  infeasible for~$\AC^0$-circuits of polynomial size. It is arguably one of the most basic computational problems that one can think of, and has a considerably simpler structure than e.g., the parity-halving problem~\cite{BeneWattsKothariSchaefferTalAC0}.

This single-qubit gate-teleportation problem shares a few attractive average-case hardness features with prior work such as~\cite{LeGallAveragecase,BeneWattsKothariSchaefferTalAC0}: The bound on the classical circuits considered here involves the average over a fully random input. In contrast, the result of~\cite{BGK,BGKT} (as well as our results for the noise-tolerant setup) require restricting to a subset of inputs corresponding to valid problem instances.

The role of gate teleportation in complexity theory has already been recognized in  seminal work by Terhal and DiVincenzo~\cite{terhal2002adaptive}. We believe  Theorem~\ref{thm:gateteleportationqadvantage} adds to this by establishing a new,  unconditional complexity-theoretic result (stronger than our prior work~\cite{teleppaper}, which only provided a separation for~$\NC^0$). In a sense, our work is complementary to considerations with the well-recognized importance of gate teleportation for quantum fault tolerance:  We ultimately obtain a fault-tolerant quantum advantage by exploiting gate teleportation, not for noise-resilience, but to obtain a simple and natural computational problem.

\section{The magic-square game and its variants}
Our demonstration of quantum advantage is built upon a connection between the single-qubit teleportation circuit and pseudo-telepathy games.

\subsection{Pseudo-telepathy games and the magic square}
Pseudo-telepathy games are non-local games that cannot be won with certainty by classical players with shared randomness, yet can be won with certainty by quantum players sharing entanglement. The celebrated magic-square game~\cite{PERES1990107,mermin} is an example of such a game in the bipartite setting. In this game, Alice's input is an element $\alpha \in\{1,2,3\}$, whereas Bob's input is $\beta \in\{1,2,3\}$. Alice outputs $\left(x_1, x_2, x_3\right) \in\{-1,1\}^3$ and Bob outputs $\left(y_1, y_2, y_3\right) \in\{-1,1\}^3$. The two players win the game if
\begin{align}
\begin{aligned}
& x_1 x_2 x_3=\ \, 1 \\
& y_1 y_2 y_3=-1
\end{aligned}\qquad \text { and } \qquad y_\alpha x_\beta=\ 1\ . \label{eq:winningconditionmagicsquare}
\end{align}

\begin{figure}
\centering
$$
\begin{array}{lll}
\hphantom{-}X \otimes I & \hphantom{-}I \otimes X & \hphantom{-}X \otimes X \\
\hphantom{-}I \otimes Z & \hphantom{-}Z \otimes I & \hphantom{-}Z \otimes Z \\
-X \otimes Z & -Z \otimes X & \hphantom{-}Y \otimes Y
\end{array}
$$
\caption{These Pauli observables form a magic square: the product of any row of observables is $I\otimes I$, while the product of any column of observables is $-I\otimes I$.\label{fig:magicsquarestandard}}
\end{figure}

Two non-communicating classical players can win this game for a uniformly chosen instance $(\alpha, \beta) \in$ $\{1,2,3\}^2$ with probability at most $8 / 9$. In contrast, the game can be won with certainty by players sharing two copies $\Phi_{A_1 B_1} \otimes \Phi_{A_2 B_2}$ of the maximally entangled state $|\Phi\rangle=\frac{1}{\sqrt{2}}(|00\rangle+|11\rangle)$. A corresponding protocol can be obtained from the ``magic square,'' see Fig.~\ref{fig:magicsquarestandard}. On input $\alpha \in$ $\{1,2,3\}$, Alice performs a measurement of the three commuting observables in the $\alpha$-th row of the square, outputting the corresponding eigenvalues. Similarly, on input $\beta \in\{1,2,3\}$, Bob measures the three commuting observables in the $\beta$-th column of the square to obtain $\left(y_1, y_2, y_3\right)$.

\subsection{Pseudo-telepathy games with single-qubit control\label{sec:singlequbitcontrolpseudo}}
 Here we need a special variant of the magic-square game. It has the special feature that there is a quantum strategy that always allows Alice and Bob to choose the correct measurement basis by making a Bell measurement preceded by a single-qubit rotation depending only on their respective input.

In more detail, we consider two sets~$\{U_\alpha\}_{\alpha=1}^3$ and $\{V_\beta\}_{\beta=1}^3$ of single-qubit  unitaries defined as follows. For $P\in\{X,Y,Z\}$, define the Clifford gate
\begin{align}
R_P&=\frac{1}{\sqrt{2}}(I-iP)\ .
\end{align}
(Conjugation by~$R_P$ realizes a counterclockwise rotation of the Bloch sphere around the axis defined by~$P$ with angle~$\pi/2$.) We then define the single-qubit Clifford unitaries
\begin{align}
(U_1,U_2,U_3)&=(R_X, ZR_Y^\dagger,R_Z^\dagger) 
\end{align}
and
\begin{align}
(V_1,V_2,V_3)&=(I,R_ZR_Y^\dagger,R_X^\dagger R_Y)\ .
\end{align}
Consider the following procedure: 
\begin{enumerate}[(i)]
\item
Alice and Bob share two copies~$\Phi_{A_1B_1}\otimes\Phi_{A_2B_2}$ of the maximally entangled state~$\Phi$.
\item
On input~$\alpha\in \{1,2,3\}$, Alice applies the unitary~$U_\alpha$ to system~$A_1$.
\item
Similarly, on input~$\beta\in \{1,2,3\}$, Bob applies the unitary~$V_\beta$ to system~$B_2$.
\item\label{it:bellmeasurementalice}
Alice performs a Bell basis change by applying $(H\otimes I)_{A_1A_2}\CNOT_{A_1A_2}$. She then measures in the computational basis, obtaining an outcome~$(u_1,u_2)\in \{0,1\}^2$.
\item\label{it:bellmeasurementbob}
Bob similarly applies 
$(H\otimes I)_{B_1B_2}\CNOT_{B_1B_2}$ and subsequently measures in the computational basis, obtaining an outcome~$(v_1,v_2)\in \{0,1\}^2$.
\end{enumerate}

\begin{figure}
\centering
\begin{tikzpicture}
\node[] (img1) at (-0.02,0){
\begin{quantikz}
&\lstick{$\alpha$} &\cwbend{1}& & &\\
&\qw & \gate{U_\alpha}&\ctrl{1}&\gate{H}& \meter{}&\rstick[1]{$\hspace{-15pt}=u_1$} \\
&\qw &\qw&\targ{} &\qw &\meter{}&\rstick[1]{$\hspace{-15pt}=u_2$}
\end{quantikz}};
\node[] (img2) at (-0.05,-4){
\begin{quantikz}
&\qw &\qw &\ctrl{1}&\gate{H}& \meter{}&\rstick[1]{$\hspace{-15pt}=v_1$} \\
&\qw & \gate{V_\beta}&\targ{} &\qw &\meter{}&\rstick[1]{$\hspace{-15pt}=v_2$}\\
&\lstick{$\beta$} &\cwbend{-1}& & &
\end{quantikz}};
\draw[thick] (-3,0.25) to (-4.625,-1.375) to (-3,-3);
\draw[thick] (-3,-1) to (-4.625,-2.625) to (-3,-4.25);
\node[] (phi1) at (-5,-1.375){$\Phi$};
\node[] (phi2) at (-5,-2.625){$\Phi$};
\node[] (a1) at (-2.75,0.5){\small $A_1$};
\node[] (a2) at (-2.75,-0.75){\small $A_2$};
\node[] (b1) at (-2.75,-2.75){\small $B_1$};
\node[] (b2) at (-2.75,-4){\small $B_2$};
\end{tikzpicture}
\caption{Quantum strategy for the pseudotelepathy game~$\cG$\label{fig:magicsquarecircuit}.}
\end{figure}
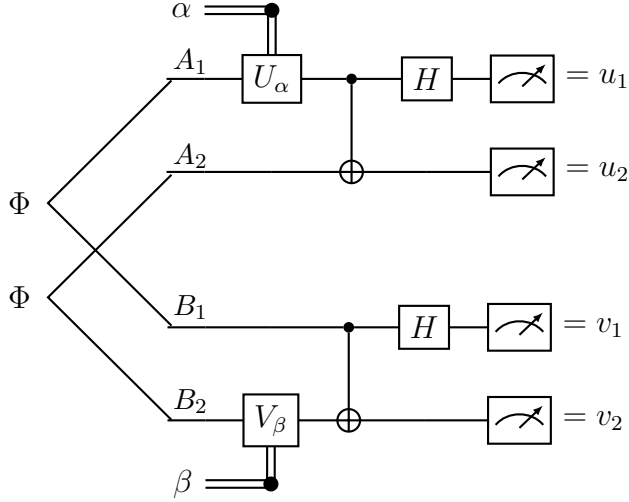

This strategy is illustrated in Fig.~\ref{fig:magicsquarecircuit}.
We argue the following:
\begin{lemma}\label{lem:classicalwinningG}
The described protocol is a quantum strategy of a pseudotelepathy game~$\cG$ which achieves winning probability~$1$. For $(\alpha,\beta)\in \{1,2,3\}^3$ chosen uniformly at random, two non-communicating classical players can win~$\cG$ with probability at most~$8/9$.
\end{lemma}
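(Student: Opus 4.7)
My plan is to split the statement into the two parts (quantum probability $1$ and classical $\leq 8/9$) and reduce the second to the known classical bound for the standard magic square game, once the effective measurements of the quantum strategy have been identified.

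First, I will make the game $\cG$ explicit by deriving the winning conditions from the quantum strategy. Steps~\ref{it:bellmeasurementalice}--\ref{it:bellmeasurementbob} perform a Bell measurement, which corresponds to jointly measuring the commuting observables $X\otimes X$ and $Z\otimes Z$ on $A_1A_2$ (resp.\ $B_1B_2$); conjugating by $U_\alpha$ on $A_1$ turns this into measurements of
\begin{align}
A_1^{(\alpha)}=(U_\alpha^\dagger X U_\alpha)\otimes X\ ,\qquad A_2^{(\alpha)}=(U_\alpha^\dagger Z U_\alpha)\otimes Z\ ,
\end{align}
and analogously for Bob with $V_\beta$ acting on $B_2$ (so the second tensor factor is the conjugated one). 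A direct computation using $R_P^\dagger Q R_P=Q$ when $[P,Q]=0$ and $R_P^\dagger QR_P=\mp i PQ$ when $\{P,Q\}=0$ yields the six conjugated single-qubit observables, from which one reads off that the nine two-qubit observables attached to cells $(\alpha,\beta)$ of the $3\times 3$ table reproduce (up to a relabeling of rows and columns) the magic square of Fig.~\ref{fig:magicsquarestandard}. In particular, the product of Alice's two observables in row~$\alpha$ equals $I\otimes I$, and the product of Bob's two in column~$\beta$ equals $-I\otimes I$; the winning conditions of $\cG$ are then the corresponding $\mathbb{F}_2$-linear constraints on $(u_1,u_2,v_1,v_2)$, together with an equality between the appropriate Alice/Bob outputs at cell~$(\alpha,\beta)$.

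Second, I will verify that the quantum strategy wins with probability~$1$. For this I use the ``ricochet'' identity $(M\otimes I)|\Phi\rangle=(I\otimes M^\top)|\Phi\rangle$, so that Bob's measurements on his half of the two Bell pairs are perfectly anticorrelated/correlated with Alice's measurements of the matched observable. Combined with the fact that the nine observables form a magic square, this shows that each winning condition of $\cG$ holds deterministically on the produced outcomes; hence $\cG$ is won with certainty by the given strategy.

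Third, for the classical bound I follow the standard magic-square argument. Since shared randomness does not help beyond the best deterministic strategy, I consider a deterministic protocol assigning bits $(u_1^{(\alpha)},u_2^{(\alpha)})$ and $(v_1^{(\beta)},v_2^{(\beta)})$ to each of the $3+3$ possible inputs. The nine winning conditions of $\cG$ become nine $\mathbb{F}_2$-linear equations on these $12$ bits; they are the same parity relations as in the magic square, whose inconsistency (the sum of the three ``row'' equations is $0$, while the sum of the three ``column'' equations is $1$) implies that no assignment can satisfy all nine. Hence at most eight out of nine uniformly drawn input pairs are won, giving classical winning probability at most $8/9$.

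The main obstacle is the first step: keeping track of signs and conventions in the computations $U_\alpha^\dagger P U_\alpha$ and $V_\beta^\dagger P V_\beta$ carefully enough to confirm that the six conjugated observables really do fill out a magic-square pattern. Once the correspondence is established, the quantum and classical analyses follow from the Bell-state ricochet identity and the well-known magic-square argument, respectively.
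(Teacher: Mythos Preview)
Your proposal is correct and follows essentially the same route as the paper: both arguments compute the effective two-qubit observables obtained by conjugating the Bell-basis observables $X\otimes X$ and $Z\otimes Z$ by $U_\alpha$ (respectively $V_\beta$), verify that these fill out a magic-square pattern, use the transpose/ricochet identity on~$\Phi$ for the quantum completeness, and reduce the classical bound to the standard magic-square impossibility. The only presentational difference is that the paper packages the reduction via explicit local post-processing maps $f_\alpha,g_\beta:\{0,1\}^2\to\{-1,1\}^3$ (so that~$\cG$ is defined as possibilistic simulation of the circuit and a winning pair for~$\cG$ is converted into a winning pair for the standard magic-square game), whereas you phrase the winning condition of~$\cG$ directly as the induced $\mathbb{F}_2$-parity constraints and argue their inconsistency; these two formulations are equivalent and the paper's modified square (its Fig.~\ref{fig:magicsquare}) is not merely a relabeling of Fig.~\ref{fig:magicsquarestandard}, so be careful with the claim ``up to relabeling,'' but this does not affect the argument.
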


\begin{proof}
We will argue that the outputs of Alice and Bob  produced by the described protocol can be post-processed locally to provide a solution to the standard magic-square problem. That is, there are functions~$\{f_\alpha\}_{\alpha\in\{1,2,3\}}$ with~$f_\alpha:\{0,1\}^2\rightarrow \{-1,1\}^3$ and functions $\{g_\beta\}_{\beta\in \{1,2,3\}}$ where $g_\beta:\{0,1\}^2\rightarrow\{-1,1\}^3$ such that the following holds. Suppose the protocol produces outputs~$(u_1,u_2),(v_1,v_2)$ on  input~$(\alpha,\beta)\in \{1,2,3\}^2$. Set 
\begin{align}
x=f_\alpha(u_1,u_2)\qquad\textrm{ and }\qquad y=g_\beta (v_1,v_2)\ .\label{eq:xyoutputpostprocessed}
\end{align} Then $x=(x_1,x_2,x_3)$ and $y=(y_1,y_2,y_3)$ satisfy the winning condition~\eqref{eq:winningconditionmagicsquare} of the magic-square game.

The fact that any output of the described quantum protocol  yields a solution to the magic-square game by local processing implies that producing such an output with certainty is infeasible  for non-communicating classical players. In other words, we can define a non-local game~$\cG$ where the task is to ``possibilistically'' simulate the circuit in Fig.~\ref{fig:magicsquarecircuit}: Given~$(\alpha,\beta)\in \{0,1\}^3$, the task is to output a pair~$(u,v)\in \{0,1\}^2\times \{0,1\}^2$ that occurs with non-zero probability in the output distribution of the circuit. Because of the existence of the functions~$\{f_\alpha\}_\alpha, \{g_\beta\}_\beta$, i.e., the relation with the magic-square game, the game~$\cG$ can only be won with probability~$8/9$ by classical strategies, yet can (trivially) be won with certainty by the  quantum strategy illustrated in Fig.~\ref{fig:magicsquarecircuit}.

It remains to show the existence of functions~$\{f_\alpha\}_{\alpha\in\{1,2,3\}}$ and $\{g_\beta\}_{\beta\in\{1,2,3\}}$  with the desired property. To do so, we relate the measurement result to the magic square of operators given in Fig.~\ref{fig:magicsquare}.
\begin{figure}\centering
        \begin{tabular}{ c c c }
	        $X_{A_1}X_{A_2}$ & $\mspace{3mu}Y_{A_1}Z_{A_2}$ & $\mspace{2mu}Z_{A_1}Y_{A_2}$\\
        	$Y_{A_1}Y_{A_2}$ & $\mspace{5mu}Z_{A_1}X_{A_2}$ & $X_{A_1}Z_{A_2}$\\
        	$Z_{A_1}Z_{A_2}$ & $X_{A_1}Y_{A_2}$ & $\mspace{4mu}Y_{A_1}X_{A_2}$
        \end{tabular}
        \caption{We consider a modified version of the magic square as given by these operators. Again, the product of any row of observables is~$I$ and the product of any column of observables is~$-I$. We consider operators acting on Alice's system only here.}
        \label{fig:magicsquare}
   \end{figure}

Let us first consider Alice's measurement results $(u_1,u_2)\in \{0,1\}^2$ on input~$\alpha\in\{1,2,3\}$. They correspond to the 
eigenvalues $(-1)^{u_1}$ and $(-1)^{u_2}$ of the observables~$\tilde{U}^\dagger_\alpha (Z\otimes I) \tilde{U}_\alpha$ and 
$\tilde{U}^\dagger_\alpha (I\otimes Z) \tilde{U}_\alpha$, where 
\begin{align}
\tilde{U}_\alpha &=(H\otimes I)\CNOT (U_\alpha\otimes I)\ . 
\end{align} 
Straightforward computation then gives the following table:
\begin{center}
\begin{tabular}{l|c|cc}
$\alpha$ & $U_\alpha$ & $(-1)^{u_1}$ eigenvalue of & $(-1)^{u_2}$ eigenvalue of\\
\hline
1 & $R_X$ & $X_{A_1}X_{A_2}$ & $Y_{A_1}Z_{A_2}$\\
2 & $ZR_Y^\dagger$ & $Z_{A_1}X_{A_2}$ & $X_{A_1}Z_{A_2}$\\
3 & $R_Z^\dagger$ & $Y_{A_1}X_{A_2}$ & $Z_{A_1}Z_{A_2}$
\end{tabular}
\end{center}
Observe that for each~$\alpha\in \{1,2,3\}$, the measurement outcomes~$(u_1,u_2)$ directly determine the eigenvalues of two observables in row~$\alpha$ of the table given in Fig.~\ref{fig:magicsquare}. The eigenvalue of the third observable is obtained by taking the product~$(-1)^{u_1+u_2}$  since the product of these observables is~$I$.
In other words, the list of 
eigenvalues is given by
\begin{center}
\begin{tabular}{c|ccc|c}
$\alpha=1$ &  $X_{A_1}X_{A_2}$ & $Y_{A_1}Z_{A_2}$ & $Z_{A_1}Y_{A_2}$ & $\left((-1)^{u_1},(-1)^{u_2},(-1)^{u_1+u_2}\right)=:f_1(u_1,u_2)$\\
\hline
$\alpha=2$ &        	$Y_{A_1}Y_{A_2}$ & $\mspace{5mu}Z_{A_1}X_{A_2}$ & $X_{A_1}Z_{A_2}$ &$\left((-1)^{u_1+u_2},(-1)^{u_1},(-1)^{u_2}\right)=:f_2(u_1,u_2)$\\
\hline
$\alpha=3$ & $Z_{A_1}Z_{A_2}$ & $X_{A_1}Y_{A_2}$ & $\mspace{4mu}Y_{A_1}X_{A_2}$        &$\left((-1)^{u_2},(-1)^{u_1+u_2},(-1)^{u_1}\right)=:f_3(u_1,u_2)$
 \end{tabular}
\end{center}
Here we defined the functions~$\{f_\alpha\}_{\alpha\in \{1,2,3\}}$ in such a way that $f_\alpha(u_1,u_2)$ is the triple of eigenvalues associated with the row~$\alpha$ in Fig.~\ref{fig:magicsquare}.

Similarly, the measurement results $(v_1,v_2)\in \{0,1\}^2$ of Bob correspond to 
eigenvalues $(-1)^{v_1}$ and $(-1)^{v_2}$ of the observables~$\tilde{V}^\dagger_\beta (Z\otimes I) \tilde{V}_\beta$ and 
$\tilde{V}^\dagger_\beta (I\otimes Z) \tilde{V}_\beta$, where 
\begin{align}
\tilde{V}_\beta &=(H\otimes I)\CNOT (I\otimes V_\beta)\ ,
\end{align}
that is,
\begin{center}
\begin{tabular}{l|c|cc}
$\beta$ & $V_\beta$ & $(-1)^{v_1}$ eigenvalue of & $(-1)^{v_2}$ eigenvalue of\\
\hline
1 & I & $X_{B_1}X_{B_2}$ & $Z_{B_1}Z_{B_2}$\\
2 & $R_ZR_Y^\dagger$ & $-X_{B_1}Y_{B_2}$ & $Z_{B_1}X_{B_2}$\\
3 & $R_X^\dagger R_Y$ & $X_{B_1}Z_{B_2}$ & $-Z_{B_1}Y_{B_2}$
\end{tabular}
\end{center}
Similarly as before, we can define functions~$\{g_\beta\}_{\beta\in\{1,2,3\}}$ such that~$g_\beta(v_1,v_2)$ corresponds to a triple of eigenvalues of commuting operators whose product is~$-I$: We have 
\begin{center}
\begin{tabular}{c|ccc|c}
$\beta=1$ &  $X_{B_1}X_{B_2}$ & $Y_{B_1}Y_{B_2}$ & $Z_{B_1}Z_{B_2}$ & $\left((-1)^{v_1},-(-1)^{v_1+v_2},(-1)^{v_2}\right)=:g_1(v_1,v_2)$\\
\hline
$\beta=2$ &        	$-Y_{B_1}Z_{B_2}$ & $\mspace{5mu}Z_{B_1}X_{B_2}$ & $-X_{B_1}Y_{B_2}$ &$\left(-(-1)^{v_1+v_2},(-1)^{v_1},(-1)^{v_2}\right)=:g_2(v_1,v_2)$\\
\hline
$\beta=3$ & $-Z_{B_1}Y_{B_2}$ & $X_{B_1}Z_{B_2}$ & $\mspace{4mu}-Y_{B_1}X_{B_2}$        &$\left((-1)^{v_2},(-1)^{v_2},-(-1)^{v_1+v_2}\right)=:g_3(v_1,v_2)$
 \end{tabular}
\end{center}
It is clear from these definitions that~$(x,y)$ computed according to~\eqref{eq:xyoutputpostprocessed} satisfy 
the conditions $x_1x_2x_3=1$ and $y_1y_2y_3=-1$.

To check the remaining condition~$y_\alpha x_\beta=1$, observe that for any~$\beta\in \{1,2,3\}$, the three relevant observables corresponds to the $\beta$-th column of the magic square in Fig.~\ref{fig:magicsquare} up to signs depending on the number (parity) of Pauli-$Y$-operators, and the fact that these act on $B_1B_2$ instead of~$A_1A_2$. In our quantum strategy, the state being measured is~$\Phi_{A_1B_1}\otimes\Phi_{A_2B_2}$. Because the maximally entangled state~$\Phi$ satisfies~
\begin{align}
(\Gamma\otimes I)\Phi=(I\otimes \Gamma^T)\Phi\qquad\textrm{ for any }\qquad \Gamma\in\mathsf{Mat}_{2\times 2}(\mathbb{C})\  ,\label{eq:maximallyentangledstatetransofrmation}
\end{align} and $Y^T=-Y$ is antisymmetric (whereas $X^T=X$ and $Z^T=Z$ are symmetric), it follows that the eigenvalues computed from~$(v_1,v_2)$ using~$g_\beta$ actually correspond to the operators in the~$\beta$-th column in Fig.~\ref{fig:magicsquare} (i.e., the operators act on~$A_1A_2$ and there are no signs). The claim follows from this.
\end{proof}
Let us state the definition of the game~$\cG$ more explicitly. For inputs~$(\alpha,\beta)\in \{1,2,3\}^2$, the task is to output~$((u_1,u_2),(v_1,v_2))\in \{0,1\}^2\times\{0,1\}^2$ such that the outcome probability 
\begin{align}
p_{\alpha,\beta}(u_1,u_2,v_1,v_2)&=
\left|(\langle u_1,u_2|\otimes \langle v_1,v_2|) (\tilde{U}^{A_1A_2}_\alpha\otimes \tilde{V}^{B_1B_2}_\beta) (|\Phi_{A_1B_1}\rangle\otimes |\Phi_{A_2B_2}\rangle)\right|^2\ \label{eq:outcomeprobabilityu1u2v1v2}
\end{align}
is non-zero.

\begin{figure}
\centering
\begin{tikzpicture}

\node[] (img1) at (0,0){
\begin{quantikz}
&\ctrl{1}&\gate{H}& \meter{}&\rstick[1]{$\hspace{-15pt}=s_1$} \\
&\targ{} &\qw &\meter{}&\rstick[1]{$\hspace{-15pt}=s_2$}
\end{quantikz}};

\node at (3.25,0) {$\equiv$};

\node[] (img2) at (6,0.15){
\begin{quantikz}
&\gate{X^{s_2}Z^{s_1}}& \qw \makereversebit[45][black]{$\Phi$} \\[0.5cm]
 &\qw & \qw
\end{quantikz}};
\end{tikzpicture}
\caption{Bell measurement: Instead of labeling outcomes by~$(s_1,s_2)\in \{0,1\}^2$, we may equivalently use $P\in \{I,X,Y,Z\}$. The outcome~$(s_1,s_2)$ corresponds to the Pauli $X^{s_2}Z^{s_1}$. Overall signs can be ignored.
\label{fig:bellmeasurement}}
\end{figure}
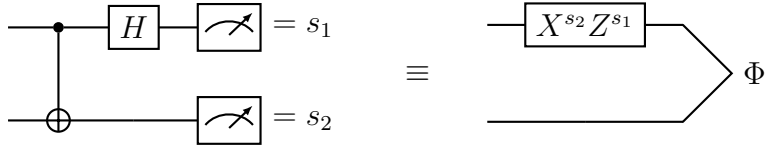

In the following, we will rephrase the game~$\cG$ slightly in order to simplify the notation. To this end, observe that the steps~\eqref{it:bellmeasurementalice} and~\eqref{it:bellmeasurementbob} realize Bell measurements: We have
\begin{align}
(\bra{s_1}\otimes\bra{s_2})\CNOT (H\otimes I)&\propto \bra{\Phi}(X^{s_2} Z^{s_1}\otimes I)\qquad\textrm{ for all }\qquad (s_1,s_2)\in \{0,1\}^2\ ,
\end{align}
see Fig.~\ref{fig:bellmeasurement}. That is,  every outcome~$(s_1,s_2)$ is uniquely associated with a Pauli operator~$P\propto X^{s_2}Z^{s_1}$. The overall phase of $P$ is irrelevant here and we may without loss of generality assume that~$P\in\{I,X,Y,Z\}=:\pauli$. 
In other words, we are performing a von Neumann measurement with operators~$\{\proj{\Phi_P}\}_{P\in\pauli}$, where $\Phi_P:=(P\otimes I)\Phi$ for $P\in \pauli$ defines the Bell basis. 

In the following, we directly use the set~$\pauli$ to label measurement outcomes instead of using two bits for each Pauli operator. For the game~$\cG$, this means that instead of asking for an output of the form~$((u_1,u_2),(v_1,v_2))\in \{0,1\}^2\times\{0,1\}^2$, we seek~$(P,Q)\in \pauli^2$ (where $P$ is associated with $(u_1,u_2)$ and $Q$ with~$(v_1,v_2)$, respectively). The corresponding output probability~\eqref{eq:outcomeprobabilityu1u2v1v2} then takes the form
\begin{align}
p_{\alpha,\beta}(P,Q)&=\left|(\langle\Phi_P|_{A_1A_2} \langle \Phi_Q|_{B_1B_2})
(U_\alpha^{A_1}\otimes I_{A_2}\otimes I_{B_1}\otimes V_\beta^{B_2}) \left(\ket{\Phi}_{A_1B_1}\ket{\Phi}_{A_2B_2}\right)\right|^2
\end{align}
for $(P,Q)\in \pauli^2$.  A diagrammatic representation of the probability~$p_{\alpha,\beta}(P,Q)$ is given in Fig.~\ref{fig:outcomeprobability}.
It shows that the probability of interest can be written as 
\begin{align}
p_{\alpha,\beta}(P,Q)=\frac{1}{16}\left|\tr(U_\alpha P V_\beta Q)\right|^2\label{eq:convenientexpressionalphabeta}
\end{align}
Formally, this follows from the substitution rule~\eqref{eq:maximallyentangledstatetransofrmation} and the fact that $P^T\in \{\pm P\}$ for every $P\in \pauli$.
\begin{center}
\begin{figure}
\centering
\includegraphics[width=0.9\textwidth]{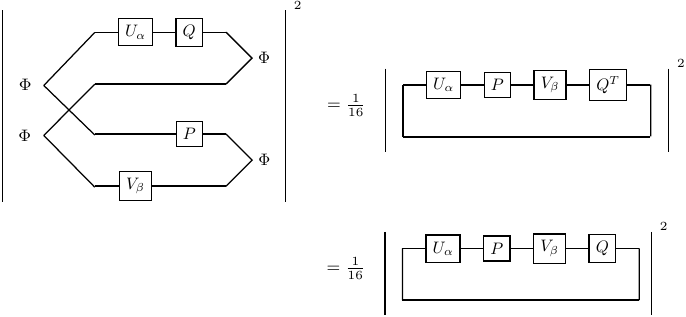}
\caption{Probability~$p_{\alpha,\beta}(P,Q)$
 of the outcome~$(P,Q)\in\pauli^2$ for the circuit~\eqref{fig:magicsquarecircuit} when expressing Bell measurements in terms of Pauli outcomes.
\label{fig:outcomeprobability}}
\end{figure}
\end{center}

In the following, let~$\cliff$ denote the $1$-qubit Clifford group  modulo global phases. Such phases are irrelevant in our considerations. (As for Paulis, we often represent elements of~$\cliff$ by bitstrings but leave this implicit to avoid clutter. Since~$|\cliff|=24$, $5$~bits are sufficient.)
Then we can formulate the following corollary, which will be an essential tool in our analysis.
\begin{corollary}\label{cor:mainMS}
Let $F:\{1,2,3\}\rightarrow\pauli$ and $G:\{1,2,3\}\rightarrow\pauli$ be arbitrary functions.
Then there exists at least one pair~$(\alpha,\beta)\in \{1,2,3\}^2$ such that
\begin{align}
\tr(U_\alpha F(\alpha)V_\beta G(\beta))&=0\ .
\end{align}
\end{corollary}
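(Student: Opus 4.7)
The plan is to deduce the corollary directly from Lemma~\ref{lem:classicalwinningG} together with the convenient closed form~\eqref{eq:convenientexpressionalphabeta} for the outcome probabilities. The key observation is that any pair of functions $(F,G)$ with $F,G:\{1,2,3\}\rightarrow\pauli$ defines a \emph{deterministic} classical strategy for the pseudo-telepathy game~$\cG$: on input $\alpha$, Alice outputs $F(\alpha)\in\pauli$, and on input~$\beta$, Bob outputs $G(\beta)\in\pauli$.

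\textbf{Step 1.} Reformulate winning in terms of the trace. By the Pauli-labeling convention for outcomes established above, the pair $(F(\alpha),G(\beta))$ is a valid winning output for the instance $(\alpha,\beta)$ of~$\cG$ if and only if the outcome probability $p_{\alpha,\beta}(F(\alpha),G(\beta))$ in the quantum protocol is strictly positive. Invoking~\eqref{eq:convenientexpressionalphabeta}, this is equivalent to
\begin{align}
\tr\bigl(U_\alpha F(\alpha) V_\beta G(\beta)\bigr)\neq 0\ .
\end{align}

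\textbf{Step 2.} Apply the classical upper bound. By Lemma~\ref{lem:classicalwinningG}, every classical (in particular, every deterministic) strategy wins at most~$8/9$ of the nine possible instances $(\alpha,\beta)\in\{1,2,3\}^2$ (i.e.\ at most eight of them). Consequently there exists at least one pair~$(\alpha,\beta)$ on which the deterministic strategy defined by $(F,G)$ fails, which by Step~1 is precisely the statement that $\tr(U_\alpha F(\alpha) V_\beta G(\beta))=0$. This proves the corollary.

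There is no substantial obstacle: the only subtlety is making sure that feeding arbitrary functions $F,G:\{1,2,3\}\rightarrow\pauli$ into the game really constitutes a legitimate deterministic strategy, and that the ``possibilistic'' winning condition of~$\cG$ (outputting an outcome with non-zero probability) matches the non-vanishing of the trace via~\eqref{eq:convenientexpressionalphabeta}. Both points are immediate from the definitions, so the corollary is simply the contrapositive form of the classical hardness statement.
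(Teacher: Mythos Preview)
Your proof is correct and takes essentially the same approach as the paper: both use Lemma~\ref{lem:classicalwinningG} together with the trace formula~\eqref{eq:convenientexpressionalphabeta} to observe that the pair $(F,G)$ defines a deterministic classical strategy for~$\cG$, which cannot win on all nine instances. The only cosmetic difference is that the paper phrases this as a proof by contradiction whereas you give the direct contrapositive.
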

\begin{proof}
For the sake of contradiction, suppose that
\begin{align}
\tr(U_\alpha F(\alpha)V_\beta G(\beta))&\neq 0\qquad\textrm{ for all }\qquad (\alpha,\beta)\in \{1,2,3\}^2\ .\label{eq:contradictionassumption}
\end{align}
Consider the classical, local (non-communicating) strategy where Alice, on input~$\alpha$, outputs~$F(\alpha)$, and Bob, on input~$\beta$, outputs~$G(\beta)$. By assumption~\eqref{eq:contradictionassumption} and expression~\eqref{eq:convenientexpressionalphabeta}, we then get $p_{\alpha,\beta}(F(\alpha),G(\beta))=0$ for all $(\alpha,\beta)\in \{1,2,3\}^2$. That is, this strategy succeeds with certainty, contradicting Lemma~\ref{lem:classicalwinningG}. 
\end{proof}

We generalize the previous corollary to uniformly random Clifford inputs.
\begin{lemma}\label{lem:mainCliff}
Let $F':\cliff \rightarrow\pauli$ and $G':\cliff \rightarrow\pauli$ be arbitrary functions. Then
\begin{align}
\Pr_{(U,V)\in \cliff^2}\left[\tr(U F'(U)V G'(V))=0\right]\ge \frac{1}{36}  \ .\label{eq:lowerboundclifftwofpgp}
\end{align}
\end{lemma}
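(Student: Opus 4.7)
The plan is to reduce Lemma~\ref{lem:mainCliff} to Corollary~\ref{cor:mainMS} via a Pauli-shift averaging argument. Given arbitrary $F', G' \colon \cliff \to \pauli$, I will exhibit at least $|\pauli|^2 = 16$ distinct ``bad'' pairs $(U, V) \in \cliff^2$ --- those for which $\tr(U F'(U) V G'(V)) = 0$ --- yielding the claimed lower bound $16/|\cliff|^2 = 16/576 = 1/36$.

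The central step is an algebraic identity: for any $R, S \in \pauli$ and any $(\alpha, \beta) \in \{1,2,3\}^2$,
\begin{align*}
\tr\bigl(R U_\alpha\, F'(R U_\alpha)\, S V_\beta\, G'(S V_\beta)\bigr) = \pm \tr\bigl(U_\alpha\, P_1(R, \alpha)\, V_\beta\, P_2(S, \beta)\bigr),
\end{align*}
where $P_1(R, \alpha) := (U_\alpha^{-1} R U_\alpha) \cdot F'(R U_\alpha)$ and $P_2(S, \beta) := (V_\beta^{-1} S V_\beta) \cdot G'(S V_\beta)$ both lie in $\pauli$ up to phase. To derive the identity, I push $R$ past $U_\alpha$ via $R U_\alpha = U_\alpha (U_\alpha^{-1} R U_\alpha)$ --- whose second factor is a Pauli since $\pauli$ is normal in $\cliff$ --- and absorb it into the adjacent $F'$-Pauli; the treatment of $S$ past $V_\beta$ is analogous. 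Because multiplicative phases do not affect whether a trace vanishes, a zero on the left is equivalent to a zero on the right.

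For each fixed $(R, S) \in \pauli^2$, defining $F_R(\alpha) := P_1(R, \alpha)$ and $G_S(\beta) := P_2(S, \beta)$ yields $\pauli$-valued functions on $\{1, 2, 3\}$. Corollary~\ref{cor:mainMS} applied to $F_R, G_S$ produces at least one $(\alpha^*, \beta^*) \in \{1, 2, 3\}^2$ with $\tr(U_{\alpha^*} F_R(\alpha^*) V_{\beta^*} G_S(\beta^*)) = 0$, hence --- via the identity above --- a bad pair $(R U_{\alpha^*}, S V_{\beta^*}) \in \cliff^2$. To ensure that the 16 bad pairs (one per $(R, S) \in \pauli^2$) are distinct, I verify injectivity of the map $\pauli^2 \times \{1, 2, 3\}^2 \to \cliff^2$ defined by $(R, S, \alpha, \beta) \mapsto (R U_\alpha, S V_\beta)$. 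Using that $\pauli$ is normal in $\cliff$ with quotient $\cliff/\pauli \cong S_3$ acting by permutations on the Bloch axes $\{X, Y, Z\}$, injectivity reduces to checking that $U_1 = R_X$, $U_2 = Z R_Y^\dagger$, $U_3 = R_Z^\dagger$ lie in three distinct $\pauli$-cosets (they induce the three transpositions $(YZ), (XZ), (XY)$, respectively) and similarly for $V_1 = I$, $V_2 = R_Z R_Y^\dagger$, $V_3 = R_X^\dagger R_Y$ (inducing the identity and the two 3-cycles).

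I expect the principal technical subtlety to be the algebraic bookkeeping in the first step: carefully separating Pauli from Clifford factors and carrying the signs and $i$-phases through the manipulations. While none of these phases affect whether a trace vanishes, they must be accounted for correctly in order to extract genuine $\pauli$-valued functions $F_R, G_S$ suitable for Corollary~\ref{cor:mainMS}. The remaining ingredients --- the finite coset computation, the resulting injectivity, and the counting $16/576 = 1/36$ --- are routine.
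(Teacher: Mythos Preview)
Your proposal is correct and follows essentially the same approach as the paper's proof: a Pauli-shift argument that, for each $(R,S)\in\pauli^2$, invokes Corollary~\ref{cor:mainMS} to produce a bad pair, followed by a coset count showing the resulting $16$ pairs are distinct. The only cosmetic difference is that you multiply Paulis on the left ($(RU_\alpha,SV_\beta)$) whereas the paper multiplies on the right ($(U_\alpha P,V_\beta Q)$); you are also more explicit than the paper about the injectivity verification via the $S_3$-quotient, which the paper dismisses as ``easy to check.''
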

\begin{proof}
Define the set 
\begin{align}
\Gamma:=\left\{(U,V)\in\cliff^2\ |\ \tr\left(UF'(U)VG'(V)\right)=0\right\}\ .
\end{align}
By Corollary~\ref{cor:mainMS}, there is at least one pair~$(\alpha,\beta)\in \{1,2,3\}^2$ such that $(U_\alpha,V_\beta)\in\Gamma$, i.e., $|\Gamma|\geq 1$. 
We can improve this lower bound as follows.

Let $(P,Q)\in \pauli^2$ be arbitrary. Define
$F,G:\{1,2,3\}\rightarrow\pauli$ by setting
\begin{align}
F(\alpha):=PF'(U_\alpha P)\qquad\textrm{ and }\qquad
G(\beta):=QG'(V_\beta Q)
\end{align}
for $(\alpha,\beta)\in \{1,2,3\}^2$. By Corollary~\ref{cor:mainMS}, it follows that there exist~$(\alpha,\beta)\in \{1,2,3\}^2$ such that
\begin{align}
\tr(U_\alpha F(\alpha) V_\beta G(\beta))=\tr(U_\alpha P F'(U_\alpha P)V_\beta QG'(V_\beta Q))=0\ .
\end{align}
This shows that $(U_\alpha P,V_\beta Q)\in \Gamma$ for any pair $(P,Q)\in \pauli^2$. It is easy to check that
\begin{align}
|\left\{(U_\alpha P,V_\beta Q)\ |\ (P,Q)\in\pauli^2\right\}|&=16\ ,
\end{align} thus
\begin{align}
    |\Gamma| &\geq 16\ .
\end{align}
The claim~\eqref{eq:lowerboundclifftwofpgp} now follows because
\begin{align}
    \Pr_{(U,V)\in\cliff^2}\left[(U,V)\in\Gamma\right] &=\frac{|\Gamma|}{24^2}\geq \frac{16}{24^2}\ .
\end{align}
\end{proof}

\section{Single-qubit gate teleportation as a  relation problem\label{sec:singlequbitteleportationrelation}}
The computational problem we use to establish complexity-theoretic separations is a relation (also called search) problem. 
Such a problem is defined by a subset~$R\subset \cI\times\cO$ of input/output pairs.  For a given input~$i\in\cI$, the task is to output $o\in\cO$ such that $(i,o)\in R$. We call any $o\in\cO$ with this property a valid solution to the input instance~$i\in\cI$.

In our case, the relation is  indexed by its size~$n\in\mathbb{N}$. We denote it by~$R_{\Utelep_n}$ for reasons that will be clarified below. The associated input set is the set
\begin{align}
\cliff^n &=\left\{(C_0,\ldots,C_{n-1})\ |\ C_j\in \cliff\textrm{ for any }j\in \mathbb{Z}_n\right\}
\end{align} 
of $n$-tuples of single-qubit Clifford group elements. Here and below we use indices $j\in \mathbb{Z}_n$ instead of $[n]=\{1,\ldots,n\}$ to emphasize a certain cyclic symmetry of the problem. Similarly, the associated output set of the problem is
the set of $n$-tuples
\begin{align}
\pauli^n &=\left\{(P_0,\ldots,P_{n-1})\ |\ P_j\in \pauli\textrm{ for any }j\in \mathbb{Z}_n\right\}
\end{align}
of single-qubit Pauli elements.  The relation $R_{\Utelep_n}\subset\cliff^n\times\pauli^n$ is now defined as follows: We have 
\begin{align}
\left((C_0,\ldots,C_{n-1}),(P_0,\ldots,P_{n-1})\right)\in R_{\Utelep_n}\qquad\textrm{ if and only if }\qquad
\tr(P_{n-1}C_{n-1}\cdots P_0C_0)\neq 0\ .
\end{align}

Equivalently, we may say that the computational problem is to output, for a given $n$-tuple~$(C_0,\ldots,C_{n-1})$ of single-qubit Clifford unitaries, a sequence~$(P_{0},\ldots,P_{n-1})$ of Pauli operators such that the quantity
\begin{align}
p(P_0,\ldots,P_{n-1}|C_0,\ldots,C_{n-1})&:=4^{-n}\left|\tr(P_{n-1}C_{n-1}\cdots P_0C_0)\right|^2\label{eq:outcomeprobabilitygateteleportation} 
\end{align}
is non-zero. According to this reformulation and simple algebra, the computational problem given by the relation~$R_{\Utelep_n}$ is that of ``possibilistically'' simulating the output distribution of the gate-teleportation circuit~$\Utelep_n$ given in Fig.~\ref{fig:gateteleportationcircuit}. Indeed, Eq.~\eqref{eq:outcomeprobabilitygateteleportation} is the outcome distribution of~$\Utelep_n$ when the state prepared by~$\Utelep_n$ with input~$(C_0,\ldots,C_{n-1})$ is measured using Bell measurements. This means that~$R_{\Utelep_n}$ is the problem of producing, for a given input~$(C_0,\ldots,C_{n-1})$, an output~$(P_0,\ldots,P_{n-1})$ that occurs with non-zero probability in the output distribution of the circuit~$\Utelep_n$.  For this reason, we will refer to the problem defined by~$R_{\Utelep_n}$ as the possibilistic simulation of the single-qubit gate-teleportation circuit or single-qubit gate-teleportation problem. We previously introduced this computational problem in Ref.~\cite{teleppaper}.

Because the circuit~$\Utelep_n$ is a constant-depth quantum circuit with nearest-neighbor gates arranged on a circle, we can observe the following:
\begin{theorem}[Relation problem $R_{\Utelep_n}$ is contained in $1D$-local $\QNC^0$]\label{thm:mainquantumcircuit}
There is a classically controlled Clifford circuit~$\Utelep_n$ which is $1D$-local and of constant depth, and which solves the relation problem~$R_{\Utelep_n}$ with certainty for any input~$C=(C_0,\ldots,C_{n-1})\in\cliff^n$, i.e., it produces $P=(P_0,\ldots,P_{n-1})\in\pauli^n$ such that $(C,P)\in R_{\Utelep_n}$.
\end{theorem}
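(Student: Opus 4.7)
The plan is to construct the circuit $\Utelep_n$ explicitly. I arrange $2n$ qubits on a $1D$ ring, labelled as pairs $(a_j, b_j)$ for $j \in \mathbb{Z}_n$, ordered so that both the pairs $(a_j, b_j)$ and the pairs $(b_j, a_{j+1 \bmod n})$ consist of nearest neighbors. The circuit runs three parallel stages: (i) prepare a Bell state $|\Phi\rangle$ on each pair $(a_j, b_j)$; (ii) apply the classically-controlled single-qubit Clifford $C_j$ to qubit $a_j$ for each $j$; (iii) perform a Bell measurement on each pair $(b_j, a_{j+1\bmod n})$, recording the outcome as a Pauli $P_j\in\pauli$ via the correspondence of Fig.~\ref{fig:bellmeasurement}. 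The circuit outputs $(P_0,\ldots,P_{n-1})$. Each stage is nearest-neighbor in~$1D$ and of constant depth (Bell-pair preparation is depth~$O(1)$, the Cliffords are depth~$1$, and a Bell measurement has depth~$O(1)$), so $\Utelep_n$ is $1D$-local and of constant depth independent of~$n$.

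To verify correctness, I would compute the joint output distribution via tensor-network contraction. Using the identity $\langle\Phi_{P_j}|_{b_j a_{j+1}} = \langle\Phi|_{b_j a_{j+1}} (P_j^\dagger\otimes I)$ together with the teleportation identity $(M\otimes I)|\Phi\rangle = (I\otimes M^T)|\Phi\rangle$ applied iteratively, each Bell measurement propagates the accumulated operator along the ring, so that the $2n$ wires collapse into a single trace. The result is
\begin{align}
p(P_0,\ldots,P_{n-1}\,|\,C_0,\ldots,C_{n-1}) \;=\; 4^{-n}\bigl|\tr\!\left(P_{n-1}C_{n-1}\cdots P_0C_0\right)\bigr|^2,
\end{align}
matching Eq.~\eqref{eq:outcomeprobabilitygateteleportation}. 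Any tuple $(P_0,\ldots,P_{n-1})$ produced by $\Utelep_n$ thus occurs with non-zero probability, so $\tr(P_{n-1}C_{n-1}\cdots P_0C_0)\neq 0$, and $(C,P)\in R_{\Utelep_n}$ by definition.

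The main place where care is needed is the tensor-network bookkeeping: transposes arising from the $|\Phi\rangle$-identity and the direction of contraction around the ring must be tracked so that the product inside the trace exactly matches the one in the definition of~$R_{\Utelep_n}$. These details are benign because $P^T\in\{\pm P\}$ for every $P\in\pauli$ (so any overall signs drop out of $|\tr(\cdot)|^2$) and because the trace is cyclically invariant (so the choice of starting index is immaterial). Apart from this bookkeeping, the remaining calculation is routine.
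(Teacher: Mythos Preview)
Your proposal is correct and takes essentially the same approach as the paper: the circuit you describe is exactly the one depicted in Fig.~\ref{fig:gateteleportationcircuit}, and your derivation of the output distribution via tensor-network contraction and the identity $(M\otimes I)\ket{\Phi}=(I\otimes M^T)\ket{\Phi}$ is the ``simple algebra'' the paper alludes to but does not spell out. In fact you provide more detail than the paper, which simply asserts that Eq.~\eqref{eq:outcomeprobabilitygateteleportation} is the outcome distribution of the circuit in the figure and treats the theorem as an immediate observation.
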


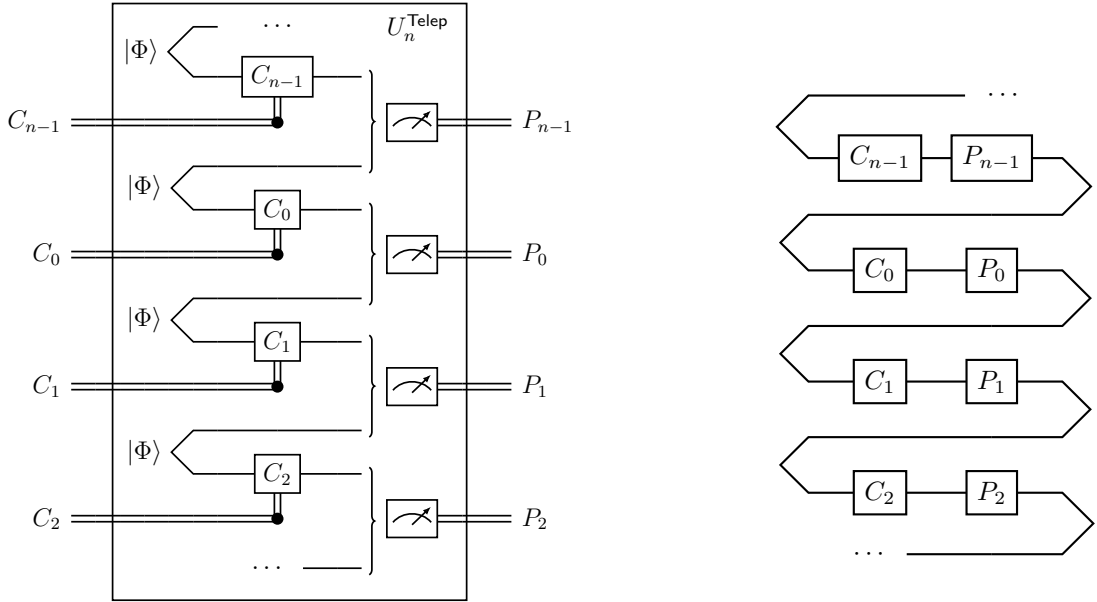
\begin{figure}[h]
\centering
\begin{subfigure}[t]{0.48\textwidth}
         \centering
         \resizebox{1.05\textwidth}{!}{
\begin{tikzpicture}[]
\centering
\draw[black,thick] (-2.5,-4.75) rectangle (3.3,5.1);
\node[] (txt) at (2.5,4.7){$\Utelep_n$};
\node[] (img1) at (0,0){
\begin{quantikz}[row sep=0.4cm,column sep=0.4cm,transparent]
				&&&&& \makeebit[-45][black]{$\ket{\Phi}$} & \qw &  \mathbf{\cdots} \\
				&&&&&                                                 & \qw & \gate{C_{n-1}} &\qw & \qw\rstick[wires=3]{} \\[-0.3cm]
				\lstick{\qquad $C_{n-1}$}&\cw&\cw&\cw&\cw&\cw & \cw &\cwbend{-1} &                           & &|[meter]| &\cw&\cw&\cw\rstick{$P_{n-1}$}\\
				&&&&& \makeebit[-45][black]{$\ket{\Phi}$} & \qw & \qw  &   \qw        &\qw \\
				&&&&&                                                 & \qw & \gate{C_0} &\qw & \qw\rstick[wires=3]{} \\[-0.3cm]
				\lstick{\qquad $C_0$}&\cw&\cw&\cw&\cw&\cw & \cw &\cwbend{-1} &                           & &|[meter]| &\cw&\cw&\cw\rstick{$P_0$}\\
				&&&&& \makeebit[-45][black]{$\ket{\Phi}$} & \qw & \qw  &   \qw        &\qw \\
				&&&&&                                                 & \qw & \gate{C_1}  &\qw & \qw\rstick[wires=3]{} \\[-0.3cm]
				\lstick{\qquad $C_1$}&\cw&\cw&\cw& \cw&\cw&\cw &\cwbend{-1} &                           & &|[meter]| &\cw&\cw&\cw\rstick{$P_1$}\\
				&&&&& \makeebit[-45][black]{$\ket{\Phi}$} & \qw & \qw  &   \qw        &\qw \\
				&&&&&                                                 & \qw & \gate{C_2}  &\qw & \qw\rstick[wires=3]{} \\[-0.3cm]
				\lstick{\qquad $C_2$}&\cw&\cw&\cw& \cw&\cw&\cw &\cwbend{-1} &                           & &|[meter]| &\cw&\cw&\cw\rstick{$P_2$}\\
				&&&& & & &\mathbf{\cdots~~} & \qw       &\qw \\
			\end{quantikz}};
\end{tikzpicture}}
\caption{The gate-teleportation circuit~$\Utelep_n$ is a classically controlled Clifford circuit. It takes as input $n$~Clifford group elements~$(C_0,\ldots,C_{n-1})$. Each of these is  applied to half of maximally entangled state~$\Phi$ (i.e., this constitutes a classically controlled single-qubit Clifford gate.)
If Bell measurements are performed on pairs of qubits (shifted by one), the output~$(P_0,\ldots,P_{n-1})$ is an $n$-tuple of Paulis jointly distributed according to Eq.~\eqref{eq:outcomeprobabilitygateteleportation}. \label{fig:gateteleportationcircuit}}
\end{subfigure}\hfill
\begin{subfigure}[t]{0.48\textwidth}
         \centering
\begin{tikzpicture}[]
\node[] (img1) at (0,0){
  \begin{quantikz}[row sep=0.45cm,column sep=0.4cm,transparent,font=\footnotesize]
          \makeebit[-45][black]{} & \qw & \qw \mathbf{~~\cdots}&  \\
          \qw & \gate{C_{n-1}} & \gate{P_{n-1}}  &\qw \makeebit[45][black]{}\\
	 \makeebit[-45][black]{} & \qw & \qw & \qw  \\
	 \qw & \gate{C_0} & \gate{P_0}  &\qw \makeebit[45][black]{}\\
	 \makeebit[-45][black]{} & \qw & \qw & \qw  \\
	 \qw & \gate{C_1} & \gate{P_1}  &\qw \makeebit[45][black]{}\\
	 \makeebit[-45][black]{} & \qw & \qw & \qw  \\
	 \qw & \gate{C_2} & \gate{P_2}  &\qw \makeebit[45][black]{}\\
     &\mathbf{\cdots~~} & \qw  & \qw \\ 
			\end{quantikz}
   };
\end{tikzpicture}
\caption{Tensor-network representation of the gate-teleportation circuit~$\Utelep_n$. The Clifford gates $\{C_i\}$ represent the input to the circuit, while the Pauli gates $\{P_i\}$ represent the output.}
\end{subfigure}
\caption{Single-qubit gate teleportation circuit.}
\end{figure}

\section{Quantum advantage against $\NC^0$\label{sec:NC0}}
In the following, we consider the hardness of the possibilistically simulating the gate-teleportation circuit by classical circuits.
Without loss of generality, we can restrict to deterministic circuits. A classical circuit~$\cC_n$ for the problem $\telepR$ can then be seen as a function~$\cC_n:\cliff^n\rightarrow\pauli^n$. 

\subsection{Average-case hardness of~$\telepR$ for $\NC^0$-circuit}
In this section, we show a lower bound on the circuit depth of any classical circuit with bounded fan-in gates solving $\telepR$ with high probability on average. 
\begin{theorem}\label{thm:cteleportation}
Let $\cC_n:\cliff^n\rightarrow\pauli^n$
be a circuit with gates of fan-in bounded by a constant~$K$, such that
\begin{align}
    \Pr_{C\in \cliff^n}\left[(C,\cC_n(C))\in\telepR\right]> \frac{35}{36}\ .
\end{align}
Then the circuit depth of~$\cC_n$ satisfies
\begin{align}
\cdepth(\cC_n)\in \Omega(\log n)\ .
 \end{align}
\end{theorem}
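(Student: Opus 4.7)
The plan is to combine a standard light-cone argument with Lemma~\ref{lem:mainCliff}. Suppose for contradiction that $\cdepth(\cC_n)=d$ with $d=o(\log_K n)$, so that $K^d=o(n)$. Since every gate of $\cC_n$ has fan-in at most $K$, each output Pauli $P_i$ depends on at most $K^d$ of the input Cliffords; write $L_i\subseteq \mathbb{Z}_n$ for its light cone.

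First, I identify a good pair of positions $(j^*,k^*)\in \mathbb{Z}_n^2$ satisfying (i) $k^*\notin L_{j^*}$ and $j^*\notin L_{k^*}$, so that $P_{j^*}$ has no dependence on $C_{k^*}$ and vice versa; and (ii) for every $i\in\mathbb{Z}_n\setminus\{j^*,k^*\}$, the light cone $L_i$ contains at most one of $\{j^*,k^*\}$, so that no remaining output depends jointly on both $C_{j^*}$ and $C_{k^*}$. For a uniformly random pair, (i) fails with probability at most $2K^d/n$, and a union bound over $i$ shows that (ii) fails with probability at most $\sum_i |L_i|^2/n^2 \leq K^{2d}/n$. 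Both are $o(1)$ under our depth assumption, so such a good pair exists.

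Second, I condition on a good pair $(j^*,k^*)$ and on arbitrary values $\bar{C}_\ell$ for the remaining inputs $\ell \neq j^*,k^*$ (averaging over $\bar{C}_\ell$ at the end). Conditions (i)--(ii) ensure that $P_{j^*}=F'(C_{j^*})$ and $P_{k^*}=G'(C_{k^*})$ for Pauli-valued functions $F',G'\colon\cliff\to\pauli$, while every other $P_i$ is a Pauli-valued function of at most one of $\{C_{j^*},C_{k^*}\}$. Using the identity $PC=C(C^{-1}PC)$ and the fact that conjugation by Cliffords preserves $\pauli$, the cyclic product factors as $\prod_i P_iC_i = P'\cdot M_C$, where $M_C=\prod_i C_i$ and $P'\in\pauli$ is determined by the $P_i$'s. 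The no-joint-dependence property together with careful bookkeeping of conjugations lets me further write $P'=\tilde F(C_{j^*})\cdot\tilde G(C_{k^*})$ for Pauli-valued $\tilde F,\tilde G$ (up to a fixed overall Pauli absorbed into the definition); combined with the bilinear form of $M_C$ in $(C_{j^*},C_{k^*})$ and after a relabeling $U:=C_{j^*}\cdot(\text{fixed Clifford})$, $V:=C_{k^*}\cdot(\text{fixed Clifford})$ (both uniform in $\cliff$ if $C_{j^*},C_{k^*}$ are), the failure condition $\tr(\prod_i P_iC_i)=0$ takes the form $\tr(U\,\tilde F_\star(U)\,V\,\tilde G_\star(V))=0$ for some Pauli-valued $\tilde F_\star,\tilde G_\star\colon\cliff\to\pauli$.

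Applying Lemma~\ref{lem:mainCliff} to $\tilde F_\star,\tilde G_\star$ yields $\Pr_{(U,V)\in\cliff^2}[\text{fail}]\geq 1/36$. Averaging over $\bar{C}_\ell$ preserves this bound, so $\cC_n$ fails on a uniformly random input with probability at least $1/36$, contradicting the assumed success probability strictly greater than $35/36$. Hence $\cdepth(\cC_n)\in\Omega(\log n)$. The main obstacle in the argument is the algebraic reduction in the third paragraph: producing the exact bilinear form required by Lemma~\ref{lem:mainCliff}, with Pauli-valued functions of two free Cliffords, from the full cyclic trace over $n$ factors. This requires exploiting the normality of $\pauli$ in $\cliff$, carefully tracking how the conjugations by the $C_i$'s propagate the dependencies of the intermediate $P_i$'s through the product, and arranging the ordering so that all $C_{j^*}$- and $C_{k^*}$-contributions appear in the positions demanded by the lemma; the remaining counting and averaging steps are routine.
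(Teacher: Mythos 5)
You take the same overall approach as the paper --- a light-cone argument locating a non-signaling input pair, followed by a reduction to Lemma~\ref{lem:mainCliff} --- and your counting argument for the existence of such a pair, while phrased in terms of backward light cones of outputs rather than forward light cones of inputs, is equivalent to invoking Lemma~\ref{lem:basicprobabilistic}. The gap is in the algebraic reduction, and it is a genuine one. After pushing all Cliffords to one side you write $\prod_i P_iC_i = P'\cdot M_C$ with $P' = \prod_i\, (C_0\cdots C_{i-1})\,P_i\,(C_0\cdots C_{i-1})^{-1}$ (up to signs), and you claim $P'=\tilde F(C_{j^*})\tilde G(C_{k^*})$. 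This is false: for any $i>k^*$ the conjugating prefix $C_0\cdots C_{i-1}$ contains both $C_{j^*}$ and $C_{k^*}$, so the factor $(C_0\cdots C_{i-1})\,P_i\,(C_0\cdots C_{i-1})^{-1}$ depends jointly on $C_{j^*}$ and $C_{k^*}$ even when $P_i$ depends on neither. Concretely, with $n=3$, $j^*=0$, $k^*=1$ and $P_2$ independent of $C_0,C_1$, the factor $C_0C_1P_2C_1^{-1}C_0^{-1}$ already obstructs the claimed split of $P'$. (The full trace $\tr(M_C P')$ does still reduce to the form $\tr(UF(U)VG(V))$ after cyclic rearrangement, but this is exactly because the offending conjugators are absorbed by cancelling against pieces of $M_C$; that bookkeeping is not what your sketch performs, and as stated the factorization of $P'$ alone is not available.)

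The paper sidesteps this by \emph{not} commuting any Pauli past $C_j$ or $C_k$. It cycles the trace so that $C_j$ and $C_k$ sit as fixed ``anchors'' and propagates Paulis only within the two arcs $(j,k)$ and $(k,j)$, so that each $\tilde P_s$ is a function of $P_s$ and $C^c_{j,k}$ only (Eqs.~\eqref{eq:paulicommutingkp1jm1}--\eqref{eq:paulitildelast} and~\eqref{eq:functionaldependencetildeps}); the remaining Cliffords in each arc are then grouped into fixed $K_{C^c_{j,k}}$, $L_{C^c_{j,k}}$. Because $\tilde P_s$ never gets conjugated by $C_j$ or $C_k$, the no-joint-dependence property does transfer, the two arc-products factor into $C_j$-, $C_k$-, and neither-dependent Pauli pieces, and after cyclic reordering one lands exactly on $\tr(C_jK\tilde F(C_j)C_kL\tilde G(C_k))$, which maps onto Lemma~\ref{lem:mainCliff} by absorbing $K,L$ into the functions via the substitution $U\mapsto UK^{-1}$, $V\mapsto VL^{-1}$. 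If you replace your ``push everything to one side'' step by this arc-local propagation (keeping $C_{j^*}$ and $C_{k^*}$ in place), the rest of your argument, including the averaging over the fixed $\bar C_\ell$, goes through and gives the stated $\Omega(\log n)$ depth bound.
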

\noindent An immediate implication of Theorem~\ref{thm:cteleportation} is the following corollary. 
It  constitutes an extension  of the worst-case result proven by the present authors in Ref.~\cite{teleppaper} to the average case. We note, however, that~\cite{teleppaper} uses different proof techniques, relying on a reduction to known circuit-complexity lower bounds for computing parity. 
\begin{corollary}
No $\NC^0$ circuit can solve the
relation problem~$R_{\Utelep_n}$ of possibilistic simulation of the single-qubit gate-teleportation circuit
for a uniformly chosen instance with average probability greater than $35/36$.
In particular, the relation problem~$R_{\Utelep_n}$ separates the (relational) complexity classes~$\NC^0$ and $\QNC^0$ of constant-depth classical and constant-depth quantum circuits, i.e., we have~$\QNC^0\not\subseteq \NC^0$.
\end{corollary}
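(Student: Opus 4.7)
The plan is to derive a contradiction using Lemma~\ref{lem:mainCliff}. Assume $\cC_n$ has fan-in at most $K$, depth $d$, and succeeds with probability $>35/36$. Since each output bit reads at most $K^d$ input bits, each Pauli $P_i$ depends on at most $s:=2K^d$ input Cliffords; let $S_i\subseteq[n]$ denote its dependency set. First, I will locate a pair $(j,k)$ with $j\neq k$ satisfying (a) $k\notin S_j$, (b) $j\notin S_k$, and (c) $\{j,k\}\not\subseteq S_i$ for every $i\in[n]\setminus\{j,k\}$. A union bound gives at most $2ns$ pairs violating (a) or (b) and at most $n\binom{s}{2}$ violating (c), so whenever $s=o(\sqrt{n})$ (equivalently, $d=o(\log n)$) a good pair exists by counting.

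For such a pair, fix arbitrary values $C_i=\tilde c_i$ for $i\notin\{j,k\}$ and analyze the trace $\tr(P_{n-1}C_{n-1}\cdots P_0 C_0)$ as a function of $(C_j,C_k)$. Splitting the cyclic product at $C_j$ and $C_k$ produces three segments, each an alternating string of Paulis and fixed Cliffords; within each, repeatedly applying $\tilde c P=(\tilde c P\tilde c^{-1})\tilde c$ pushes every Pauli to one side and collapses the remaining $\tilde c_i$'s into effective Cliffords $\alpha,\beta,\gamma$. Absorbing these into the random Cliffords yields uniform substitutes $U:=\gamma\alpha C_k$ and $V:=\beta C_j$, and the trace takes the form $\pm\tr(\tilde\Pi_1\,U\,\tilde\Pi_2\,V\,\tilde\Pi_3)$ with each $\tilde\Pi_l$ a Pauli product. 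By conditions (a)--(c), together with the fact that conjugation by a fixed Clifford preserves which of $\{C_j,C_k\}$ a Pauli depends on, each $\tilde\Pi_l$ factors (up to a sign that does not affect trace-vanishing) as $X^A_l(V)\,X^B_l(U)\,X^T_l$ with $X^A_l$ depending on $V$ only, $X^B_l$ on $U$ only, and $X^T_l$ constant. A sequence of cyclic shifts, Pauli-level commutations, and carefully chosen conjugations --- each $V$-dependent Pauli is pushed only through $V$, each $U$-dependent Pauli only through $U$, and constants through whichever is convenient (turning them into Paulis with the correct single-variable dependency) --- then collapses the expression into the canonical form $\pm\tr(U\,F_L(U)\,V\,G_L(V))$ for some Paulis $F_L,G_L:\cliff\to\pauli$.

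Lemma~\ref{lem:mainCliff} then yields $\Pr_{U,V}[\tr(U F_L(U) V G_L(V))=0]\ge 1/36$; since $(U,V)$ is uniform over $\cliff^2$ whenever $(C_j,C_k)$ is, the failure probability conditional on any choice of $\tilde c$ is at least $1/36$, and hence so is the unconditional failure probability. This contradicts success $>35/36$, so $d$ cannot be $o(\log n)$, giving $\cdepth(\cC_n)=\Omega(\log n)$. The hardest part will be the algebraic reduction in the second paragraph, and specifically verifying that the routing rule (Paulis only ever conjugated by ``their own'' random Clifford) is actually achievable without ever pushing a $V$-dependent Pauli through $U$ or vice versa --- this is precisely where condition (c), forbidding any Pauli from depending on both $C_j$ and $C_k$ simultaneously, is essential, since otherwise the conjugations would entangle the $U$- and $V$-dependencies and preclude the Lemma form.
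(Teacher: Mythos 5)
Your proposal is correct and follows essentially the same route as the paper's proof of Theorem~\ref{thm:cteleportation} (from which the corollary is immediate): locate a non-signaling pair $(C_j,C_k)$ via a light-cone counting argument, algebraically reduce the trace condition by pushing Paulis past fixed Cliffords and conjugating each $C_j$- or $C_k$-dependent factor only through its ``own'' Clifford, and then contradict Lemma~\ref{lem:mainCliff} after conditioning on the remaining inputs. The only cosmetic difference is that you establish the existence of the non-signaling pair by an explicit union-bound count rather than citing~\cite[Lemma~7]{BGKT}, and you argue the conditional failure bound holds for every fixing of the other Cliffords (rather than exhibiting one by averaging), but these are equivalent formulations of the same argument.
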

Theorem~\ref{thm:cteleportation} is a consequence of a no-signaling argument similar in spirit to those used in~\cite{BGK,BGKT}. The main idea is to show that any classical, limited-depth circuit achieving an average success probability above some threshold value implies a local strategy for a certain generalization of the magic-square game captured in Lemma~\ref{lem:mainCliff}.

Our proof relies on non-signaling properties of $\NC^0$ circuits. Let us formalize the non-signaling property. A circuit~$\cC_n:\cliff^n\rightarrow\pauli^n$  as considered in this section has $n$~inputs, where the $r$-th input~$C_r$ takes a single-qubit Clifford~$C_r\in \cliff$. Similarly, there are $n$~outputs, with the $s$-th output labeled as $P_s$ (outputting a Pauli~$P_s\in\pauli$)\footnote{For convenience, we will slightly abuse the notation and use inputs and outputs for the input nodes and output nodes of the circuit.}.
Given an input~$C_r$, $r\in \mathbb{Z}_n$ to the circuit, we define the forward light cone~$\cL^{\rightarrow}_{\cC_n}(C_r)$ of~$C_r$ as the set of outputs whose associated output depends non-trivially on the input~$C_r$. That is, the set $\cL^{\rightarrow}_{\cC_n}(C_r)$ contains every output~$P_s$, $s\in\mathbb{Z}_n$ with the following property: There are two $n$-tuples~$C,C'\in \cliff^n$ of 
Cliffords such that $C$ and $C'$ agree in each  entry~$m\in\mathbb{Z}_n\backslash \{r\}$, yet
lead to two outputs~$\cC_n(C),\cC_n(C')\in\pauli^n$ differing in the entry~$s\in\mathbb{Z}_n$, i.e., $\cC_n(C)_s\neq \cC_n(C')_s$. Similarly, we can define the backward light cone $\cL_{\cC_n}^\leftarrow(P_s)$ of output $P_s$ as the set of inputs that non-trivially affect output $P_s$. That is, the set $\cL_{\cC_n}^\leftarrow(P_s)$ contains every input $C_r, r\in\mathbb{Z}_n$ such that $P_s\in\cL^\rightarrow_{C_n}(C_r)$.

The relevant no-signaling property can now be concisely stated in these terms, as follows. We say that inputs $C_j$ and $C_k$ have non-intersecting light cones or are non-signaling if and only if
\begin{align}
\cL^\rightarrow_{\cC_n}(C_j)\cap \cL^\rightarrow_{\cC_n}(C_k)&=\emptyset\ .
\end{align}
This property implies
that the output Paulis can be partitioned into three disjoint sets namely: 
\begin{enumerate}[(i)]
\item
output Paulis that depend on~$C_j$ (indexed by $P_s\in \cL^\rightarrow_{\cC_n}(C_j)$),
\item
output Paulis that depend on~$C_k$ 
(indexed by $P_s\in \cL^\rightarrow_{\cC_n}(C_k)$), and 
\item
 output Paulis that do not depend on either~$C_j$ or $C_k$ (indexed by
 $P_s\in \left(\cL^\rightarrow_{\cC_n}(C_j)
 \cup\cL^\rightarrow_{\cC_n}(C_k)\right)^c)$.
\end{enumerate}
(Here $A^c$ denotes the complement of a subset~$A\subset\mathbb{Z}_n$.) 
Importantly, there are no output entries depending simultaneously on both~$(C_j,C_k)$ if the inputs  $C_j$ and $C_k$ are non-signaling.

    The following statement shows that any limited-depth circuit with bounded fan-in gates  has at least two inputs~$C_j$, $C_k$ that are non-signaling. We refer to~\cite{BGKT} for a proof, where an essentially analogous setup is considered.
\begin{lemma}[{\cite[Lemma~7]{BGKT}}]\label{lem:basicprobabilistic}
There is a constant~$c>0$ such that the following holds.
Let $\cC_n:\cliff^n\rightarrow\pauli^n$
be a circuit with gates of fan-in bounded by a constant~$K$. If
\begin{align}
    \cdepth(\cC_n)\leq c\frac{\log n}{\log K}\ ,
\end{align}
then there are two inputs~$C_j,C_k$ with $(j<k)$ that are non-signaling.
\end{lemma}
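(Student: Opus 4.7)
The plan is a direct counting argument, using that in any depth-$d$ circuit with fan-in bounded by $K$ the backward light cone of a single output \emph{bit} contains at most $K^d$ input bits. Since every Clifford input $C_r\in\cliff$ is encoded in a constant number of classical bits (e.g., $5$, since $|\cliff|=24$) and every Pauli output $P_s\in\pauli$ is encoded in a constant number of bits, this translates, up to an absolute multiplicative constant $\kappa$, to the bound $|\cL^\leftarrow_{\cC_n}(P_s)|\leq \kappa\, K^d$ on the number of Clifford inputs in the backward light cone of any Pauli output $P_s$.

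First I would record the elementary observation that a pair $(C_j,C_k)$ with $j<k$ is signaling if and only if there exists an output $P_s$ with $\{C_j,C_k\}\subseteq\cL^\leftarrow_{\cC_n}(P_s)$: the forward light cones $\cL^\rightarrow_{\cC_n}(C_j)$ and $\cL^\rightarrow_{\cC_n}(C_k)$ meet at some output $P_s$ exactly when both $C_j$ and $C_k$ belong to the backward light cone of $P_s$. Summing over all $n$ outputs therefore gives the upper bound
\[
\#\{\text{signaling pairs}\}\leq \sum_{s\in\mathbb{Z}_n}\binom{|\cL^\leftarrow_{\cC_n}(P_s)|}{2}\leq n\binom{\kappa K^d}{2}\leq \frac{\kappa^2\, n\, K^{2d}}{2}.
\]
Comparing with the total number of pairs $\binom{n}{2}=n(n-1)/2$, the existence of at least one non-signaling pair is guaranteed whenever $\kappa^2 K^{2d}<n-1$. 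Taking logarithms and solving for $d$ gives the sufficient condition $d\leq c\,\log n/\log K$ for a constant $c$ slightly below $1/2$ (absorbing $\kappa$ and the $-1$ into $c$ at the cost of requiring $n$ large enough, which is harmless since for small $n$ the claim is trivial by increasing $c$ or treating it as a base case).

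The only delicate step is the careful bookkeeping of the constant $\kappa$: one must justify that passing from ``input/output bits'' (the natural level at which the $K^d$ light-cone bound applies) to ``input Cliffords and output Paulis'' only costs an absolute constant factor independent of $n$, $K$, and $d$. Once this is established, the rest is a textbook pigeonhole/union-bound calculation, essentially identical in structure to Lemma~7 of \cite{BGKT} but with the input and output alphabets $\cliff$ and $\pauli$ in place of single bits.
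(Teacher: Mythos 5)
Your proposal is correct and uses essentially the same light-cone counting/pigeonhole argument as \cite[Lemma~7]{BGKT} (and as the paper's own proof of the closely related Lemma~\ref{lem:nosignalingbiasmain}): bound each backward light cone by $K^d$ (times a constant if one works at the bit level), compare the number of potentially signaling pairs with $\binom{n}{2}$, and conclude for $d\leq c\log n/\log K$ with $c$ slightly below $1/2$. The only nitpick is the remark about small $n$: there the claim is handled by taking $c$ \emph{smaller} (so the hypothesis forces depth~$0$, where every backward light cone is a single input and the conclusion is immediate), not by increasing $c$, which would only weaken the depth restriction.
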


\definecolor{myred}{RGB}{204,51,17}
\definecolor{myblue}{RGB}{0,119,187}
\definecolor{mygrey}{RGB}{187,187,187}

\tikzset{bluepauli/.style={
        draw,
        thick,
        circle,
        fill=myblue!50,
        minimum size=1cm }}
\tikzset{bluecliff/.style={
        draw,
        thick,
        rectangle,
        fill=myblue!70,
        minimum size=1cm }}
\tikzset{greypauli/.style={
        draw,
        thick,
        circle,
        fill=mygrey!50,
        minimum size=1cm }}
\tikzset{redcliff/.style={
        draw,
        thick,
        rectangle,
        fill=myred!70,
        minimum size=1cm }}
\tikzset{redpauli/.style={
        draw,
        thick,
        circle,
        fill=myred!50,
        minimum size=1cm }}
\tikzset{greycliff/.style={
        draw,
        thick,
        rectangle,
        fill=white,
        minimum size=1cm
 }}
 \tikzset{invisiblepauli/.style={
        draw,
        thick,
        circle,
        opacity=0,
        minimum size=1cm }}

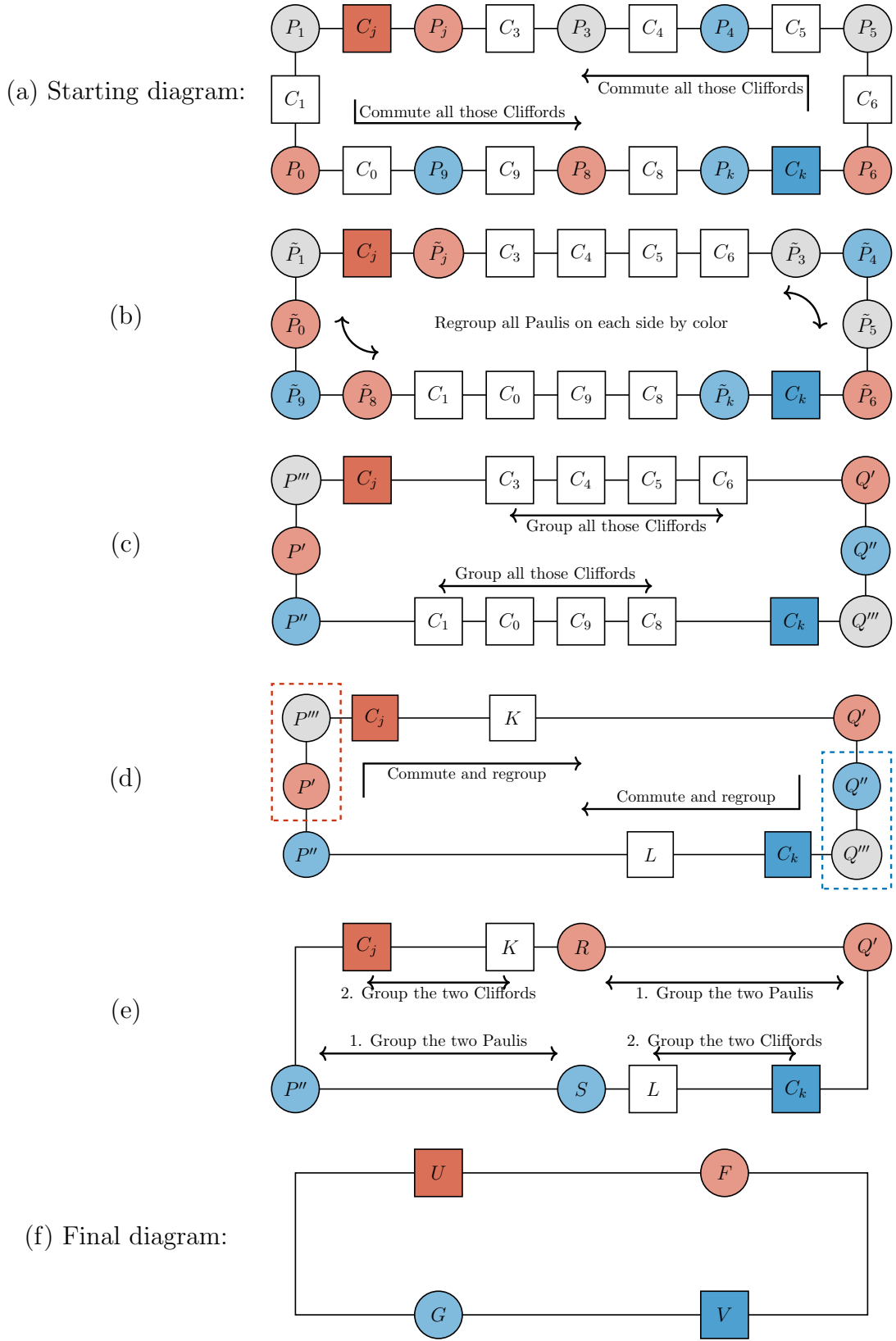
\begin{figure}
\centering
\begin{subfigure}{\textwidth}
\centering
\makebox[3cm]{(a) Starting diagram:}\qquad\raisebox{-0.5\height}{\resizebox{0.58\textwidth}{!}{\begin{tikzpicture}
\centering
\draw[black,thick] (0,0) rectangle (12,3);
\node[redpauli] at (0,0) {$P_0$};
\node[greycliff] at (1.5,0) {$C_0$};
\node[bluepauli] at (3,0) {$P_9$};
\node[greycliff] at (4.5,0) {$C_9$};
\node[redpauli] at (6,0) {$P_8$};
\node[greycliff] at (7.5,0) {$C_8$};
\node[bluepauli] at (9,0) {$P_{k}$};
\node[bluecliff] at (10.5,0) {$C_{k}$};
\node[redpauli] at (12,0) {$P_6$};
\node[greycliff] at (12,1.5) {$C_6$};
\node[greypauli] at (12,3) {$P_5$};
\node[greycliff] at (10.5,3) {$C_5$};
\node[bluepauli] at (9,3) {$P_4$};
\node[greycliff] at (7.5,3) {$C_4$};
\node[greypauli] at (6,3) {$P_3$};
\node[greycliff] at (4.5,3) {$C_3$};
\node[redpauli] at (3,3) {$P_j$};
\node[redcliff] at (1.5,3) {$C_j$};
\node[greypauli] at (0,3) {$P_1$};
\node[greycliff] at (0,1.5) {$C_1$};

 \draw[->, very thick] (1.25,1.5) -- (1.25,1) -- (6,1);
 \draw[->, very thick] (12-1.25,1.25) -- (12-1.25,2) -- (12-6,2);
  \node[align=center] at (3.5,1.25) {\footnotesize Commute all those Cliffords};
   \node[align=center] at (12-3.5,1.75) {\footnotesize Commute all those Cliffords};
 
\end{tikzpicture}}}
\end{subfigure}\vspace{0.5cm}

\begin{subfigure}{\textwidth}
\centering
\makebox[3cm]{(b)}\qquad\raisebox{-0.5\height}{\resizebox{0.58\textwidth}{!}{\begin{tikzpicture}
\centering
\draw[black,thick] (0,0) rectangle (12,3);

\node[redpauli] at (1.5,0) {$\tilde{P}_8$};
\node[bluepauli] at (0,0) {$\tilde{P}_9$};
\node[redpauli] at (0,1.5) {$\tilde{P}_0$};
\node[greypauli] at (0,3) {$\tilde{P}_1$};

\node[greycliff] at (3,0) {$C_1$};
\node[greycliff] at (4.5,0) {$C_0$};
\node[greycliff] at (6,0) {$C_9$};
\node[greycliff] at (7.5,0) {$C_8$};

\node[bluepauli] at (9,0) {$\tilde{P}_{k}$};
\node[bluecliff] at (10.5,0) {$C_{k}$};

\node[redpauli] at (12,0) {$\tilde{P}_6$};
\node[greypauli] at (12,1.5) {$\tilde{P}_5$};
\node[bluepauli] at (12,3) {$\tilde{P}_4$};
\node[greypauli] at (10.5,3) {$\tilde{P}_3$};

\node[greycliff] at (9,3) {$C_6$};
\node[greycliff] at (7.5,3) {$C_5$};
\node[greycliff] at (6,3) {$C_4$};
\node[greycliff] at (4.5,3) {$C_3$};
\node[redpauli] at (3,3) {$\tilde{P}_j$};
\node[redcliff] at (1.5,3) {$C_j$};

 \node[align=center] at (6,1.5) {\footnotesize Regroup all Paulis on each side by color};
 
  \draw[very thick,<->] ({10+0.7*cos(70)} , {1.5+0.7*sin(70)}) arc (95:-5:0.7);
    \draw[very thick,<->] ({1.5+0.7*cos(70)} , {0.25+0.7*sin(70)}) arc (275:175:0.7);

\end{tikzpicture}}}

\end{subfigure}\vspace{0.5cm}
~

\begin{subfigure}{\textwidth}
\centering
\makebox[3cm]{(c)}\qquad\raisebox{-0.5\height}{\resizebox{0.58\textwidth}{!}{\begin{tikzpicture}
\centering
\draw[black,thick] (0,0) rectangle (12,3);

\node[bluepauli] at (0,0) {$P''$};
\node[redpauli] at (0,1.5) {$P'$};
\node[greypauli] at (0,3) {$P'''$};

\node[greycliff] at (3,0) {$C_1$};
\node[greycliff] at (4.5,0) {$C_0$};
\node[greycliff] at (6,0) {$C_9$};
\node[greycliff] at (7.5,0) {$C_8$};

\node[bluecliff] at (10.5,0) {$C_k$};

\node[greypauli] at (12,0) {$Q'''$};
\node[bluepauli] at (12,1.5) {$Q''$};
\node[redpauli] at (12,3) {$Q'$};

\node[greycliff] at (9,3) {$C_6$};
\node[greycliff] at (7.5,3) {$C_5$};
\node[greycliff] at (6,3) {$C_4$};
\node[greycliff] at (4.5,3) {$C_3$};
\node[redcliff] at (1.5,3) {$C_j$};

 \node[align=center] at (6.75,2) {\footnotesize Group all those Cliffords};
  \node[align=center] at (5.25,1) {\footnotesize Group all those Cliffords};
 \draw[<->, very thick] (4.5,2.25) -- (9,2.25);
  \draw[<->, very thick] (3,0.75) -- (7.5,0.75);

\end{tikzpicture}}}

\end{subfigure}\vspace{0.5cm}
~

\begin{subfigure}{\textwidth}
\centering
\makebox[3cm]{(d)}\qquad\raisebox{-0.5\height}{\resizebox{0.58\textwidth}{!}{\begin{tikzpicture}
\centering
\draw[black,thick] (0,0) rectangle (12,3);

\node[bluepauli] at (0,0) {$P''$};
\node[redpauli] at (0,1.5) {$P'$};
\node[greypauli] at (0,3) {$P'''$};

\node[greycliff] at (7.5,0) {$L$};
\node[bluecliff] at (10.5,0) {$C_{k}$};

\node[greypauli] at (12,0) {$Q'''$};
\node[bluepauli] at (12,1.5) {$Q''$};
\node[redpauli] at (12,3) {$Q'$};

\node[greycliff] at (4.5,3) {$K$};
\node[redcliff] at (1.5,3) {$C_j$};

\draw[dashed, very thick, myred] (-0.75,0.75) rectangle (0.75,3.75);
\draw[->, very thick] (1.25,1.25) -- (1.25,2) -- (6,2);
  \node[align=center] at (3.5,1.75) {\footnotesize Commute and regroup};

  \draw[dashed, very thick, myblue] (-0.75+12,0.75-1.5) rectangle (0.75+12,3.75-1.5);
\draw[->, very thick] (12-1.25,3-1.25) -- (12-1.25,3-2) -- (12-6,3-2);
  \node[align=center] at (12-3.5,3-1.75) {\footnotesize Commute and regroup};

\end{tikzpicture}}}

\end{subfigure}\vspace{0.5cm}
 
\begin{subfigure}{\textwidth}
\centering
\makebox[3cm]{(e)}\qquad\raisebox{-0.5\height}{\resizebox{0.58\textwidth}{!}{\begin{tikzpicture}
\centering
\draw[black,thick] (0,0) rectangle (12,3);

\node[bluepauli] at (0,0) {$P''$};
\node[redpauli] at (6,3) {$R$};

\node[greycliff] at (7.5,0) {$L$};
\node[bluecliff] at (10.5,0) {$C_{k}$};

\node[bluepauli] at (6,0) {$S$};
\node[redpauli] at (12,3) {$Q'$};

\node[greycliff] at (4.5,3) {$K$};
\node[redcliff] at (1.5,3) {$C_j$};
 \node[align=center] at (9,2) {\footnotesize 1. Group the two Paulis};
  \node[align=center] at (3,1) {\footnotesize 1. Group the two Paulis};
 \draw[<->, very thick] (6.5,2.25) -- (11.5,2.25);
  \draw[<->, very thick] (0.5,0.75) -- (5.5,0.75);

     \node[align=center] at (9,1) {\footnotesize 2. Group the two Cliffords};
   \node[align=center] at (3,2) {\footnotesize 2. Group the two Cliffords};
  \draw[<->, very thick] (7.5,0.75) -- (10.5,0.75);
   \draw[<->, very thick] (1.5,2.25) -- (4.5,2.25);

\end{tikzpicture}}}

\end{subfigure}\vspace{0.5cm}
~

\begin{subfigure}{\textwidth}
\centering
\makebox[3cm]{(f) Final diagram:}\qquad\raisebox{-0.5\height}{\resizebox{0.58\textwidth}{!}{\begin{tikzpicture}
\centering
\draw[black,thick] (0,0) rectangle (12,3);

\node[redpauli] at (9,3) {${F}$};
\node[bluecliff] at (9,0) {$V$};
\node[bluepauli] at (3,0) {${G}$};
\node[redcliff] at (3,3) {$U$};
\node[invisiblepauli] at (0,0) {};
\node[invisiblepauli] at (12,3) {};

\end{tikzpicture}}}

\end{subfigure}

\caption{Tensor-network representation illustrating the various commutation relations used in the proof of Theorem~\ref{thm:cteleportation}. We give an example of dimension $n=10$, with $j=2$ and $k=7$. The input Cliffords are represented by squares, while the output Paulis are represented by circles of various colours: (i) the Paulis in red belong to the forward light cone of the Clifford $C_j$ in red. (ii) the Paulis in blue belong to the forward light cone of the Clifford $C_k$ in blue. (iii) The Paulis in grey are in neither of those light cones.}\label{fig:commutations}
\end{figure}

\begin{proof}[Proof of Theorem~\ref{thm:cteleportation}]
Fig.~\ref{fig:commutations} illustrates the main steps of the proof.
We first rewrite the condition~$(C,P)\in\telepR$ in a form that is convenient for our analysis. For this purpose, let $j,k\in \mathbb{Z}_n$ with $j<k$ be arbitrary but fixed. Observe that $(C,P)\in\telepR$ if and only if
\begin{align}
A:=\left|\tr\left(\left(\prod_{r=k+1}^{n-1}P_{r}C_{r}\right) P_{k} C_k \left(\prod_{s=j+1}^{k-1} P_{s} C_{s}\right)P_{j} C_j \left(\prod_{t=0}^{j-1} P_{t}C_{t}\right) \right)\right|>0\ . \label{eq:traceteleprestriction}
\end{align}
Using the cyclicity of trace, we can rewrite the quantity~$A$ as
\begin{align}
A=\left|\tr\left( C_j \left(\prod_{s\in[k+1,\ldots,n-1,0,1,\ldots, j-1]} P_{s}C_{s}\right) P_{k} C_k\left(\prod_{t\in[j+1,\ldots, k-1]} P_{t} C_{t}\right)P_{j} \right)\right|\ ,\label{eq:traceteleprestrictionstep1}
\end{align}
where we used the product indexed by elements of $\mathbb{Z}_n$:
\begin{align}
\prod_{s\in[k+1,\ldots,n-1,0,1,\ldots, j-1]} P_{s}C_{s} &= (P_{j-1}C_{j-1})\cdots (P_{n-1}C_{n-1})(P_0C_0)(P_1C_1)\cdots(P_{k+1}C_{k+1})\\
\prod_{t\in[j+1,\ldots, k-1]} P_{t} C_{t} &=(P_{k-1}C_{k-1})\cdots(P_{j+1}C_{j+1}) \ . \label{eq:cyclicproduct}
\end{align}
We will simply write $[k+1,\ldots, j-1]$ instead of $[k+1,\ldots,n-1,0,1,\ldots, j-1]$ when using the product indexed by elements of $\mathbb{Z}_n$.

We now use the fact that Clifford group normalizes the Pauli group. This allows us to commute all Paulis in the two products 
in Eq.~\eqref{eq:traceteleprestrictionstep1} to the right. That is, we define Pauli 
\begin{align}
    \tilde{P}_s&:=\left(\prod_{t\in[k+1,\ldots, s]} C_t \right)^\dagger P_s \left( \prod_{r\in [k+1,\ldots, s]}C_r \right)\qquad\textrm{ for all }\qquad s\in[k+1,\ldots,j-1]\label{eq:paulicommutingkp1jm1}\\
    \tilde{P}_t&:=\left(\prod_{t\in[j+1,\ldots, s]} C_t \right)^\dagger P_s \left( \prod_{r\in [j+1,\ldots, s]}C_r \right)\qquad\textrm{ for all }\qquad  t\in[j+1,\ldots, k-1]\label{eq:paulicommutingjp1km1}\\
    \tilde{P}_j&:=P_j\\
    \tilde{P}_k&:=P_k\ \label{eq:paulitildelast} 
\end{align}
such that \begin{align}
    \prod_{s\in[k+1,\ldots, j-1]} P_{s}C_{s} & = \left(\prod_{u\in[k+1,\ldots, j-1]}C_u\right)\left(\prod_{v\in[k+1,\ldots, j-1]} \tilde{P}_v\right) \label{eq:productUp}\\
    \prod_{t\in[j+1,\ldots, k-1]} P_{t} C_{t} &= \left(\prod_{u\in[j+1,\ldots, k-1]}C_u\right)\left(\prod_{v\in[j+1,\ldots, k-1]} \tilde{P}_v\right) \ . \label{eq:productVp}
\end{align}
Substituting~\eqref{eq:productUp} and \eqref{eq:productVp} into  Eq.~\eqref{eq:traceteleprestrictionstep1}, we obtain the necessary and sufficient condition
\begin{align}
A&=\left|\tr\left( 
\left(\prod_{u\in [k+1,\ldots,j]} C_u\right)
\left(\prod_{s\in[k,\ldots, j-1]} \tilde{P}_{s}\right) 
\left(\prod_{u\in [j+1,\ldots,k]}C_u\right)
\left(\prod_{t\in[j,\ldots, k-1]} \tilde{P}_{t} \right)\right)\right|>0\ \label{eq:traceteleprestrictionstep2}
\end{align}
for $(C,P)\in\telepR$. For later use, we note that each~$\tilde{P}_s$, $s\in\mathbb{Z}_n$
defined by Eqs.~\eqref{eq:paulicommutingkp1jm1}--~\eqref{eq:paulitildelast} is obtained by commuting Paulis past Clifford operators belonging to the $(n-2)$-tuple
\begin{align}
C_{j,k}^c:=\left(C_0,\ldots, C_{j-1},C_{j+1},\ldots,C_{k-1},C_{k+1},\ldots,C_{n-1}\right) \ ,\label{eq:excltuple}
\end{align}
that is, we have 
\begin{align}
\tilde{P}_s&=\tilde{P}_s(P_s,C_{j,k}^c)\label{eq:functionaldependencetildeps}
    \end{align}
    for a certain function~$\tilde{P}_s$, for every~$s\in\mathbb{Z}_n$. In particular, given~$P_s$, $\tilde{P}_s$ is determined by~\eqref{eq:excltuple} and there is no dependence on~$(C_j,C_k)$.

To prove Theorem~\ref{thm:cteleportation}, suppose for the sake of contradiction that there is a circuit~$\cC_n$ with gates of bounded fan-in~$K$ satisfying
\begin{align}
    \cdepth(\cC_n) &\leq c\frac{\log n}{\log K}\ ,
\end{align}
where $c$ is the constant from Lemma~\ref{lem:basicprobabilistic}, and achieving an average success probability
\begin{align}
    \Pr_{C\in\cliff^n}    \left[(C,\cC_n(C))\in\telepR\right] >\frac{35}{36}\ .\label{eq:assumptioncontradiction}
\end{align}
By
Lemma~\ref{lem:basicprobabilistic}, there are $j<k$ such that
\begin{align}
\cL^{\rightarrow}_{\cC_n}(C_j)\cap \cL^{\rightarrow}_{\cC_n}(C_k)&=\emptyset\ .\label{eq:cncjckprop}
\end{align} As discussed  before Lemma~\ref{lem:nosignalingbiasmain}, 
this means that for every $s\in \mathbb{Z}_n$, the outputs have the following property:  Either
\begin{enumerate}[(i)]
\item\label{it:firstcasedependence}
$P_s\in \cL_{\cC_n}^\rightarrow(C_j)$, 
\item\label{it:secondcasedependence}
 $P_s\in \cL_{\cC_n}^\rightarrow(C_k)$,  or 
 \item\label{it:thirdcasedependence}
 $P_s\not\in \left(\cL_{\cC_n}^\rightarrow(C_j)\cup \cL_{\cC_n}^\rightarrow(C_k)\right)$. 
 \end{enumerate}
This implies the following:
 \begin{enumerate}[(i)]
 \item \label{it:casea}
In case~\eqref{it:firstcasedependence}, the Pauli~$P_s$ depends non-trivially on the input~$C_k$, i.e., there is a function $P_s:\cliff\times(\cliff)^{n-2}\rightarrow\pauli$ such that~$P_s=P_s(C_k,C_{j,k}^c)$.

Combined with Eq.~\eqref{eq:functionaldependencetildeps}, this implies that there is a function
$\tilde{P}_s:\cliff\times(\cliff)^{n-2}\rightarrow\pauli$ such that~$\tilde{P}_s=\tilde{P}_s(C_k,C_{j,k}^c)$.
\item 
In case~\eqref{it:secondcasedependence}, the Pauli~$P_s$ depends non-trivially on~$C_j$, i.e., there is a function $P_s:\cliff\times(\cliff)^{n-2}\rightarrow\pauli$ and write~$P_s(C_j,C_{j,k}^c)$.

Combined with Eq.~\eqref{eq:functionaldependencetildeps}, this implies that there is a function
$\tilde{P}_s:\cliff\times(\cliff)^{n-2}\rightarrow\pauli$ such that~$\tilde{P}_s=\tilde{P}_s(C_j,C_{j,k}^c)$.
\item\label{it:casec}
In  case~\eqref{it:thirdcasedependence}, the Pauli $P_s$ depends only on $C_{j,k}^c$, i.e., there is a function $P_s:(\cliff)^{n-2}\rightarrow\pauli$ such that $P_s=P_s(C_{j,k}^c)$.

Because of Eq.~\eqref{eq:functionaldependencetildeps}, there is a function
$\tilde{P}_s:(\cliff)^{n-2}\rightarrow\pauli$ such that~$\tilde{P}_s=\tilde{P}_s(C_{j,k}^c)$.
\end{enumerate}
Now consider the products of Pauli operators appearing in Eq.~\eqref{eq:traceteleprestrictionstep2}. We can reorder these according to the three cases~\eqref{it:casea}--\eqref{it:casec} above since Pauli operators either commute or anticommute. This gives factorizations of the form
\begin{align}
\prod_{t\in [j,\ldots,k-1]} \tilde{P}_t(C) &=\pm P''_{C^c_{j,k}}(C_k)P'_{C^c_{j,k}}(C_j)P'''_{C^c_{j,k}}
\ \textrm{ where }\ \begin{matrix}
P'_{C^c_{j,k}}(C_j)&:=&\prod_{t\in [j,\ldots,k-1]\cap \cL^{\rightarrow}_{\cC_n}(C_j)} \tilde{P}_t(C_j,C_{j,k}^c)\\
P''_{C^c_{j,k}}(C_k)&:=&\prod_{t\in [j,\ldots,k-1]\cap \cL^{\rightarrow}_{\cC_n}(C_k)} \tilde{P}_t(C_k,C_{j,k}^c)\\
P'''_{C^c_{j,k}}&:=&\prod_{t\in [j,\ldots,k-1]\backslash (\cL^{\rightarrow}_{\cC_n}(C_j)\cup\cL^{\rightarrow}_{\cC_n}(C_k))} \tilde{P}_t(C_{j,k}^c)\ 
\end{matrix}
\end{align}
and
\begin{align}
\prod_{s\in [k,\ldots,j-1]} \tilde{P}_s(C) &=\pm Q'_{C^c_{j,k}}(C_j)Q''_{C^c_{j,k}}(C_k)Q'''_{C^c_{j,k}}
\ \textrm{where }\ 
\begin{matrix}
Q'_{C^c_{j,k}}(C_j)&:=&\prod_{s\in [k,\ldots,j-1]\cap \cL^{\rightarrow}_{\cC_n}(C_j)} \tilde{P}_s(C_j,C_{j,k}^c)\\
Q''_{C^c_{j,k}}(C_k)&:=&\prod_{s\in [k,\ldots,j-1]\cap \cL^{\rightarrow}_{\cC_n}(C_k)} \tilde{P}_s(C_k,C_{j,k}^c)\\
Q'''_{C^c_{j,k}}&:=&\prod_{s\in [k,\ldots,j-1]\backslash (\cL^{\rightarrow}_{\cC_n}(C_j)\cup\cL^{\rightarrow}_{\cC_n}(C_k))} \tilde{P}_s(C_{j,k}^c)\ ,
\end{matrix}
\end{align}
where we choose a slightly different order of the three factors in the two products for convenience only. Inserting this into~\eqref{eq:traceteleprestrictionstep2}, 
and using curly brackets to emphasize associativity we obtain
\begin{align}
A&=\left|\tr\left( 
\left(\prod_{u\in [k+1,\ldots,j]} C_u\right)
\left(\prod_{s\in[k,\ldots, j-1]} \tilde{P}_{s}\right) 
\left(\prod_{u\in [j+1,\ldots,k]}C_u\right)
\left(\prod_{t\in[j,\ldots, k-1]} \tilde{P}_{t} \right)\right)\right|\\
&=
\left|\tr\left(
\left\{C_j K_{C_{j,k}^c}\right\}
Q'_{C_{j,k}^c}(C_j)Q''_{C_{j,k}^c}(C_k)Q'''_{C_{j,k}^c}
\left\{C_k L_{C_{j,k}^c}\right\}
P''_{C_{j,k}^c}(C_k)P'_{C_{j,k}^c}(C_j)P'''_{C_{j,k}^c}\right)\right|\ ,\label{eq:pauliopAdependence}
\end{align}
where we defined the Clifford operators
\begin{align}
    K_{C_{j,k}^c} := \prod_{u\in [k+1,\ldots,j-1]} C_u\qquad\textrm{ and }\qquad 
    L_{C_{j,k}^c} := \prod_{u\in [j+1,\ldots,k-1]} C_u\ .
\end{align}
In the following, we will temporarily suppress the dependence on the tuple~$C_{j,k}^c$, omitting this subscript.  Again using the fact that Paulis commute or anticommute and the cyclicity of the trace, we rearrange~\eqref{eq:pauliopAdependence} by moving all factors depending on $C_j$ and $C_k$, respectively, next to each other. We obtain
\begin{align}
A&=\left|\tr\left(
\left\{C_j K\right\}
Q'(C_j)Q''(C_k)Q'''
\left\{C_k L\right\}
P''(C_k)\left\{P'(C_j)P'''\right\}\right)\right| \\
&=\left|\tr\left(
\left\{P'(C_j)P'''\right\}
\left\{C_j K\right\}
Q'(C_j)
Q''(C_k)Q'''
\left\{C_k L\right\}
P''(C_k)\right)\right|\quad\textrm{by the cyclicity of the trace}\\
&=
\left|\tr\left(
\left\{P'(C_j)P'''\right\}
\left\{C_j K\right\}
Q'(C_j)
\left\{Q''(C_k)Q'''\right\}
\left\{C_k L\right\}
P''(C_k)\right)\right|\\
&=\left|\tr\left(
\left\{C_j K\right\}
R(C_j)
Q'(C_j)
\left\{C_k L\right\}
S(C_k)
P''(C_k)
\right)\right|\label{eq:Aexpressionm}
\end{align}
where we introduced the Paulis
\begin{align}
R(C_j)&= \left\{C_j K\right\}^\dagger \left\{P'(C_j)P'''\right\}\left\{C_j K\right\}\\
S(C_k)&= \left\{C_k L\right\}^\dagger \left\{Q''(C_k)Q'''\right\}\left\{C_k L\right\}
\end{align}
by commuting~$P'(C_j)P'''$ to the right of the factor $C_jK$, and $Q''(C_k)Q'''$ to the right of the factor~$C_jL$.
Finally, we can use associativity and define Pauli operators $\tilde{F}$ and $\tilde{G}$ by
Eq.~\eqref{eq:Aexpressionm} as follows
\begin{align}
A&=\left|\tr\left(
C_j K\underbrace{
R(C_j)
Q'(C_j)}_{=:\tilde{F}}
C_k L\underbrace{S(C_k)
P''(C_k)}_{=:\tilde{G}}
\right)\right|\\
&=\left|\tr\left(C_jK \tilde{F}C_kL\tilde{G}\right)\right|
\end{align}
Let us briefly summarize what we have obtained recalling the different dependencies. We have, for each
$(n-2)$~tuple $C^c_{j,k}\in \cliff^{n-2}$, two functions
\begin{align}
\begin{matrix}
\tilde{F}_{C^c_{j,k}}:&\cliff& \rightarrow &\pauli\\
&C_j & \mapsto &  \pm R_{C_{j,k}^c}(C_j)Q'_{C_{j,k}^c}(C_j)
\end{matrix}
\end{align}
and
\begin{align}    
\begin{matrix}
\tilde{G}_{C^c_{j,k}}:&\cliff& \rightarrow &\pauli\\
&C_k & \mapsto & \pm S_{C_{j,k}^c}(C_k)P''_{C_{j,k}^c}(C_k)
\end{matrix}
\end{align}
where the signs are chosen appropriately such that the expression belongs to~$\pauli$, as well as Clifford operators~$K_{C^c_{j,k}},L_{C^c_{j,k}}\in\cliff$ such that~$(C,\cC_n(C))\in\telepR$ if and only if
\begin{align}
\left|
\tr\left(C_j
K_{C^c_{j,k}}\tilde{F}_{C^c_{j,k}}(C_j)C_kL_{C^c_{j,k}}\tilde{G}_{C^c_{j,k}}(C_k)
\right)
\right|>0\ . \label{eq:cjkdefinitionone}
\end{align}

Now consider the probability~$p_{|C^c_{j,k}}$  that, conditioned on a fixed choice of~$C^c_{j,k}\in\cliff^{n-2}$, 
the circuit~$\cC_n$ produces a valid solution to
a randomly chosen instance from the set of $n$-tuples of Cliffords wich coincide with~$C^c_{j,k}$ in entries~$\ell\not\in\{j,k\}$, and are arbitrary otherwise (i.e., in the entries $j$ and $k$). By~\eqref{eq:cjkdefinitionone}, the probability~$p_{|C^c_{j,k}}$ is given by
\begin{align}
    p_{|C^c_{j,k}}&=\frac{1}{|
    \cliff|^2}\left|\left\{
    (U,V)\in\cliff^2\ :\ 
\left|
\tr\left(U
K_{C^c_{j,k}}\tilde{F}_{C^c_{j,k}}(U)VL_{C^c_{j,k}}\tilde{G}_{C^c_{j,k}}(V)
\right)
\right|>0    
    \right\}\right|\\
&=\frac{1}{|
    \cliff|^2}\left|\left\{
    (U,V)\in\cliff^2\ :\ 
\left|
\tr\left(U
\tilde{F}_{C^c_{j,k}}(UK_{C^c_{j,k}}^{-1})V\tilde{G}_{C^c_{j,k}}(VL^{-1}_{C^c_{j,k}})
\right)
\right|>0    
    \right\}\right|\ ,\label{eq:lowerboundclffm}
        \end{align}
        where the last line is obtained by variable substition.

By definition of $p_{|C^c_{j,k}}$
and assumption~\eqref{eq:assumptioncontradiction}, we have 
\begin{align}
\ExpE_{C^c_{j,k}}[p_{|C^c_{j,k}}]&=
        \Pr_{C\in\cliff^n}    \left[(C,\cC_n(C))\in\telepR\right] >\frac{35}{36}
        \end{align}
        for a uniformly random choice of the $(n-2)$-tuple $C^c_{j,k}\in\cliff^{n-2}$. This shows that there is at least one choice of $C^c_{j,k}$ such that $p_{|C^c_{j,k}}>\frac{35}{36}$. In particular, for such a choice, we obtain
        \begin{align}
            \frac{1}{|
    \cliff|^2}\left|\left\{
    (U,V)\in\cliff^2\ |\ 
\left|
\tr\left(U
\tilde{F}_{C^c_{j,k}}(UK_{C^c_{j,k}}^{-1})V\tilde{G}_{C^c_{j,k}}(VL^{-1}_{C^c_{j,k}})
\right)
\right|>0    
    \right\}\right| &> \frac{35}{36}\label{eq:fivexiseq}
        \end{align}
        by Eq.~\eqref{eq:lowerboundclffm}. Defining the functions
        \begin{align}
            \begin{matrix}
                F : & \cliff & \rightarrow & \pauli\\
                & U & \mapsto & \tilde{F}_{C^c_{j,k}}(UK_{C^c_{j,k}}^{-1})
            \end{matrix}\qquad\textrm{ and }\qquad 
                        \begin{matrix}
                G : & \cliff & \rightarrow & \pauli\\
                & V & \mapsto & \tilde{G}_{C^c_{j,k}}(VL_{C^c_{j,k}}^{-1})\ ,
            \end{matrix}
        \end{align}
Eq.~\eqref{eq:fivexiseq} becomes
\begin{align}
    \Pr_{(U,V)\in\cliff^2}\left[
    \left|\tr(UF(U)VG(V)|\right|>0\right]>\frac{35}{36}\ .
\end{align}
The existence of two such functions~$F,G$ contradicts Lemma~\ref{lem:mainCliff}. This concludes the proof of the theorem.
\end{proof}

\section{Towards proving an advantage against~$\AC^0$: variants of the quantum advantage against $\NC^0$}
In this section, we prove a few auxiliary results extending the  results for $\NC^0$-circuits  from the previous section. These are needed to show the $\AC^0$-circuit size lower bound we prove by restriction based methods~\cite{HastadThesis,Beame1994ASL} in Section~\ref{sec:AC0}.
In particular, to prepare for the proof that the possibilistic simulation of the gate-teleportation circuit is not solvable by an $\AC^0$ circuit of polynomial size, we show here that the gate-teleportation problem cannot be solved by an $\NC^0$ circuit even when some of the inputs of the $\NC^0$ circuit are fixed.

\subsection{Average-case hardness of~$\restrictedtelepR$ for $\NC^0$-circuit\label{sec:averagecasehardnessrestrict}}

\begin{figure}
\centering
\begin{tikzpicture}
\centering
 \foreach \y in {0.5,...,6.5} {
\draw[black, fill=black] (1,\y) circle (1pt);
\draw[black, fill=black] (2,\y) circle (1pt);
 }

\draw[double, double distance=1pt] (1,0.5) -- (0,0.5);
\draw[double, double distance=1pt] (1,1.5) -- (0,1.5);
\draw[double, double distance=1pt] (1,2.5) -- (-3,2.5);
\draw[double, double distance=1pt] (1,3.5) -- (0,3.5);
\draw[double, double distance=1pt] (1,4.5) -- (0,4.5);
\draw[double, double distance=1pt] (1,5.5) -- (-3,5.5);
\draw[double, double distance=1pt] (1,6.5) -- (0,6.5);

\draw[double, double distance=1pt] (2,0.5) -- (4,0.5);
\draw[double, double distance=1pt] (2,1.5) -- (4,1.5);
\draw[double, double distance=1pt] (2,2.5) -- (4,2.5);
\draw[double, double distance=1pt] (2,3.5) -- (4,3.5);
\draw[double, double distance=1pt] (2,4.5) -- (4,4.5);
\draw[double, double distance=1pt] (2,5.5) -- (4,5.5);
\draw[double, double distance=1pt] (2,6.5) -- (4,6.5);
\node at (1,7.25) {\textbf{in}};
\node at (2,7.25) {\textbf{out}};
\node at (2.5,8.3) {$\cC_n$};

 \foreach \y in {0,...,6} {
\node[anchor=west] at (4,6.5-\y) {$P_{\y}$};
}

\draw[black,thick] (0.25,0) rectangle (2.75,8);

\draw[black,thick] (-2.5,-0.5) rectangle (3.5,9);
\node at (3,9.3) {$\cC_n\restriction_\xi$};

\node[anchor=east] at (0,6.5) { $\xi(0)=C_0$};
\node[rectangle, fill=white, anchor=east] at (0,5.5) { $\xi(1)=*$};
\node[anchor=east] at (0,4.5) { $\xi(2)=C_2$};
\node[anchor=east] at (0,3.5) { $\xi(3)=C_3$};
\node[rectangle, fill=white, anchor=east] at (0,2.5) { $\xi(4)=*$};
\node[anchor=east] at (0,1.5) { $\xi(5)=C_5$};
\node[anchor=east] at (0,0.5) { $\xi(6)=C_6$};

\node[anchor=east] at (-3,5.5) { $D_0$};
\node[anchor=east] at (-3,2.5) { $D_1$};
\end{tikzpicture}

\caption{Schematic representation of the effect of a restriction $\xi:\mathbb{Z}_n\to\cliff\cup\{*\}$ on the circuit $\cC_n$. The inputs $C_i$ are fixed while the inputs $D_i$ are active; it gives rise to the restricted circuit $\cC_n\restriction_\xi$}.
\label{fig:rr}
\end{figure}
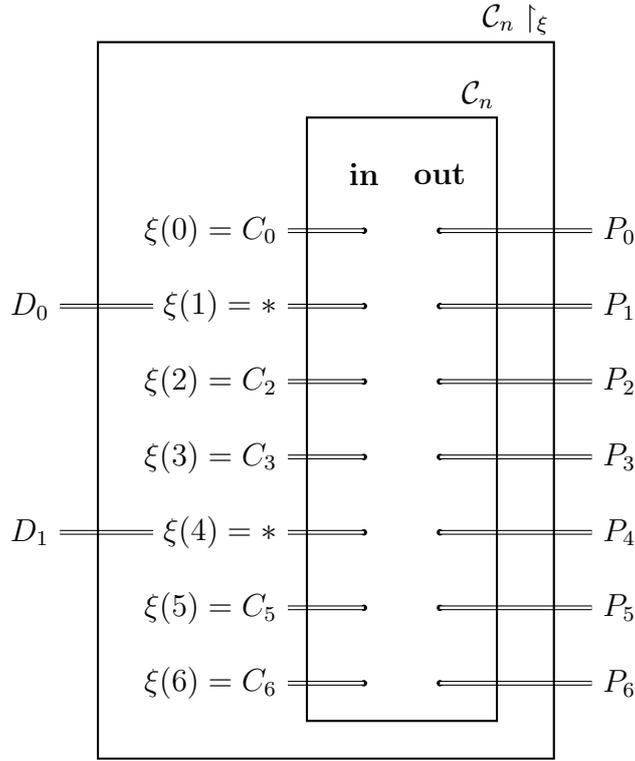

The first strengthening of the result for $\NC^0$-circuits from the previous section concerns the problem of possibilistically simulating gate-teleportation circuit with some fixed inputs.  To formalize this, we define a \emph{restriction} of $n$ Clifford inputs as a function $\xi:\mathbb{Z}_n\rightarrow \cliff\cup\{*\}$ which fixes some Clifford elements and leaves the remaining inputs marked by ``$*$'' active (or free).
Let $m=|\xi^{-1}(\{*\})|$ be the number of active inputs. We use $\xi^{-1}(\{*\})=\{\iota(0)<\iota(1)<\ldots<\iota(m-1)\}\subseteq\mathbb{Z}_n$ to denote the active inputs. The application of restriction $\xi$ to a circuit $\cC$ is defined as
\begin{align}
\begin{matrix}
    \cC_n\restriction_\xi\colon & \cliff^m &\to & \pauli^n\\
     &(x_0,\ldots,x_{m-1}) & \mapsto &\cC(x'_0,\ldots, x'_{n-1})\ ,
\end{matrix}
\end{align}
where for $j\in\mathbb{Z}_m$
\begin{align}
    x'_j=
    \begin{cases}
        x_{\iota^{-1}(j)} & \text{if } \xi(j)=*\\
        \xi(j) & \text{ otherwise} \ .
    \end{cases}
\end{align}
See Fig.~\ref{fig:rr} for an illustration of a restriction applied to a classical circuit for of the single-qubit gate-teleportation problem. We define a restriction of Clifford inputs here, but restrictions can be naturally defined on any type of inputs. In Section~\ref{sec:NC0restrictedproblembitencoding} and Section~\ref{sec:fromac0tonc0} we will use restrictions that fixes bits and blocks of bits, i.e., strings.

We will use restriction to define simulation problem with some inputs fixed.
Let $\xi:\mathbb{Z}_n\to\cliff\cup\{*\}$ be a restriction. We define the restricted problem of possibilistic simulation of gate-teleportation circuit $\restrictedtelepR \subset \cliff^m\times \pauli^n$ as the possibilistic simulation of the quantum circuit $U_n^{\telep}\restriction_{\xi}$. Formally, let $D=(D_0,\ldots, D_{m-1})\in\cliff^m$ and let $P=(P_0,\ldots, P_{n-1})\in \pauli^n$, then
\begin{align}
    (D,P)\in R_{U^\telep_n}\restriction_\xi \qquad \text{ if and only if }\qquad
    (C,P)\in \telepR\ , 
\end{align}
where $C=(C_0,\ldots,C_{n-1})$ is such that for all $\ell\in\mathbb{Z}_n$
\begin{align}
    C_\ell=\begin{cases}
        \xi(\ell) \qquad &\textrm{ if }\qquad \xi(\ell)\not=*\\
        D_{\iota^{-1}(\ell)}\qquad &\textrm{ otherwise }\ .
    \end{cases}
\end{align}

We extend the bound for $\NC^0$-circuits for the problem $\telepR$ given in Theorem~\ref{thm:cteleportation} to the restricted gate-teleportation problem $\restrictedtelepR$ with some inputs fixed.
\begin{theorem}\label{thm:maintheoremrestrictednc0} 
Let $0<\gamma\le 1$ and $1/2<\delta \le 1$ be constants. 
Let $\xi:\mathbb{Z}_n\rightarrow \cliff\cup\{*\}$ be a 
restriction that keeps $m\ge\gamma n^\delta$ inputs active. Then any $\NC^0$-circuit~ $\cC_n:\cliff^m\rightarrow\pauli^n$ solves the restricted gate-teleportation problem $R_{\Utelep_n}\restriction_\xi$ with average probability at most 
\begin{align}
\Pr_{D\in\cliff^m}\left[
\left(D, \cC_n(D)\right)\in R_{\Utelep_n}\restriction_\xi
\right]< \frac{35}{36}\ \label{eq:averagessuccessNC0restrictedproblem}
\end{align}
for an instance $D$ chosen uniformly from the set~$\cliff^m$. 
\end{theorem}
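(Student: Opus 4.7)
The plan is to adapt the proof of Theorem~\ref{thm:cteleportation} to the restricted setting by first establishing a non-signaling lemma tailored to $\NC^0$-circuits with $n$ outputs but only $m\ge \gamma n^\delta$ active inputs, and then recycling the commutation/regrouping machinery of Fig.~\ref{fig:commutations} essentially verbatim. For the non-signaling step, observe that an $\NC^0$-circuit of constant depth~$d$ with fan-in bounded by~$K$ has, for each output Pauli $P_s$, a backward light cone $\cL^{\leftarrow}_{\cC_n}(P_s)$ of size at most $T := K^d = O(1)$. The number of pairs of active inputs $(D_j,D_k)$ whose forward light cones intersect is therefore at most $n\binom{T}{2}=O(n)$, whereas the total number of pairs of active inputs is $\binom{m}{2}\ge \binom{\gamma n^\delta}{2}=\Omega(n^{2\delta})$. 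Since $2\delta>1$, for $n$ large enough there exist two active inputs $D_j,D_k$ with $\cL^{\rightarrow}_{\cC_n}(D_j)\cap\cL^{\rightarrow}_{\cC_n}(D_k)=\emptyset$.

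Assuming, for contradiction, that \eqref{eq:averagessuccessNC0restrictedproblem} holds with ``$\geq$'' in place of ``$<$'', let $j',k':=\iota(j),\iota(k)\in\mathbb{Z}_n$ denote the positions in the full teleportation circuit associated with these two non-signaling active inputs. The condition $(D,\cC_n(D))\in R_{\Utelep_n}\restriction_\xi$ is membership $(C,\cC_n(D))\in\telepR$ of the tuple $C$ obtained by inserting $\xi(\ell)$ at the fixed positions and $D_s$ at the active positions; equivalently, the trace appearing in~\eqref{eq:traceteleprestrictionstep2} (with $(j,k)$ replaced by $(j',k')$) is nonzero. Steps (a)--(f) of Fig.~\ref{fig:commutations} apply without modification: the non-signaling property of $D_j,D_k$ ensures that each output Pauli depends on at most one of $C_{j'},C_{k'}$, so Paulis can be grouped into three factors, commuted past the intervening Cliffords, and finally combined to yield the equivalent condition
\begin{align}
\left|\tr\!\bigl(C_{j'}\,K\,\tilde F(C_{j'})\,C_{k'}\,L\,\tilde G(C_{k'})\bigr)\right|>0,
\end{align}
where $\tilde F,\tilde G:\cliff\to\pauli$ and $K,L\in\cliff$ depend only on the restriction $\xi$ and on the remaining active inputs $D^c_{j,k}:=(D_s)_{s\in\mathbb{Z}_m\setminus\{j,k\}}$, but not on $(D_j,D_k)=(C_{j'},C_{k'})$. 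The fixed inputs imposed by $\xi$ simply merge into the Clifford context, playing the same role as $C^c_{j,k}$ in the unrestricted proof.

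Conditioning on $D^c_{j,k}$, defining $F(U):=\tilde F(UK^{-1})$ and $G(V):=\tilde G(VL^{-1})$, and averaging over $D\in\cliff^m$, the assumed success lower bound becomes
\begin{align}
\ExpE_{D^c_{j,k}}\!\left[\Pr_{(U,V)\in\cliff^2}\!\left[\tr\!\bigl(U\,F_{D^c_{j,k}}(U)\,V\,G_{D^c_{j,k}}(V)\bigr)\neq 0\right]\right]\ge\frac{35}{36}.
\end{align}
Hence some value of $D^c_{j,k}$ yields a pair $F,G:\cliff\to\pauli$ for which the inner probability strictly exceeds $35/36$, in direct contradiction with Lemma~\ref{lem:mainCliff}, which guarantees the trace vanishes on at least a $1/36$ fraction of Clifford pairs. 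The main obstacle is the counting argument for non-signaling: because the circuit now has asymmetric input/output dimensions ($m$ inputs vs.\ $n$ outputs), one needs $\binom{m}{2}\gg n$, which is precisely where the hypothesis $\delta>1/2$ is used and is tight for this strategy. Everything downstream is a faithful transcription of the proof of Theorem~\ref{thm:cteleportation}.
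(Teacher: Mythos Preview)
Your proposal is correct and follows essentially the same route as the paper: establish a non-signaling pair among the active inputs, then rerun the commutation/regrouping argument of Theorem~\ref{thm:cteleportation} with the fixed Cliffords from~$\xi$ absorbed into the ``context,'' and conclude via Lemma~\ref{lem:mainCliff}. The only substantive difference is in how you obtain the non-signaling pair. The paper (Lemma~\ref{lem:nosignalingbiasmain}) argues by averaging: the total number of input--output dependency edges is at most $\ell n$, so some input has forward light cone of size at most $\ell n/m$, and then $\cL^\leftarrow(\cL^\rightarrow(j))$ has size at most $\ell^2 n/m < m$ for large~$n$. You instead count bad pairs directly: at most $n\binom{T}{2}$ pairs of inputs share an output, versus $\binom{m}{2}=\Omega(n^{2\delta})$ total pairs. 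Both arguments are standard, both require exactly $\delta>1/2$, and yours is arguably the more direct of the two. One cosmetic point: from the contradiction hypothesis $\geq 35/36$ you only get that some conditional probability is $\geq 35/36$, not strictly greater; but Lemma~\ref{lem:mainCliff} gives only $\leq 35/36$ on the other side, so strictly speaking neither your argument nor the paper's handles the equality case --- this is a harmless boundary issue shared by both proofs.
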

The proof of this theorem follows the proof of Theorem~\ref{thm:cteleportation} and also relies on a light-cone argument. There is one important difference we need to consider. The $\NC^0$-circuit solving the restricted problem has an imbalanced number of inputs and outputs. We show that if such an imbalance is ``reasonable'' the light-cone argument leading to non-signaling inputs still holds. We capture this in the following lemma.

\begin{lemma}\label{lem:nosignalingbiasmain} Let $0<\gamma\le 1$ and $1/2 <\delta\le 1$. Let $\cC_n$ be an $\NC^0$-circuit with constant depth $d$ and bounded fan-in $K$, and with $m\ge\gamma n^\delta$ inputs and $n$ outputs (the inputs and the outputs can be bits or have higher alphabet size).
Then there exist two inputs $j,k\in \mathbb{Z}_n$ with non-intersecting light cones, i.e.,
\begin{align}
    \cL^\rightarrow_{\cC_n} (j)\cap \cL^{\rightarrow}_{\cC_n} (k)=\emptyset\ .
\end{align}
\end{lemma}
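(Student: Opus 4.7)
The plan is a double-counting argument that compares the number of input-pairs whose forward light cones intersect against the total number of input-pairs. The key idea is that, while bounded fan-in does not directly control forward light cones, it does control backward light cones, and the two notions are tied together by a simple combinatorial identity.

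First I would observe that constant depth $d$ together with bounded fan-in $K$ immediately bounds each backward light cone: tracing back from any output $P_s$ through at most $d$ layers of a DAG with in-degree at most $K$ reaches at most $K^d$ input nodes, and non-trivial dependence can only refine structural reachability, so $|\cL^\leftarrow_{\cC_n}(P_s)|\leq K^d$ for every $s\in\mathbb{Z}_n$. Next, I would use the elementary equivalence that two inputs $j,k$ have intersecting forward light cones if and only if some output $P_s$ has both $j$ and $k$ in its backward light cone, i.e.\ $\{j,k\}\subseteq \cL^\leftarrow_{\cC_n}(P_s)$. Summing over outputs, the number of unordered pairs $\{j,k\}$ with intersecting forward light cones is at most
\[
\sum_{s\in\mathbb{Z}_n}\binom{|\cL^\leftarrow_{\cC_n}(P_s)|}{2}\ \leq\ n\binom{K^d}{2}\ =\ O(n),
\]
where the $O(\cdot)$ constant depends only on the constants $d$ and $K$.

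On the other hand, the total number of unordered pairs of distinct inputs is
\[
\binom{m}{2}\ \geq\ \binom{\gamma n^\delta}{2}\ =\ \Omega(n^{2\delta}).
\]
Since the hypothesis $\delta>1/2$ gives $2\delta>1$, this quantity strictly dominates $n\binom{K^d}{2}$ for $n$ sufficiently large. Hence there must exist at least one pair of distinct inputs $j,k$ whose forward light cones are disjoint, as required. Small values of $n$ are inessential: the statement is asymptotic in nature (it is applied to an $\NC^0$ circuit family indexed by $n$), and for any fixed finite range of $n$ the claim can be absorbed by enlarging the relevant constants.

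The main subtlety — more of a conceptual step than a technical obstacle — is the perspective flip from forward to backward light cones, since it is only the backward cones that are directly controlled by the bounded fan-in hypothesis. Once this flip is made, everything reduces to the one-line polynomial comparison $O(n)$ vs.\ $\Omega(n^{2\delta})$ driven by $2\delta>1$, mirroring the spirit of Lemma~7 of~\cite{BGKT} but accommodating the imbalance $m\ll n$ between the number of inputs and outputs.
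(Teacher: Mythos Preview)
Your argument is correct. Both your proof and the paper's hinge on the same observation---bounded fan-in $K$ and constant depth $d$ bound every backward light cone by $K^d$---and both extract the conclusion from the inequality $\delta>1/2$. The organization differs slightly: the paper first uses an averaging argument to locate a single input $j$ with forward light cone of size at most $K^d n^{1-\delta}\gamma^{-1}$, then bounds $|\cL^\leftarrow(\cL^\rightarrow(j))|\le K^{2d} n^{1-\delta}\gamma^{-1}$ and picks any $k$ outside this set, which works because $\gamma n^\delta - K^{2d} n^{1-\delta}\gamma^{-1}>0$ for large $n$. Your direct pair-count (at most $n\binom{K^d}{2}$ intersecting pairs versus $\binom{m}{2}=\Omega(n^{2\delta})$ total pairs) is a cleaner packaging of the same comparison $n$ versus $n^{2\delta}$, and arguably more transparent; the paper's version has the minor advantage of exhibiting an explicit lower bound on the number of valid choices for $k$ once $j$ is fixed, but this is not used downstream.
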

\begin{proof}

We first bound the expected number of outputs in the forward light cone of a randomly chosen input.
Let the locality $\ell$ be defined as $\ell:=K^d$ where $d:=\cdepth(\cC_n)$ is the depth of $\cC_n$. Notice that the locality is the maximum number of inputs that can be in a backward light cone $\cL_{\cC_n}^\leftarrow(o)$ of any output $o\in\mathbb{Z}_n$ that is
\begin{align}
    |\cL_{\cC_n}^\leftarrow(o)|\le \ell \qquad \textrm{ for all }\qquad o\in\mathbb{Z}_n \ . \label{eq:localitybound}
\end{align}
Consider a bipartite graph $G=(V_I\cup V_O, E)$, where the set of vertices $V_I=\mathbb{Z}_m$ is the set of inputs and the set $V_O=\mathbb{Z}_n$ of outputs. There is an edge connecting vertices $v_i$ and $v_o$ whenever $v_o\in \cL^\rightarrow (v_i)$.
By~\eqref{eq:localitybound} the total number of edges is bounded by
\begin{align}
    |E|\le \ell\cdot n\ .
\end{align}
This allows us to bound the expected degree of an input node $j \in V_I$ picked uniformly at random. By definition of $G$ this is the expected size of the forward light cone of an input $j$ picked uniformly at random, i.e.,
\begin{align}
   \mathop{\mathbb{E}}_{j} [|\cL^\rightarrow(j)|]=\mathop{\mathbb{E}}_i [{\rm deg}(v_i)] \le \ell n^{1-\delta}\gamma^{-1}\ , \label{eq:expectedsizeforwardlightcone}
\end{align}
where we used that $m=|V_I|\le\gamma n^\delta$ by assumption.
From Eq.~\eqref{eq:expectedsizeforwardlightcone}, we conclude that there exists at least one input $j\in V_I$ for which \begin{align}
    |\cL^\rightarrow(j)|\le \ell n^{1-\delta}\gamma^{-1}\ .
\end{align}
We note that for any output $o$, $|\cL^\leftarrow(o)|\le \ell$, and that therefore for the previously mentioned input $j$ we have
\begin{align}
    |\cL^\leftarrow(\cL^\rightarrow(j))|\le \ell^2 n^{1-\delta}\gamma^{-1}\ .
\end{align}
We conclude the proof by picking any element $k$ from the set
\begin{align}
S=\{k\in V_I: \cL^\rightarrow_{\cC_n} (j)\cap \cL^{\rightarrow}_{\cC_n} (k)=\emptyset\}\ ,\end{align}
and by remarking that it is non-empty for sufficiently large $n$ because
\begin{align}
|S|\ge \gamma n^\delta -\ell^2 n^{1-\delta}\gamma^{-1}  \ . 
\end{align}
\end{proof}

\begin{proof}[Proof of Theorem~\ref{thm:maintheoremrestrictednc0}] 
The argument is almost the same as in the proof of Theorem~\ref{thm:cteleportation}, but with every input instance having some of the Clifford inputs fixed.
By definition of $\restrictedtelepR$ for any instance $D\in\cliff^m$ there is $C=(C_0,\ldots, C_{n-1})\in\cliff^n$ where for $\ell\in\mathbb{Z}_n$
\begin{align}
C_\ell=\begin{cases}
        \xi(\ell) \qquad &\textrm{ if }\qquad \xi(\ell)\not=*\\
        D_{\iota^{-1}(\ell)}\qquad &\textrm{ otherwise }\ ,
\end{cases}
\end{align}
such that $(D,\cC_n(D))\in\restrictedtelepR$ if and only if  $(C,\cC_n(D))\in\telepR$.

Let $\xi^{-1}(\{*\})=\{\iota(0)<\iota(1)<\ldots<\iota(m-1)\}$ be the active inputs of the restriction $\xi$.
Assume $j'<k'\in \mathbb{Z}_m$ are arbitrary but fixed and define $j:=\iota(j')$ and $k:=\iota(k')$.
In the proof of Theorem~\ref{thm:cteleportation} (Eqs.~\eqref{eq:traceteleprestriction} --~\eqref{eq:traceteleprestrictionstep2}) we derived a necessary and sufficient condition for $P\in\pauli^n$ to be the solution of $\telepR$ problem for instance $C$. Following this argument and using the definition of~$\restrictedtelepR$ for an instance~$D$ we similarly obtain 
\begin{align}
\left|\tr\left( 
\left(\prod_{u\in [k+1,\ldots,j]} C_u\right)
\left(\prod_{s\in[k,\ldots, j-1]} \tilde{P}_{s}\right) 
\left(\prod_{u\in [j+1,\ldots,k]}C_u\right)
\left(\prod_{t\in[j,\ldots, k-1]} \tilde{P}_{t} \right)\right)\right|>0\ ,
\end{align}
where $\tilde{P}_s$ and $\tilde{P}_t$ are defined in Eq.~\eqref{eq:paulicommutingkp1jm1}--\eqref{eq:paulitildelast}.

For the sake of contradiction, assume there is an $\NC^0$ circuit $\cC_n$ satisfying
\begin{align}
\Pr_{D\in \cliff^m}\left[
\left(D, \cC_n(D)\right)\in \restrictedtelepR
\right]\ge \frac{35}{36}\ . \label{eq:mainthmrestrictednc0assumption}
\end{align}
By Lemma~\ref{lem:nosignalingbiasmain} we have that any $\NC^0$ circuit $\cC_n$ with at least $m\ge\gamma n^\delta$ inputs, where $0<\gamma\le 1$ and $1/2 <\delta\le 1$, and $n$ outputs necessarily has two inputs $D_{j'}, D_{k'}$ for $j',k'\in\mathbb{Z}_m$ with the non-signalling property. 
By definition, these correspond to $C_j$ and $C_k$ in the instance $C$ of the $\telepR$ problem. By exactly following the light-cone argument below~\eqref{eq:cncjckprop} in the proof of Theorem~\ref{thm:cteleportation}, we obtain the same contradiction with Lemma~\ref{lem:mainCliff}.
\end{proof}

\subsection{From Clifford inputs to input strings}\label{sec:FromCliffordsToBits}
To establish our results for~$\AC^0$, it is more convenient to work with circuits taking bitstrings as input. Since $|\cliff|=24$, 
there is an injective map~$\iota:\cliff\rightarrow\{0,1\}^5$ that we fix throughout the following. Consider a map $\enc:\{0,1\}^5\rightarrow\cliff$ which 
satisfies 
\begin{align}
\begin{matrix}
    \enc:&\{0,1\}^5 & \rightarrow &\cliff\\
     & x & \mapsto & \enc(x):=\begin{cases}
         \iota^{-1}(x)\qquad &\textrm{ if } x\in \iota(\cliff)\\
         C_x &\textrm{ otherwise }
     \end{cases}
    \end{matrix}\label{eq:encodingmapiotacx}
    \end{align}
    for a family~$\{C_x\}_{x\in \{0,1\}^5\backslash \iota(\cliff)}\subset \cliff$ of Clifford group elements. We note such a map is surjective, and completely specified by~$\iota$ and the family~$\{C_x\}_{x\in\{0,1\}^5\backslash \iota(\cliff)}$. Correspondingly, we sometimes write~$\enc=\enc_{\{C_x\}_{x\in\{0,1\}^5\backslash \iota(\cliff)}}$.  We will denote by~$\cE$ the set of such maps (each one parametrized by the associated family of Clifford group elements), and call this the set of valid encodings.

In the following, we consider $n$-tuples $\enc=(\enc^{(0)},\ldots,\enc^{(n-1)})\in\cE^n$ of valid encoding maps. 
Such a tuple defines a map
\begin{align}
\begin{matrix}
\enc:& (\{0,1\}^5)^n & \rightarrow &  \cliff^n\\
    &x=(x^{(0)},\ldots,x^{(n-1)})& \mapsto &  \left(\enc^{(0)}(x^{(0)}),\ldots,\enc^{(n-1)}(x^{(n-1)})\right)\ .
    \end{matrix}
\end{align}
which we again denote by~$\enc$ by a slight abuse of notation.

We can now show that bounds on the average success probability over uniformly random sequences of Cliffords and uniformly random bitstrings encoding Cliffords are related.

\begin{lemma}\label{lem:clifftobitprob}
Let $\cR\subset \cliff^n\times \pauli^n$ be a relation.   Let  $F:\cliff^n\rightarrow\pauli^n$ and $\epsilon>0$ be arbitrary. Then there is an $n$-tuple~$\enc=(\enc^{(0)},\ldots,\enc^{(n-1)})\in\cE^n$ 
 of valid encoding maps such that 
 the composition~$F\circ \enc: (\{0,1\}^5)^n  \rightarrow   \pauli^n$
 satisfies 
\begin{align}
    \Pr_{x\in  (\{0,1\}^5)^n}\left[(\enc(x),(F\circ \enc)(x)))\in\cR\right] &> (1-\epsilon) \Pr_{C\in  \cliff^n}\left[(C,F(C))\in\cR\right] \ .
\end{align}
Here $x\in (\{0,1\}^5)^n$ and~$C\in \cliff^n$, respectively, are chosen uniformly at random. 
\end{lemma}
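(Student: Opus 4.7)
The plan is to use a probabilistic method: I will define an ensemble of random valid encodings $\enc=(\enc^{(0)},\ldots,\enc^{(n-1)})\in\cE^n$ and argue that, on average over this ensemble, feeding a uniformly random bitstring through $\enc$ induces the uniform distribution on $\cliff^n$. An averaging argument then produces a fixed encoding achieving essentially the same success probability as the Clifford-uniform case, which is much more than what is needed to beat the factor $(1-\epsilon)$.

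Concretely, first I would fix the injection $\iota\colon\cliff\to\{0,1\}^5$. For each coordinate $j\in\{0,\ldots,n-1\}$, I would then sample a random valid encoding $\enc^{(j)}=\enc_{\{C_x^{(j)}\}_{x\in\{0,1\}^5\setminus\iota(\cliff)}}$ by drawing the ``extra'' Cliffords $C_x^{(j)}\in\cliff$ independently and uniformly at random from $\cliff$, and making these draws independent across $j$. For a uniformly random $x^{(j)}\in\{0,1\}^5$, direct computation of the marginal distribution of $\enc^{(j)}(x^{(j)})$ gives, for any fixed $C\in\cliff$,
\begin{equation*}
\Pr\left[\enc^{(j)}(x^{(j)})=C\right]=\frac{1}{32}\cdot 1+\frac{8}{32}\cdot\frac{1}{|\cliff|}=\frac{1}{32}+\frac{1}{96}=\frac{1}{24},
\end{equation*}
where the first term accounts for the unique $y\in\iota(\cliff)$ with $\iota^{-1}(y)=C$ and the second term comes from the 8 bitstrings outside $\iota(\cliff)$, each of which is mapped to a uniformly random Clifford. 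By independence across coordinates, the random $n$-tuple $\enc(x)$ is therefore uniformly distributed on $\cliff^n$.

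Next I would take the expectation of the ``bitstring'' success probability over the random encoding ensemble:
\begin{equation*}
\mathbb{E}_{\enc}\!\left[\Pr_{x\in(\{0,1\}^5)^n}\!\left[(\enc(x),F(\enc(x)))\in\cR\right]\right]
=\Pr_{C\in\cliff^n}\!\left[(C,F(C))\in\cR\right],
\end{equation*}
simply because, under the joint distribution of $(x,\enc)$, the random variable $\enc(x)$ is uniform on $\cliff^n$ by the previous paragraph. Hence there must exist at least one fixed tuple of encodings $\enc\in\cE^n$ for which the bitstring success probability is at least as large as the Clifford success probability. This already yields a stronger statement than the claimed bound (no factor $(1-\epsilon)$ needed); the lemma follows immediately by noting that the strict inequality against $(1-\epsilon)\Pr_{C}[\,\cdot\,]$ is trivial whenever $\Pr_C[(C,F(C))\in\cR]>0$, and is vacuous otherwise.

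The argument has no real obstacle: the only step requiring care is the marginal-uniformity computation for $\enc^{(j)}(x^{(j)})$, which hinges on the surjectivity of $\enc^{(j)}$ and on the independence of the $C_x^{(j)}$'s. Everything else is an immediate application of the probabilistic method.
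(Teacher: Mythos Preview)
Your proof is correct and follows essentially the same approach as the paper: both define the same random ensemble of encodings (uniform independent choices of the ``extra'' Cliffords $C_x^{(j)}$), both establish that $\enc(x)$ is uniform on $\cliff^n$ under this ensemble, and both conclude via the probabilistic method. The only difference is in the final extraction step: the paper applies Markov's inequality to $1-p(\enc)$ to obtain the strict $(1-\epsilon)$-bound directly, whereas you use the simpler ``$\max\geq\text{mean}$'' argument to get $p(\enc)\geq q$ and then observe that this already beats $(1-\epsilon)q$ whenever $q>0$. Your route is slightly more elementary and in fact yields a marginally stronger conclusion; both proofs implicitly treat the $q=0$ case as degenerate.
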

\begin{proof}
We define a distribution over the set~$\cE^n$ of random~$n$-tuples $\enc$~of valid encoding maps as follows.
Consider random variables
\begin{align}\{C^{(j)}_x\}_{\substack{j\in\mathbb{Z}_n\\
x\in \{0,1\}^5\backslash \iota(\cliff)}}
\end{align} obtained by picking $C^{(j)}_x\in\cliff$ uniformly and independently at random for every $j\in \mathbb{Z}_n$ and every~$x\not\in \iota(\cliff)$. 
Then, for each $j\in \mathbb{Z}_n$, let $\cE^{(j)}=\cE_{\{C^{(j)}_x\}_{x\in \{0,1\}^5\backslash \iota(\cliff)}}$
be the valid encoding map defined by~\eqref{eq:encodingmapiotacx}.
Equivalently, this means that the $n$-tuple~$\enc=\{\enc^{(j)}\}_{j\in\mathbb{Z}_n}$ is a random variable with uniform distribution on~$\cE^n$.

    Now consider the probability taken over the choice of~$\enc$ and the choice of~$x\in (\{0,1\}^5)^n$ (chosen independently and uniformly) that the composition~$F\circ\enc$ provides a valid solution to the relation problem~$\cR$ on instance~$\enc(x)\in\cliff^n$. 
    Because  the induced distribution of the random variable~$\enc(x)$  is the uniform distribution on~$\cliff^n$, it follows that 
    \begin{align}
        \Pr_{\substack{
        \enc\in \cE\\
        x\in (\{0,1\}^5)^n
        }} \left[(\enc(x),(F\circ\enc)(x))\in\cR\right]&=\Pr_{C\in  \cliff^n}\left[(C,F(C))\in\cR \right]\ .\label{eq:identityinequality}
    \end{align}
    The claim then follows immediately from Markov's inequality: By~\eqref{eq:identityinequality} we have 
    \begin{align}
        \ExpE_{\enc} [p(\enc)]&\geq \Pr_{C\in  \cliff^n}\left[(C,F(C))\in\cR\right]\ ,
    \end{align}
    where
    \begin{align}
         p(\enc):=\Pr_{x\in (\{0,1\}^5)^n}\left[(\enc(x),(F\circ \enc)(x)))\in\cR\right]\ .
    \end{align}
Applying Markov's inequality~$\Pr[X\geq a]\leq \frac{\ExpE[X]}{a}$, $a>0$ to the random variable~$X:=1-p(\enc)$ 
and using the abbreviation~$q:=\Pr_{C\in  \cliff^n}\left[(C,F(C))\in\cR\right]$
gives 
\begin{align}
\Pr_{\enc}\left[1-p(\enc)\geq a\right] & \leq \frac{1-q}{a}\ .
\end{align}
With the choice 
\begin{align}
a&=1-(1-\eps)q
\end{align}
we obtain 
\begin{align}
\Pr_{\enc}\left[p(\enc)\leq (1-\eps)q\right] & \leq \frac{1-q}{(1-q)+\eps q}<1
\end{align}
for any $\epsilon \in (0,1)$. 
    This shows the existence of~$\enc\in\cE^n$ with the desired property, that is, $p(\enc)>(1-\eps)q$. 

\end{proof}

\subsection{Average-case hardness of $\restrictedtelepR$ with bit encoded input}\label{sec:NC0restrictedproblembitencoding}
In the previous section, we defined mappings between bitstrings and Cliffords and used blocks of $5$ bits to encode sequences of Clifford gates.
To show our $\AC^0$ circuit size lower bound we rely on a restriction based method which works with circuits having binary inputs and outputs. One of the building blocks is an extension of the $\NC^0$-result for the restricted relation problem $\restrictedtelepR$ to  the setting of circuits with binary inputs and corresponding restrictions fixing individual bits (rather than Clifford group elements). This is what we provide in this section. Specifically, we encode each Clifford by a bitstring~$\{0,1\}^n$ and consider restrictions of the form~$\xi:\mathbb{Z}^n\rightarrow \{0,1\}^5\cup \{*\}$.  We will then define an associated relation problem~$\telepR\restriction_{\enc(\xi)}$.

To show a result for $\NC^0$-circuits with bits (strings) as input we need a bound on the success probability of a bit encoded version of the generalized variant of the magic-square game with random Cliffords.
The following is a bit-string encoded version of Lemma~\ref{lem:mainCliff}.
\begin{lemma}\label{lem:mainCliff5bits}
Let $\enc^{(1)},\enc^{(2)}\in\cE$ be valid encodings. Let $F:\{0,1\}^5 \rightarrow\pauli$ and $G:\{0,1\}^5 \rightarrow\pauli$ be arbitrary functions. Then
\begin{align}
\Pr_{(x^{(1)},x^{(2)})\in (\{0,1\}^5)^2} \left[\tr\left(\enc^{(1)}(x^{(1)})F(x^{(1)})\enc^{(2)}(x^{(2)})G(x^{(2)})\right)=0\right]\ge \frac{1}{81}\ .
\end{align}
\end{lemma}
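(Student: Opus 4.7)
The plan is to reduce the bit-string version to the Clifford-input version already established in Lemma~\ref{lem:mainCliff}. Concretely, given any functions $F,G:\{0,1\}^5\to\pauli$ and any encodings $\enc^{(1)},\enc^{(2)}\in\cE$, I will \emph{choose representatives} of each Clifford under each encoding and use them to construct companion functions $F',G':\cliff\to\pauli$ to which the earlier lemma applies.

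More precisely, for $i\in\{1,2\}$ and every $U\in\cliff$, I pick an arbitrary element $\rho^{(i)}(U)\in (\enc^{(i)})^{-1}(U)$; such a preimage exists because each $\enc^{(i)}$ is surjective onto $\cliff$ (this is part of the definition of a valid encoding in Eq.~\eqref{eq:encodingmapiotacx}). I then define
\begin{align}
F'(U):=F(\rho^{(1)}(U)),\qquad G'(V):=G(\rho^{(2)}(V))\qquad\text{for all }(U,V)\in\cliff^2.
\end{align}
By Lemma~\ref{lem:mainCliff}, the set $\Gamma:=\{(U,V)\in\cliff^2 : \tr(UF'(U)VG'(V))=0\}$ has cardinality at least $|\cliff|^2/36 = 16$.

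The key observation is that for every $(U,V)\in\Gamma$, the particular pair $(x^{(1)},x^{(2)}):=(\rho^{(1)}(U),\rho^{(2)}(V))\in(\{0,1\}^5)^2$ automatically witnesses the bit-string version: indeed $\enc^{(1)}(x^{(1)})=U$, $\enc^{(2)}(x^{(2)})=V$ by construction, and $F(x^{(1)})=F'(U)$, $G(x^{(2)})=G'(V)$ by definition, so the trace vanishes. Moreover, the map $(U,V)\mapsto (\rho^{(1)}(U),\rho^{(2)}(V))$ is injective (it has the left inverse $(x^{(1)},x^{(2)})\mapsto(\enc^{(1)}(x^{(1)}),\enc^{(2)}(x^{(2)}))$), so these witness pairs are distinct.

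Consequently the number of pairs $(x^{(1)},x^{(2)})\in(\{0,1\}^5)^2$ annihilating the trace is at least $16$, giving a probability of at least $16/32^2=1/64\ge 1/81$, which is the claim. There is no real obstacle here; the only subtlety is verifying that the bound $|\cliff|^2/36$ in Lemma~\ref{lem:mainCliff}, inflated by the ratio $(|\cliff|/|\{0,1\}^5|)^2=(24/32)^2=9/16$, still clears the stated threshold $1/81$, which it does comfortably since $1/36\cdot 9/16 = 1/64 > 1/81$.
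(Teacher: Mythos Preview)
Your proof is correct and follows essentially the same route as the paper: both reduce to Lemma~\ref{lem:mainCliff} by choosing, for each Clifford~$U$, a distinguished preimage in~$\{0,1\}^5$ under the encoding map (the paper uses the canonical injection~$\iota$ and phrases it as conditioning on $x^{(i)}\in\iota(\cliff)$, whereas you allow an arbitrary section~$\rho^{(i)}$). Both arguments yield at least~$16$ witness pairs out of~$32^2$, hence probability at least~$1/64\ge 1/81$; your explicit $1/64$ is in fact slightly sharper than the paper's stated $1/81$.
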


\begin{proof}
First observe that
\begin{align}
\Pr_{(x^{(1)},x^{(2)})\in (\{0,1\}^5)^2} [(x^{(1)},x^{(2)})\in  (\iota (\cliff) )^2]=\frac{4}{9} \ .
\end{align}
The bound then follows by applying Lemma~\ref{lem:mainCliff}: We have
\begin{align}
&\Pr_{(x^{(1)},x^{(2)})\in (\{0,1\}^5)^2} \left[\tr\left(\enc^{(1)}(x^{(1)})F(x^{(1)})\enc^{(2)}(x^{(2)})G(x^{(2)})\right)=0\right]\\
\ge& \Pr_{(x^{(1)},x^{(2)})\in (\iota (\cliff) )^2} \left[\tr\left(\enc^{(1)}(x^{(1)})F(x^{(1)})\enc^{(2)}(x^{(2)})G(x^{(2)})\right)=0\right] \cdot \frac{4}{9}\\
 \ge&  \frac{1}{36}\cdot \frac{4}{9}= \frac{1}{81} \ .\qedhere
\end{align}
\end{proof}

The following is an ``encoded'' version of Theorem~\ref{thm:maintheoremrestrictednc0}. 
We consider a restriction of the form~$\xi:\mathbb{Z}^n\rightarrow \{0,1\}^5\cup \{*\}$, i.e., it associates either a $5$-tuple of bits or the symbol~$*$ to each~$i\in\mathbb{Z}_n$. We  call~$\xi^{-1}(\{*\})\subset\mathbb{Z}_n$ the set of active inputs. Using an encoding function~$\enc$, such a restriction~$\xi$ gives rise to a restriction which associates a Clifford or a symbol~$*$ to every~$i\in\mathbb{Z}_n$. By a slight abuse of notation, we denote this new restriction as~$\enc(\xi)$. It is defined as 
\begin{align}
\begin{matrix}
\enc(\xi):&\mathbb{Z}_n & \rightarrow & \cliff\cup \{*\}&\\
& i & \mapsto & \enc(\xi)(i)&:=\begin{cases}
        * \qquad &\textrm{ if } \xi(i)=*\\
        \enc(\xi(i)) &\textrm{ otherwise }\ 
    \end{cases}\qquad\textrm{ for every }i\in\mathbb{Z}_n\ .
    \end{matrix}
    \end{align}
We may then consider the associated restricted problem for the restriction~$\enc(\xi)$, that is, the problem~$\telepR\restriction_{\enc(\xi)}$. It is illustrated in Fig.~\ref{fig:teleprestrictionencxi}. We have the following statement:
\begin{figure}
\centering
\centering
\begin{tikzpicture}[scale=0.2]
\centering

\usetikzlibrary{decorations.pathreplacing}

\draw[semithick] (0,0) rectangle (8,33);
\node[] at (4,16) {$\Utelep_n$};

\draw[dashed, myred, thick] (-25,1) rectangle (-1.75,32);
\node[anchor=west,myred] at (-25,32.75) {$\enc(\xi)$};

\foreach \y in {1,5,3,4,6} {
\draw[double, double distance=1pt] (11,5*\y-1) -- (8,5*\y-1) ;
}

\foreach \y in {0,...,3} {
\node[anchor=west] at (11,34-5*\y-5) {$P_{\y}$};
}
\node[anchor=west] at (12,10) {$\vdots$};
\node[anchor=west] at (11,4) {$P_{n-1}$};

\foreach \y in {1,5,3,4,6} {
\draw[double, double distance=1pt] (0,5*\y-1) -- (-7,5*\y-1) ;
}

\foreach \y in {0,...,3} {
\node[anchor=east,rectangle,draw,minimum width=1.8cm,minimum height=0.6cm,fill=white] at (-7,34-5*\y-5) {$\enc^{(\y)}$};
}
\node[anchor=east] at (-10,10) {$\vdots$};
\node[anchor=east,rectangle,draw,minimum width=1.8cm,minimum height=0.6cm,fill=white] at (-7,4) {$\enc^{(n-1)}$};

\foreach \y in {0,...,3} {
\node[] at (-4,35.5-5*\y-5) {$C_{\y}$};
}
\node[] at (-4,5.5) {$C_{n-1}$};

\foreach \y in {1,5,3,4,6} {
\draw[double, double distance=1pt] (-16,5*\y-1) -- (-19,5*\y-1) ;
}

\node[anchor=east] at (-19,29) {$x_{0}$};
\node[anchor=east] at (-19,24) {$*$};
\node[anchor=east] at (-19,19) {$*$};
\node[anchor=east] at (-19,14) {$x_{2}$};
\node[anchor=east] at (-19,4) {$x_{n-1}$};

\draw[double, double distance=1pt] (-28,19) -- (-22,19) ;
\draw[double, double distance=1pt] (-28,24) -- (-22,24) ;

\draw[thick,decoration={brace},decorate,anchor=east]
  (-29,18) -- node {$\times N(\xi)$} (-29,25);

\end{tikzpicture}

\caption{The bit-string encoded version of
the restricted gate-teleportation problem.
Starting from a restriction~$\xi:\mathbb{Z}^n\rightarrow \{0,1\}^5\cup \{*\}$,
the task can be understood as that of producing, on a given input~$x\in (\{0,1\}^5)^m$, 
a Pauli~$P$ such that the pair~$(\widehat{\enc}(x),P)$ appears in the output distribution of the illustrated circuit. Here $\widehat{\enc}\in\cE^m$ is obtained by suitably restricting (according to active inputs of~$\xi$) an $n$-tuple~$\enc\in\cE^n$ of valid encodings of Cliffords.
\label{fig:teleprestrictionencxi}}
\end{figure}
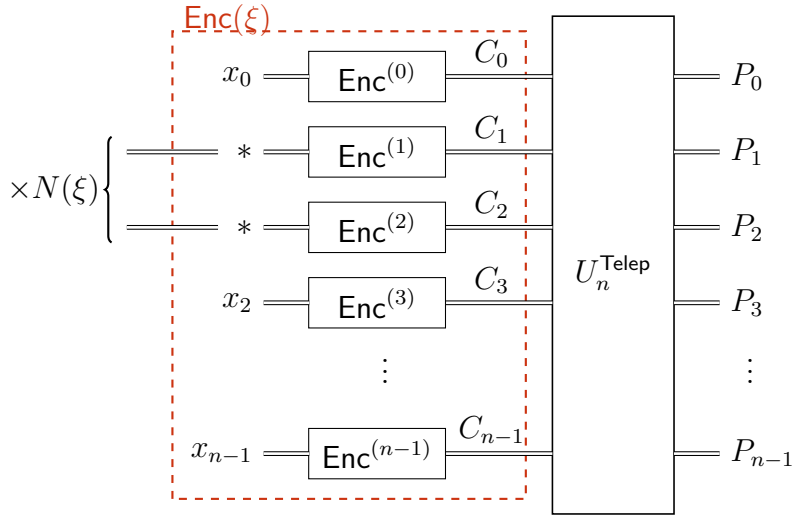

\begin{theorem}\label{thm:maintheoremrestrictednc0bitstring}
Let $0<\gamma\le 1$ and $1/2<\delta \le 1$ be constants. 
Let $\enc=(\enc^{(0)},\ldots,\enc^{(n-1)})\in\cE^n$ be an $n$-tuple of valid encoding maps.
Let $\xi:\mathbb{Z}_n\rightarrow \{0,1\}^5\cup\{*\}$ be a 
restriction that keeps $m\ge\gamma n^\delta$ inputs active. Assume that 
the set of active inputs is~$\xi^{-1}(\{*\})=\{i_0<\ldots < i_{m-1}\}$.
Then any $\NC^0$ circuit~$\cC:(\{0,1\}^5)^m\rightarrow \pauli^n$ satisfies
\begin{align}
\Pr_{x\in (\{0,1\}^5)^m}\left[
(\widehat{\enc}(x),\cC(x))\in \telepR\restriction_{\enc(\xi)}
\right] < \frac{80}{81}
\end{align}
for a uniformly randomly chosen input~$x=(x^{(0)},\ldots,x^{(m-1)})\in (\{0,1\}^5)^m$, where we introduced 
\begin{align}
    \widehat{\enc}(x)=(\enc^{(i_0)}(x^{(0)}),\ldots,\enc^{(i_{m-1})}(x^{(m-1)}))\in\cliff^m\qquad\textrm{ for }\qquad x\in (\{0,1\}^5)^m\ .
    \end{align}
\end{theorem}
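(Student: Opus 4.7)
The plan is to mimic the proof of Theorem~\ref{thm:maintheoremrestrictednc0} almost line-for-line, replacing Clifford-valued inputs with their bit-string encodings and appealing to Lemma~\ref{lem:mainCliff5bits} in place of Lemma~\ref{lem:mainCliff}. First I would apply Lemma~\ref{lem:nosignalingbiasmain} to the circuit $\cC$, which is stated for arbitrary input/output alphabets and therefore applies verbatim to the $5$-bit inputs of $\cC$. Since $m\ge \gamma n^\delta$, this produces two active positions $j'<k'\in\mathbb{Z}_m$ whose forward light cones are disjoint. Let $j=i_{j'}$ and $k=i_{k'}$ be the corresponding positions in $\mathbb{Z}_n$, and let $C_\ell=\enc^{(\ell)}(\xi(\ell))$ for $\ell\not\in\{i_0,\ldots,i_{m-1}\}$, while $C_{i_{\ell'}}=\enc^{(i_{\ell'})}(x^{(\ell')})$ for the active positions.

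Second, I would run through the algebraic rewriting from Eq.~\eqref{eq:traceteleprestriction}--\eqref{eq:Aexpressionm} in the proof of Theorem~\ref{thm:cteleportation} unchanged: using cyclicity of the trace to isolate $C_j$ and $C_k$, commuting the output Paulis through the remaining Cliffords to the right, and regrouping them by which of the three disjoint sets (forward light cone of $x^{(j')}$, forward light cone of $x^{(k')}$, or neither) the corresponding output belongs to. This yields the equivalent characterization
\begin{align}
(\widehat{\enc}(x),\cC(x))\in \telepR\restriction_{\enc(\xi)}\quad\Longleftrightarrow\quad \left|\tr\!\left(\enc^{(j)}(x^{(j')})\, K\, \tilde F(x^{(j')})\, \enc^{(k)}(x^{(k')})\, L\, \tilde G(x^{(k')})\right)\right|>0,
\end{align}
where $K,L\in\cliff$ and the Pauli-valued functions $\tilde F,\tilde G:\{0,1\}^5\to\pauli$ depend only on the bit-strings at positions $\mathbb{Z}_m\setminus\{j',k'\}$ (this is exactly what the non-signaling property~\eqref{eq:cncjckprop} buys us).

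Third, I would argue by contradiction: suppose the success probability exceeded $80/81$. By averaging over the bit-strings at positions $\mathbb{Z}_m\setminus\{j',k'\}$, there is a fixed assignment of those inputs for which the conditional success probability over uniform $(x^{(j')},x^{(k')})\in(\{0,1\}^5)^2$ still exceeds $80/81$. Freeze this assignment so that $K,L,\tilde F,\tilde G$ become specific fixed data. Now define Pauli-valued functions
\begin{align}
F(x):=K\,\tilde F(x)\,K^{-1},\qquad G(y):=L\,\tilde G(y)\,L^{-1},
\end{align}
(which are indeed in $\pauli$ because Cliffords normalize Paulis) and observe that, after commuting $K$ and $L$ through the adjacent Paulis and using cyclicity of the trace, the non-vanishing trace condition is equivalent to
\[
\left|\tr\!\left((L K)\,\enc^{(j)}(x^{(j')})\,F(x^{(j')})\,\enc^{(k)}(x^{(k')})\,G(x^{(k')})\right)\right|>0.
\]
The remaining step is to absorb the fixed Clifford $LK$ into the encodings. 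I would do this exactly as in the proof of Lemma~\ref{lem:mainCliff5bits}: restrict to the high-probability sub-event $\{(x^{(j')},x^{(k')})\in(\iota(\cliff))^2\}$, where $\enc^{(j)},\enc^{(k)}$ act as $\iota^{-1}$, and perform a Clifford-group variable substitution (multiplying by $LK$ on one side) to eliminate the leftover factor. Invoking Lemma~\ref{lem:mainCliff} on the resulting Clifford-input condition produces failure with probability at least $1/36$ within that event, and multiplying by the probability $4/9$ of the event gives an unconditional failure probability of at least $1/81$ --- contradicting the assumed success probability $>80/81$.

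The main obstacle is handling the extra fixed Cliffords $K,L$ that arise from the commutation step but are not present in the statement of Lemma~\ref{lem:mainCliff5bits}. They cannot be absorbed into ``valid encodings'' $\enc^{(j)},\enc^{(k)}$ since such encodings must act as $\iota^{-1}$ on $\iota(\cliff)$, and they cannot be absorbed into Pauli-valued $F,G$ since they are Clifford. The resolution I sketched -- conjugating them into Pauli factors, combining their product, and then running the Clifford-group substitution only inside the sub-event where bit-strings encode Cliffords -- mirrors the proof of Lemma~\ref{lem:mainCliff5bits} itself and introduces no new technical difficulty. All other steps are direct translations from the proofs of Theorem~\ref{thm:cteleportation} and Theorem~\ref{thm:maintheoremrestrictednc0}.
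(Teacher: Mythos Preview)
Your approach is correct and matches the paper's own (sketched) proof: follow Theorem~\ref{thm:maintheoremrestrictednc0} verbatim, using Lemma~\ref{lem:nosignalingbiasmain} on the $5$-bit input blocks to obtain the non-signaling pair and replacing the final appeal to Lemma~\ref{lem:mainCliff} by Lemma~\ref{lem:mainCliff5bits}. One minor algebraic slip: your claimed intermediate form $\tr\bigl((LK)\,\enc^{(j)}(x)F(x)\,\enc^{(k)}(y)G(y)\bigr)$ is not what the conjugation you describe actually produces (you get $\tr\bigl(L\,\enc^{(j)}(x)F(x)\,K\,\enc^{(k)}(y)G(y)\bigr)$ instead), but this does not matter---on the sub-event $(x,y)\in(\iota(\cliff))^2$ you can skip the conjugation entirely and perform the substitutions $U\mapsto UK^{-1}$, $V\mapsto VL^{-1}$ directly in $\tr(UK\tilde F\cdot VL\tilde G)$, exactly as in Eq.~\eqref{eq:lowerboundclffm}, before invoking Lemma~\ref{lem:mainCliff}.
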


\begin{proof} The proof is an immediate consequence of the proof of Theorem~\ref{thm:maintheoremrestrictednc0}. The only difference is that we use forward light cones of blocks of input bits encoding Clifford gates, similar to Lemma~\ref{lem:nosignalingbiasmain}.  
To reach a contradiction, we use use Lemma~\ref{lem:mainCliff5bits} instead of
Lemma~\ref{lem:mainCliff}. It gives a bound on the success probability of the bitstring-encoded variant of the  magic-square game with random Cliffords. 
\end{proof}

\section{Quantum advantage against $\AC^0$\label{sec:AC0}}
One of the most important methods for circuit size lower bounds going beyond the light-cone arguments are the so-called restriction-based methods and in particular switching lemmas~\cite{furst_parity_1984,Ajtai1983,Yao,Hastad86,HastadThesis,Rossman2017AnEP} (for a nice exposition see Beame's article \cite{Beame1994ASL}). Informally, a strong variant of this technique asserts that if we randomly fix all but $O(n^\delta)$ (for $\delta\le 1$) inputs of an $\AC^0$-circuit, then with overwhelming probability, it is possible to fix a few additional bits to ensure that the restricted circuit can be implemented by an $\NC^0$ circuit.

\newcommand*{\block}{\mathsf{block}}
\subsection{Bit-restrictions and block-restrictions}
We will be interested in circuits~$\cC^\block_n:(\{0,1\}^5)^n\rightarrow (\{0,1\}^2)^n$
whose input-strings~$x=(x^{(0)},\ldots,x^{(n-1)})$ consist of $5n$ bits, and are naturally grouped into blocks (strings) of length~$5$: Each five-tuple~$x^{(j)}\in\{0,1\}^5$ represents a Clifford group element~$C_j\in\cliff$, for $j\in\mathbb{Z}_n$. (The output can be interpreted similarly, but this is less important for this section.)

We will often interpret the circuit~$\cC^\block_n$ as a function $\cC_n:\{0,1\}^{n'}\rightarrow \{0,1\}^{m'}$ where $n'=5n$ and $m'=2n$. (Here we omit the superscript~$\block$ to distinguish this interpretation from the case where we block the of input bits into groups of~$5$.)
We may  may then consider a ``standard'' restriction~$\xi:\mathbb{Z}_{n'}\rightarrow\{0,1,*\}$ of the input bits. Applying such a restriction to the circuit~$\cC_n$, i.e., considering the circuit~$\cC_n\restriction_\xi$ (defined analogously as in Section~\ref{sec:averagecasehardnessrestrict} but for individual bits) may, however, not be compatible with the actual block-structure of the input of~$\cC_n$.  In the encoded bit representation of Clifford group elements, a general restriction can mean that individual Clifford inputs are partially fixed (rather than being completely free or completely fixed), i.e., the restriction may mean that we have to consider a subset of~$\cliff$ in some entries. 

Nevertheless, it will be convenient in this section to work with restrictions~$\xi:\mathbb{Z}_{n'}\rightarrow\{0,1,*\}$ of individual bits. Let us call~$\xi$ a {\em block restriction} if it is compatible with the block structure. That is, we say that
$\xi:\mathbb{Z}_{5n}\rightarrow \{0,1,*\}$ a block restriction if and only if 
\begin{align}
    (\xi(5j),\xi(5j+1),\ldots,\xi(5j+4))\in \{0,1\}^5\cup \{(*,*,*,*,*)\}\qquad\textrm{ for every }\qquad j\in\mathbb{Z}_n\ .\label{eq:blockfixingrestrictiondefintion}
\end{align}
Condition~\eqref{eq:blockfixingrestrictiondefintion} expresses the fact that for every block of $5$~input bits, the restriction assigns  a $5$-bit string  or leaves the entire block (i.e., each bit in the block) free (i.e., active). We will interchangeably refer to a block restriction by the function~$\xi:\mathbb{Z}_{5n}\rightarrow\{0,1,*\}$ (satisfying~\eqref{eq:blockfixingrestrictiondefintion}) or the associated function~$\xi^\block:\mathbb{Z}_5\rightarrow \{0,1\}^5\cup \{*^\block\}$. 

Naturally, any block restriction 
$\xi:\mathbb{Z}_{5n}\rightarrow \{0,1,*\}$
determines a function 
\begin{align}
\begin{matrix}
\xi^\block:&\mathbb{Z}_{n}&\rightarrow &\{0,1\}^5\cup \{*^\block\}&\\
 & j & \mapsto & \xi^\block(j):=&
 \begin{cases}
  *^\block &\textrm{ if } \xi(j)=(*,*,*,*,*)\\
  (\xi(5j),\xi(5j+1),\ldots,\xi(5j+4))\qquad&\textrm{ otherwise }
 \end{cases}
 \end{matrix}
\end{align}
as considered in Section~\ref{sec:NC0restrictedproblembitencoding}. To connect to the corresponding result, our analysis will involve, in particular, the question of how likely it is that a randomly chosen restriction~$\xi:\mathbb{Z}_{n'}\rightarrow\{0,1,*\}$  is a block restriction. We will also need to study the number~$N(\xi^\block)=|(\xi^\block)^{-1}\{*^\block\}|$ of free blocks in a block restriction arising in this way.

\subsection{From an $\AC^0$ circuit to an $\NC^0$ circuit by random restrictions}\label{sec:fromac0tonc0}
Let $p_*\in [0,1]$. We consider the distribution~$\mathsf{R}_{p_*}$ over random restrictions~$\rho:\mathbb{Z}_n\rightarrow \{0,1,*\}$ defined as follows: To draw $\rho\sim\mathsf{R}_{p_*}$ according to~$R_{p_*}$, independently set
\begin{align}
	\rho(\ell)&=\begin{cases}
		*\qquad\textrm{ with probability } & p_*\\
		0 & (1-p_*)/2\\
		1 & (1-p_*)/2 \ 
	\end{cases}\ 
\end{align}
for each input $\ell\in \mathbb{Z}_n$ at random.

Let us define concatenation of restrictions. Let $\rho : \mathbb{Z}_n\to \{0,1,*\}$ be restriction such that $N(\rho)=|\rho^{-1}(\{*\})|$ and $\rho^{-1}(*)=\{\iota(1)<\cdots<\iota(N(\xi))\}$. Let $\eta:\mathbb{Z}_{N(\rho)}\to \{0,1,*\}$ be another restriction. We define the concatenated restriction $\rho\restriction\eta$, i.e, first restricting by $\rho$ and subsequently restricting by $\eta$, as
\begin{align}
    \begin{matrix}
    \rho\restriction\eta:& \mathbb{Z}_n & \rightarrow &\{0,1,*\}&\\[0.5cm]
    &\ell&\mapsto & (\rho\restriction\eta)(\ell):=&\begin{cases}
        \rho(\ell) \qquad & \textrm{ if }\  \rho(\ell)\not=*\\
        \eta(\iota^{-1}(\ell))  \qquad &\textrm{ if }\  \rho(\ell)=* \  \textrm{ and }\  \eta(\iota^{-1}(\ell))\not=*\\
        * \qquad &\textrm{ otherwise }\ .
    \end{cases}\ .
\end{matrix}
\end{align}

The key result we build on here is the multi-switching lemma established in~\cite{BeneWattsKothariSchaefferTalAC0}, which is itself adapted from~\cite{Rossman2017AnEP} and extended to circuits outputting multiple bits. For any $\AC^0$-circuit~$\cC$, this  gives a bound on the probability that for a randomly chosen restriction~$\rho$, the restricted function~$\cC\restriction_\rho$  has a certain structure (namely, that it is a composition of a decision tree with tuples of decision trees computing output bitstring associated to each leaf). Here we do not need this  structural statement,  but only one consequence of it (which has also been discussed in~\cite{BeneWattsKothariSchaefferTalAC0}): the restricted circuit~$\cC\restriction_\rho$ can be restricted further by an additional restriction~$\eta$ to yield a circuit~$(\cC\restriction_\rho)\restriction_\eta$ which can be implemented by an $\NC^0$ circuit. While the additional restriction~$\eta$ is a function of~$\rho$ and the circuit~$\cC$, this dependence only specifies which additional input bits should be fixed but the values can be arbitrary and hence also random.

To state this concisely, suppose~$\cC\colon\{0,1\}^n\rightarrow \{0,1\}^m$  and~$\rho\colon \mathbb{Z}_n\rightarrow \{0,1,*\}$ are given. Let $N(\rho)=|\rho^{-1}(\{*\})|$
be the number of input bits that are free when the restriction~$\rho$ is applied to~$\cC$, i.e., the restricted circuit~$\cC\restriction_\rho$ is a function~$\cC\restriction_\rho:\{0,1\}^{N(\rho)}\rightarrow\{0,1\}^m$.
We then say that a restriction~$\eta:\mathbb{Z}_{N(\rho)}\rightarrow\{0,1,*\}$ fixes additional $t$~bits of~$\cC$
if $N(\rho)-t$~bits are active when applying this restriction to~$\cC\restriction_\rho$, i.e., if 
$(\cC\restriction_\rho)\restriction_\eta$ is a function $(\cC\restriction_\rho)\restriction_\eta:\{0,1\}^{N(\rho)-t}\rightarrow \{0,1\}^m$.

\begin{lemma}[Multi-switching lemma of~\cite{BeneWattsKothariSchaefferTalAC0}, paraphrased]\label{lem:multiswitching}
    Let $\cC:\{0,1\}^n\rightarrow\{0,1\}^m$ be an $\AC^0$ circuit of size $s=\csize(\cC)$ and depth $d=\cdepth(\cC)$. Let $q,t$ be arbitrary parameters. Set
    \begin{align}
        p_*:=\frac{1}{m^{1/q}O(\log s)^{d-1}}\ .
    \end{align}
    Suppose a random restriction~$\rho:\mathbb{Z}_n\rightarrow \{0,1,*\}$ is chosen according to~$\rho\sim\mathsf{R}_{p_*}$. 
Then the following holds  with probability at least~$1-s2^{-t}$   taken over the choice of~$\rho$. 
There is a subset~$T\subset \rho^{-1}(\{*\})$  of the $N(\rho)=|\rho^{-1}(\{*\})|$~remaining free input bits (after application of~$\rho$) of size at most~$|T|\leq 2t$ such that for any restriction~$\eta:\mathbb{Z}_{N(\rho)}\rightarrow \{0,1,*\}$ fixing exactly the inputs  belonging to~$T$, the restricted circuit $(\mathsf{C}\restriction_\rho)\restriction_\eta$ can be implemented by an $\NC^0$ circuit of depth at most~$q$.
\end{lemma}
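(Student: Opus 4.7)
The plan is to establish this multi-switching lemma by a Håstad-style switching lemma argument, adapted to multi-output circuits and iterated layer by layer through the $\AC^0$ circuit, following the approach of Rossman and its extension in~\cite{BeneWattsKothariSchaefferTalAC0}. The setup is to view $\cC$ as an alternating tree of AND and OR gates of depth $d$ with $s$ gates in total, and to analyse what happens under the random restriction $\rho\sim\mathsf{R}_{p_*}$ with the specified survival probability $p_*$.

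First, I would normalize the circuit so that the fan-in of every gate is bounded by $w = O(\log s)$. This can be done by a standard width-reduction step: any gate of fan-in larger than $w$ is forced to a constant under $\rho$ except with probability $s^{-O(1)} \cdot 2^{-t}$, so a union bound over the $s$ gates shows that with failure probability at most $(s/2) 2^{-t}$ the restricted circuit only contains gates of fan-in at most $w$. The key technical input is then the single-stage multi-switching lemma: given a collection of $s$ width-$w$ DNFs (or CNFs), a random restriction of survival probability $p$ admits, with probability at least $1 - s(O(pw))^{t'}$, a \emph{common partial decision tree} of depth at most $t'$ such that every individual DNF, conditioned on any leaf, becomes a shallow decision tree of depth $q$. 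This is the main obstacle; the standard proof uses a Razborov/Håstad canonical encoding to bound the number of "bad" restrictions.

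Given the single-stage multi-switching lemma, the strategy is to iterate it top-down through the $d-1$ alternations of the circuit. Concretely, with the parameter choice $p_* = 1/(m^{1/q} O(\log s)^{d-1})$, each layer contributes a multiplicative factor $(O(p_* w))^{t'}\leq (O(\log s))^{-t'}$; choosing the depth budget $t'$ per layer so that the cumulative depth is at most $2t$ and the cumulative failure probability per gate is at most $2^{-t}$, a single union bound over all $s$ gates yields the claimed failure probability $s\cdot 2^{-t}$. The queried bits of all these nested common partial decision trees accumulate into a single subset $T\subseteq \rho^{-1}(\{*\})$ of cardinality at most $2t$.

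Finally, for \emph{any} restriction $\eta$ that fixes exactly the bits in $T$ (to arbitrary values), each output of $(\cC\restriction_\rho)\restriction_\eta$ is, by construction, computable by a decision tree of depth at most $q$ in the remaining free variables. Such a decision tree is trivially implementable by a constant-fan-in circuit of depth $q$, i.e., an $\NC^0$ circuit of depth $q$. The main obstacle throughout is the probabilistic analysis of the multi-switching step (canonical encoding argument), and the bookkeeping required to ensure that the per-layer choices of depth parameters sum to $2t$ while the per-layer failure probabilities multiply to stay within the budget $s\cdot 2^{-t}$.
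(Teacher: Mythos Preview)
The paper does not prove this lemma at all: it is stated as a paraphrase of a result from~\cite{BeneWattsKothariSchaefferTalAC0} (itself building on Rossman~\cite{Rossman2017AnEP}) and is used as a black box. So there is no ``paper's own proof'' to compare against; the authors simply import the statement.

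That said, your sketch is broadly along the lines of how the multi-switching lemma is actually established in those references: one iterates a single-stage multi-switching lemma (for a family of bounded-width DNFs/CNFs, a random restriction yields a common partial decision tree of small depth such that each function becomes a depth-$q$ decision tree at every leaf) through the $d-1$ alternations, accumulating the queried coordinates into a set $T$ of size $O(t)$ and union-bounding the failure probability over the $s$ gates. One point worth tightening: your preliminary ``width-reduction'' step (forcing high-fan-in gates to constants) is not how the cited proofs proceed; the bottom-fan-in bound $w=O(\log s)$ arises instead from the first application of the switching lemma to the bottom two layers, not from a separate normalization. Also, the $m^{1/q}$ factor in $p_*$ comes specifically from handling the $m$ output bits simultaneously in the final common-decision-tree step, which your sketch mentions only implicitly. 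These are refinements rather than genuine gaps; the overall architecture you describe matches the literature the paper cites.
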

In our problem we deal with inputs that represent Clifford gates and outputs that represents Pauli operators. With a binary encoding, we represent the inputs Clifford gates by blocks of $5$~bits and the output Pauli operators by blocks of $2$~bits. That is, an $\AC^0$ circuit of interest have the block structure~$\cC_n:(\{0,1\}^{5})^n\rightarrow (\{0,1\}^2)^n$. We will assume that the circuit depth is a constant~$d=\cdepth(\cC_n)$. 

We will use the multi-switching lemma (Lemma~\ref{lem:multiswitching}) to show that for suitably chosen parameters, the following holds: There is a distribution~$P$ over block restrictions~$\xi^\block:\mathbb{Z}_n\rightarrow \{0,1\}^5\cup \{*^\block\}$ 
with the following properties, for  
a constant
$c(d)$ only depending on the circuit depth $d$ of $\cC_n$. 
With high probability over the choice of~$\xi^\block\sim P$, we have:
\begin{enumerate}[(i)]
    \item\label{it:propfirstdes} The number $N(\xi^\block)=|\xi^{-1}(\{*^\block\})|$ of active blocks is greater than $c(d)n^{3/4}$.
    \item\label{it:propseconddes} The block restricted circuit $ \cC\restriction_{\xi^\block}: (\{0,1\}^5)^{N(\xi^\block)}\rightarrow (\{0,1\}^2)^n$  is implementable by an~$\NC^0$-circuit of depth~$20d$.
\end{enumerate}
Furthermore, the distribution~$P$ satisfies the following, for a sample~$\xi^\block\sim P$ drawn according to~$P$.
\begin{enumerate}[(i)]\setcounter{enumi}{2}
    \item\label{it:propthirddes} for any subset~$S\subset \mathbb{Z}_n$, the distribution of~$\xi$ conditioned on~$\xi^{-1}(\{*^\block\})=S$
is uniform over the set of all block restriction~$\tilde{\xi}:\mathbb{Z}_n\rightarrow \{0,1\}^5\cup \{*^\block\}$ having $S=\tilde{\xi}^{-1}(\{*^\block\})$ as the set of active blocks. 
\end{enumerate}

\begin{theorem}\label{thm:ac0tonc0} Let $\cC_n:(\{0,1\}^{5})^n\rightarrow (\{0,1\}^2)^n$ be an $\AC^0$ circuit of size $s=\csize(\cC_n)$ and depth $d=\cdepth(\cC_n)$. Assume that
\begin{align}
s\le e^{n^{1/(20d)}}\ .\label{eq:sizeassumptioncircuitlowerbound}
\end{align}
Let $n$ be sufficiently large. Then
there is a probability distribution~$P$ over block restrictions
\begin{align}\xi^\block:\mathbb{Z}_n\rightarrow \{0,1\}^5\cup \{*^\block\}
\end{align}such that  $P$ satisfies~\eqref{it:propthirddes} and for $\xi^\block\sim P$ chosen at random, properties~\eqref{it:propfirstdes} and~\eqref{it:propseconddes} are satisfied with probability at least
\begin{align}
1-\exp(-c_2(d) n^{1/2})\ ,
\end{align}
where $c_2(d)$ only depends on~$d$. 
\end{theorem}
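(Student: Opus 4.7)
The plan is to invoke Lemma~\ref{lem:multiswitching} at the bit level on the $5n$-bit input of~$\cC_n$, and then to lift the resulting bit restriction to a block restriction by randomizing the values on blocks that end up only partially free. Concretely, I would instantiate Lemma~\ref{lem:multiswitching} with $m=2n$ outputs, target $\NC^0$ depth $q=20d$, and error parameter $t=\lceil n^{1/2}\rceil$. Using the size assumption~\eqref{eq:sizeassumptioncircuitlowerbound}, namely $\log s\le n^{1/(20d)}$, one computes
\begin{align}
p_*=\frac{1}{(2n)^{1/(20d)}\,O(\log s)^{d-1}}\ge\frac{c_1(d)}{n^{1/20}}
\end{align}
for some constant $c_1(d)>0$ depending only on~$d$, and the failure probability from the switching lemma is $s\,2^{-t}\le \exp\bigl(n^{1/(20d)}-n^{1/2}\ln 2\bigr)\le \exp(-c'(d)n^{1/2})$ for sufficiently large~$n$.

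Second, I would define the distribution~$P$ in two stages. Sample a bit restriction~$\rho\sim\mathsf{R}_{p_*}$; when $\rho$ lies in the good event of Lemma~\ref{lem:multiswitching}, let $T(\rho)\subset\rho^{-1}(\{*\})$ be the associated subset with $|T(\rho)|\le 2t$ (in the bad event, declare failure for this sample). Let $S(\rho)\subseteq\mathbb{Z}_n$ be the set of block indices $j$ such that all five bits $5j,5j+1,\ldots,5j+4$ lie in $\rho^{-1}(\{*\})\setminus T(\rho)$; set $\xi^\block(j)=*^\block$ for $j\in S(\rho)$, and independently, for every $j\notin S(\rho)$, draw a uniformly random string in $\{0,1\}^5$ and assign it to $\xi^\block(j)$. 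Property~\ref{it:propthirddes} is then immediate: conditionally on $S(\rho)=S$, the non-active blocks carry i.i.d.\ uniform $5$-bit labels, hence $\xi^\block$ is uniform over all block restrictions with active set~$S$. Property~\ref{it:propseconddes} follows because, in the good event of the switching lemma, $(\cC_n\restriction_\rho)\restriction_\eta$ is $\NC^0$ of depth~$q=20d$ for every bit restriction~$\eta$ fixing exactly the bits of $T(\rho)$; the circuit $\cC_n\restriction_{\xi^\block}$ is obtained by further fixing additional input bits (those in partially-free blocks and the non-$T$ bits of the blocks hit by $T$), and further restriction cannot increase $\NC^0$ depth.

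Third, for property~\ref{it:propfirstdes} I would use a Chernoff bound. Given $\rho\sim\mathsf{R}_{p_*}$, the count of fully-free blocks $N^\rho_{\mathrm{full}}:=|\{j\in\mathbb{Z}_n:\rho(5j+k)=*\text{ for all }k\in\{0,\ldots,4\}\}|$ is a sum of $n$ independent $\mathrm{Bernoulli}(p_*^5)$ variables with mean at least $c_1(d)^5\,n^{3/4}$. A standard Chernoff bound gives $\Pr\bigl[N^\rho_{\mathrm{full}}<\tfrac{1}{2}c_1(d)^5 n^{3/4}\bigr]\le\exp(-\Omega(n^{3/4}))$. Since $|T(\rho)|\le 2t\le 2n^{1/2}$ removes at most $2n^{1/2}$ fully-free blocks from $S(\rho)$, one obtains $N(\xi^\block)=|S(\rho)|\ge c(d)\,n^{3/4}$ for a suitable $c(d)>0$ and large~$n$. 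A union bound over the multi-switching failure event and the Chernoff deviation yields total failure probability at most $\exp(-c_2(d)n^{1/2})$, as claimed.

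The main obstacle is the mismatch between the bit-level guarantee provided by Lemma~\ref{lem:multiswitching} and the block structure required by property~\ref{it:propthirddes}. Turning $\rho$ into a block restriction must not destroy the $\NC^0$ structure, while retaining enough fully-$*$ blocks to meet the $n^{3/4}$ threshold; the $5$-bit blocks cost a factor of $p_*^{-5}$ in the expected block density, which is precisely what dictates the choice $q=20d$ (making $p_*^5\ge n^{-1/4}$) and pins down $n^{3/4}$ as the natural block count. A subsidiary subtlety is ensuring that the uniform randomization on non-active blocks is drawn independently of $\rho$, so that the conditional distribution in~\ref{it:propthirddes} is genuinely uniform; this is automatic by construction but is the reason one must sample the non-active blocks afresh rather than reusing any randomness from~$\rho$.
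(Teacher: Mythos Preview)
Your parameter choices ($q=20d$, $t\approx n^{1/2}$, Chernoff for the block count) are fine and close to the paper's, but the construction you propose for~$P$ has an internal inconsistency that breaks property~\eqref{it:propseconddes}.

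You define $\xi^\block$ by keeping the blocks in $S(\rho)$ active and, for every $j\notin S(\rho)$, drawing a \emph{fresh} uniform $5$-bit string (explicitly ``rather than reusing any randomness from~$\rho$''). You then argue that $\cC_n\restriction_{\xi^\block}$ is $\NC^0$ because it is ``obtained by further fixing additional input bits'' from $(\cC_n\restriction_\rho)\restriction_\eta$. These two claims are incompatible. If the non-active blocks are re-sampled, the bits that $\rho$ had already fixed may now carry different values, so $\xi^\block$ is \emph{not} a refinement of~$\rho$ at all; hence $\cC_n\restriction_{\xi^\block}$ is not obtained from $\cC_n\restriction_\rho$ by further restriction, and the multi-switching lemma gives you no control over it. The conclusion of Lemma~\ref{lem:multiswitching} is a statement about the specific function $\cC_n\restriction_\rho$ (which depends on the \emph{values} $\rho$ assigns, since the set $T(\rho)$ is determined by the restricted circuit), so changing those values invalidates the $\NC^0$ guarantee.

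The paper avoids this by keeping $\rho$'s values and layering fresh uniform bits only on the additional positions (the $T$-bits via~$\eta$, then the remaining bits of partially-fixed blocks via~$\tau$), so that $\xi=\tau\restriction\eta\restriction\rho$ is genuinely a refinement and~\eqref{it:propseconddes} follows. Your worry about~\eqref{it:propthirddes} is then handled by the observation that in $\mathsf{R}_{p_*}$ the $\{0,1\}$-values on the fixed positions are uniform and independent of the $*$-pattern, and every subsequently fixed bit (in $\eta$ and $\tau$) also receives an independent uniform value; this is exactly what is needed downstream (to identify the average over uniform $x\in(\{0,1\}^5)^n$ with the average over $\xi\sim P$ and uniform $y$ on the active blocks). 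Resampling the $\rho$-fixed bits is neither necessary nor permissible here.
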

\begin{proof}
We consider~$\cC_n$ as a circuit~$\cC:\{0,1\}^{n'}\rightarrow\{0,1\}^{m'}$ where
	\begin{align}
		n'&=5 n \ ,\\
		m'&=2 n
	\end{align}
	and apply the multi-switching lemma (Lemma~\ref{lem:multiswitching}) 
	with the following choice of parameters:
	\begin{align}
		q&=20d \ ,\\
		p_*&=\frac{1}{(m')^{1/q}\cdot (\log s)^{d-1}}=\frac{1}{(2n)^{1/(20d)}\cdot (\log s)^{d-1}}\ , \\
		t&=\frac{p_*^5 n'}{5}\cdot \frac{1}{5} = \frac{p_*^5 n}{5}\ .\label{eq:tchoice} 
	\end{align}
	This choice of parameters and the size assumption~\eqref{eq:sizeassumptioncircuitlowerbound} means that the probability~$p_*$ can be bounded from below by
	\begin{align}
		p_*\geq \frac{1}{(2n)^{1/(20d)}n^{\textfrac{d-1}{(20d)}}} =  \frac{1}{2^{1/(20d)}} \cdot  n^{-1/20}  \, , \label{eq:boundonstarp}
	\end{align}
	where we used that 
	\begin{align}
		(\log s)^{d-1} &\leq \left(n^{1/(20d)}\right)^{d-1}=n^{(d-1)/(20d)}  \, 
	\end{align}
	by Eq.~\eqref{eq:sizeassumptioncircuitlowerbound}.
	
	To estimate the probability~$1-s2^{-t}$ appearing in multi-switching lemma (Lemma~\ref{lem:multiswitching}), we argue that our choice of parameters also implies that
	\begin{align}
		s &\leq 2^{t/2}\ .\label{eq:sttwo}
	\end{align}
 Indeed, we have 
	\begin{align}
		\log(2^{t/2})&=\frac{t}{2}\log 2\\
		&=\frac{\log 2}{2\cdot 5}\cdot p_*^5 n\\
		&\geq\frac{\log 2}{10}\cdot\frac{1}{2^{1/(4d)}}\cdot n^{3/4} &&\textrm{ by~\eqref{eq:boundonstarp} }\\
		&\geq n^{1/(20d)}\qquad&&\textrm{ for sufficiently large~$n$}\\
		&\geq \log s\qquad &&\textrm{ by~\eqref{eq:sizeassumptioncircuitlowerbound}}\ ,
	\end{align}
	implying the claim~\eqref{eq:sttwo}.
	
The multi-switching lemma~\ref{lem:multiswitching} asserts that when a restriction $\rho:\mathbb{Z}_{n'}\rightarrow \{0,1,*\}$ is drawn randomly according to~$\rho\sim\mathsf{R}_{p_*}$, the probability that there is no restriction~$\eta:\mathbb{Z}_{N(\rho)}\rightarrow\{0,1,*\}$ of at most~$2t$ of the remaining~$N(\rho)$ free input bits such that $(\cC\restriction_\rho)\restriction_\eta$ is implementable by an~$\NC^0$-circuit is bounded by 
    \begin{align}
		\Pr_{\rho\sim\mathsf{R}_{p_*}}[\forall \eta: (\cC\restriction_\rho)\restriction_\eta \not\in (\NC^0 \text{ of depth } 20d)]&\leq s 2^{-t}\\
		&\leq 2^{-t/2}\, \label{eq:rectrictionprob}\qquad\textrm{by Eq.~\eqref{eq:sttwo}.}
	\end{align}
	More precisely, it says the following: Consider the set~$\cE(\rho)$ consisting of all subsets~$T\subset\rho^{-1}(\{*\})$ of size~$|T|\leq 2t$ and the property that any restriction~$\eta$ of all the bits of~$T$ gives a circuit~$(\cC\restriction_\rho)\restriction_\eta$ that can be implemented by an $\NC^0$-circuit. Then this set is non-empty with high probability for a randomly chosen~$\rho\sim \mathsf{R}_{p_*}$, i.e.,
	\begin{align}
	\Pr_{\rho\sim\mathsf{R}_{p_*}}\left[\cE(\rho)\neq \emptyset\right] &\geq 1-2^{-t/2}\ .\label{eq:bounderhoestimate}
	\end{align}

Now consider the following probabilistic process defining a distribution $P$ over restrictions~$\xi:\mathbb{Z}_{n'}\rightarrow \{0,1,*\}$, see Fig.~\ref{fig:rrs} for an illustration.
\begin{enumerate}[1.]
\item\label{step:rhozerostep}
Choose a restriction~$\rho:\mathbb{Z}_{n'}\rightarrow \{0,1,*\}$ randomly according to~$\rho\sim \mathsf{R}_{p_*}$.
\item\label{step:etachoistep}
If the corresponding set~$\cE(\rho)$ is non-empty, arbitrarily pick an element~$T\in\cE(\rho)$. Then randomly and uniformly pick a restriction~$\eta:\mathbb{Z}_{N(\rho)}\rightarrow \{0,1,*\}$ among all restrictions that fix exactly the inputs associated with~$T$. That is, choose~$|T|$ bits~$(b_1,\ldots,b_{|T|})\in \{0,1\}^{|T|}$ uniformly and independently at random, and then define~$\eta$ by~$T$ (the set of fixed bits) and these bits. 

If the set~$\cE(\rho)$ is empty, choose an additional restriction~$\eta$ of the remaining free bits  uniformly at random, among all such restrictions.
\item\label{it:turnintoblockrestriction}
Consider the concatenated restriction~$\eta\restriction\rho:\mathbb{Z}_{n'}\rightarrow \{0,1,*\}$. 
Partition 
\begin{align}
    \mathbb{Z}_{n'}=\bigcup_{j\in\mathbb{Z}_n} B_j \ , \qquad\text{where}\qquad  B_j=\{5j,5j+1,\ldots,5j+4\}\ ,
\end{align}
into blocks consisting of $5$~bits each. We call a block~$B_j$ partially fixed if $0<|B_j\cap (\eta\restriction\rho)^{-1}(\{*\})|<5$, i.e., if the corresponding blocks contains both bits fixed by the restriction~$\eta\restriction\rho$, as well as bits that are free.
Then randomly choose a restriction~$\tau$ that fixes (only) every free bit in every partially fixed block~$B_j$, for $j\in \mathbb{Z}_n$, among all restrictions with this property. This amounts to choosing uniformly random bits and assigning them to each free bit  located in a partially fixed block.
\item
Output the concatenated restriction~$\xi:=\tau\restriction\eta\restriction\rho:\mathbb{Z}_{n'}\rightarrow \{0,1,*\}$.
\end{enumerate}
We claim that the distribution~$P$ over restrictions~$\xi:\mathbb{Z}_{n'}\rightarrow\{0,1,*\}$ has the required properties.

Observe first that Step~\ref{it:turnintoblockrestriction} ensures that~$\xi$ is a block restriction, i.e.,
can be interpreted as a function~$\xi^\block:\mathbb{Z}_n\rightarrow \{0,1\}^5\cup \{*\}$ with certainty. Indeed, this step guarantees that each block is either completely fixed or completely free.
Furthermore, since in the process defining~$\xi$, every bit which is fixed is fixed to a uniformly chosen bit, property~\eqref{it:propthirddes} of the distribution~$P$ is satisfied by construction. 

It remains to argue that the constructed distribution~$P$ satisfies both Properties~\eqref{it:propfirstdes} and~\eqref{it:propseconddes}  with high probability. For property~\eqref{it:propseconddes} this follows immediately from the switching lemma because of the choice of~$\eta$ in step~\ref{step:etachoistep}, the fact that fixing bits of an $\NC^0$-circuit preserves $\NC^0$-implementability, and the bound~\eqref{eq:bounderhoestimate}: We have 
\begin{align}
\Pr_{\xi\sim P}\left[\xi\textrm{ does not satisfy~\eqref{it:propseconddes}}\right] & \leq 2^{-t/2}\\
&=2^{-p_*^5 n/10}\qquad&&\textrm{ by the choice~\eqref{eq:tchoice}}\\
&\leq 2^{-\frac{1}{10\cdot 2^{1/(4d)}}n^{3/4}}\qquad&&\textrm{ by inequality~\eqref{eq:boundonstarp}}\ .\label{eq:upperboundunitionbound}
\end{align}
For property~\eqref{it:propfirstdes}, we need to estimate the number $N(\xi^\block):=|\xi^{-1}(\{*^\block\})|$ of active blocks. Let~$N(\rho)$ be the number of blocks that are (completely) free in~ the restriction~$\rho\sim \mathsf{R}_{p_*}$ chosen in step~\eqref{step:rhozerostep}. Since the proposed procedure involves fixing at most~$2t$ additional bits
in Step~\ref{step:etachoistep}, the number of free blocks in $\rho\restriction\eta$
can be lower bounded by
\begin{align}
N(\eta\restriction\rho)\geq N(\rho)-2t\ .
\end{align}
The subsequent step~\eqref{it:turnintoblockrestriction} only affects partially fixed block and thus does not change the number of free blocks: We have~$N(\xi^\block)=N(\eta\restriction\rho)$ and thus
\begin{align}
N(\xi^\block)\geq N(\rho)-2t\ .\label{eq:lowerboundxirhot2}
\end{align}
It thus remains to study the random variable~$N(\rho)$. By definition of the distribution  $\mathsf{R}_{p_*}$, the random variable~$N(\rho)$ is distributed according to the binomial distribution~$\mathsf{BIN}(n'/5, p_*^5)$ with  expectation
	\begin{align}
		\operatorname{\mathbb{E}} [\mathsf{BIN}(n'/5, p_*^5)]= \frac{n'}{5}p_*^5 &=np_*^5\ . 
	\end{align}
    In particular,
	using Hoeffding's inequality 
	$\Pr\left[Z\leq \mathbb{E}[Z]-\gamma\right]\leq \exp\left(-(2\gamma^2)/n\right)$
	for a random variable~$Z\sim \mathsf{BIN}(n,q)$ and $\gamma\leq \mathbb{E}[Z]$, we obtain
	\begin{align}
		\Pr_{\rho\sim\mathsf{R}_{p_*}}\left[N(\rho)\leq \frac{np_*^5}{2}\right]\leq 
		\exp\left(-\frac{np_*^{10}}{2}\right)\, ,\label{eq:freeblocksm}
	\end{align}
or, with~\eqref{eq:lowerboundxirhot2},
\begin{align}
\Pr_{\xi\sim P}\left[N(\xi^\block) \geq \frac{np_*^5}{2}-2t\right]&\geq 1-\exp\left(-\frac{np_*^{10}}{2}\right)\ .\label{eq:inequalityprobabilitym}
\end{align}
Since by the choice~\eqref{eq:tchoice}  of~$t$  and inequality~\eqref{eq:boundonstarp} we have 
\begin{align}
\frac{np_*^5}{2}-2t=\frac{3np_*^5}{10}\geq \frac{3}{10\cdot 2^{1/(20d)}}n^{3/4}\ ,
\end{align}
and similarly
\begin{align}
\frac{1}{2}np_*^{10} &\geq \frac{1}{2\cdot 2^{1/(2d)}}n^{1/2}\ ,
\end{align}
inequality~\eqref{eq:inequalityprobabilitym} implies
\begin{align}
\Pr_{\xi\sim P}\left[N(\xi^\block) \geq 
 \frac{3}{10\cdot 2^{1/(20d)}}n^{3/4}
\right]&\geq 1-\exp\left(-\frac{1}{2\cdot 2^{1/(2d)}}n^{1/2}\right)\ .\label{eq:inequalitysimpler}
\end{align}
Combining~\eqref{eq:inequalitysimpler} and~\eqref{eq:upperboundunitionbound} with the union bound shows that for randomly chosen~$\xi\in P$, the event 
\begin{align}
N(\xi) \geq 
 \frac{3}{10\cdot 2^{1/(20d)}}n^{3/4}\ \textrm{ and }\  
 (\cC\restriction_\xi \in (\NC^0 \text{ of depth } 20d)
 \end{align}
 occurs with probability at least
 \begin{align}1-\left(\exp\left(-\frac{1}{2\cdot 2^{1/(2d)}}n^{1/2}\right)+2^{-\frac{1}{10\cdot 2^{1/(4d)}}n^{3/4}}\right)\ .
\end{align}
This completes the proof.
\end{proof}

\begin{figure}[h]
\centering
\resizebox{0.45\textwidth}{!}{
\begin{tikzpicture}[scale=0.2]
\centering

\draw[fill=mygrey,draw=none, opacity=0.5] (-13,-3) rectangle (7,39);
\node[anchor=east] at (7,37) {\footnotesize Implementable in $\NC^0$};

\draw[dashed, myred, thick] (-19,-2) rectangle (-1,34);
\node[anchor=west, myred] at (-19,32.5) { $\xi$};

\draw[black, semithick,fill=white] (0,-2) rectangle (6,35);
\node[anchor=east] at (6,34) {$\mathcal{C}$};

\draw[semithick,fill=white] (-6,0) rectangle (-2,33);
\node[anchor=east] at (-2,32) {$\eta$};

\draw[semithick,fill=white] (-12,5) rectangle (-8,30);
\node[anchor=east] at (-8,29) {$\rho$};

\draw[semithick,fill=white] (-18,6) rectangle (-14,25);
\node[anchor=east] at (-14,24) {$\tau$};

\foreach \y in {2,...,6} {
\foreach \yy in {0,...,4} {
\draw[] (-3,\y+\yy*6) -- (2,\y+\yy*6) ;
}}

\foreach \y in {3.5,4.5} {
\foreach \yy in {0,...,4} {
\draw[] (4,\y+\yy*6) -- (9,\y+\yy*6) ;
}}

\node[anchor=east] at (-3,2) {\tiny $1$};
\node[anchor=east] at (-3,3) {\tiny $0$};
\node[anchor=east] at (-3,4) {\tiny $0$};
\node[anchor=east] at (-3,5) {\tiny $1$};
\node[anchor=east] at (-3,6) {\tiny $0$};

\node[anchor=east] at (-3,8) {\tiny $*$};
\node[anchor=east] at (-3,9) {\tiny $0$};
\node[anchor=east] at (-3,10) {\tiny $*$};
\node[anchor=east] at (-3,11) {\tiny $1$};
\node[anchor=east] at (-3,12) {\tiny $0$};

\node[anchor=east] at (-3,14) {\tiny $*$};
\node[anchor=east] at (-3,15) {\tiny $*$};
\node[anchor=east] at (-3,16) {\tiny $*$};
\node[anchor=east] at (-3,17) {\tiny $*$};
\node[anchor=east] at (-3,18) {\tiny $*$};

\node[anchor=east] at (-3,26) {\tiny $0$};
\node[anchor=east] at (-3,27) {\tiny $*$};
\node[anchor=east] at (-3,28) {\tiny $1$};
\node[anchor=east] at (-3,23) {\tiny $1$};
\node[anchor=east] at (-3,24) {\tiny $1$};

\node[anchor=east] at (-3,20) {\tiny $*$};
\node[anchor=east] at (-3,21) {\tiny $*$};
\node[anchor=east] at (-3,22) {\tiny $*$};
\node[anchor=east] at (-3,29) {\tiny $1$};
\node[anchor=east] at (-3,30) {\tiny $0$};

\foreach \y in {20,21,22,27,14,15,16,17,18,10,8} {
\draw[] (-9,\y) -- (-5,\y) ;
}

\node[anchor=east] at (-9,8) {\tiny $*$};
\node[anchor=east] at (-9,10) {\tiny $0$};
\node[anchor=east] at (-9,18) {\tiny $*$};
\node[anchor=east] at (-9,17) {\tiny $*$};
\node[anchor=east] at (-9,16) {\tiny $*$};
\node[anchor=east] at (-9,15) {\tiny $*$};
\node[anchor=east] at (-9,14) {\tiny $*$};
\node[anchor=east] at (-9,22) {\tiny $*$};
\node[anchor=east] at (-9,21) {\tiny $1$};
\node[anchor=east] at (-9,20) {\tiny $*$};
\node[anchor=east] at (-9,27) {\tiny $1$};

\foreach \y in {8,20,22,14,15,16,17,18} {
\draw[] (-15,\y) -- (-11,\y) ;
}

\node[anchor=east] at (-15,8) {\tiny $1$};
\node[anchor=east] at (-15,20) {\tiny $0$};
\node[anchor=east] at (-15,22) {\tiny $0$};

\node[anchor=east] at (-15,14) {\tiny $*$};
\node[anchor=east] at (-15,15) {\tiny $*$};
\node[anchor=east] at (-15,16) {\tiny $*$};
\node[anchor=east] at (-15,17) {\tiny $*$};
\node[anchor=east] at (-15,18) {\tiny $*$};

\foreach \y in {14,15,16,17,18} {
\draw[] (-21,\y) -- (-17,\y) ;
}

\end{tikzpicture}
}
\caption{Schematic representation of the multiple layers of restriction maps used in the proof of Theorem~\ref{thm:ac0tonc0}.}\label{fig:rrs}
\end{figure}

\subsection{An $\AC^0$ circuit size lower bound for the single-qubit gate-teleportation problem}
In this section, we complete the proof for the average-case $\AC^0$ size lower bound on the single-qubit gate-teleportation problem over uniformly random instances.

Since the multi-switching lemma works with circuits taking binary inputs, we show an average case result for~$\AC^0$-circuits for uniformly random bitstrings encoding Clifford gates by maps defined in Section~\ref{sec:FromCliffordsToBits}, see Eq.~\eqref{eq:encodingmapiotacx}. We  then rely on results from Section~\ref{sec:FromCliffordsToBits} to obtain our average case circuit size lower bound for $\AC^0$ circuits: Here we consider uniformly random sequences of Cliffords as input.

\begin{theorem}\label{thm:ac0boundbits}
Let $\enc\in\cE^n$ be an arbitrary $n$-tuple of valid encoding maps as described in Section~\ref{sec:FromCliffordsToBits}. 
Suppose $\cC_n:(\{0,1\}^5)^n\rightarrow \pauli^n$ is 
an $\AC^0$ circuit which solves the bit encoded single-qubit gate-teleportation problem with probability at least~$0.988$ on average for a bitstrings $x\in (\{0,1\}^5)^n$ sampled uniformly at random, i.e.,
    \begin{align}
\Pr_{x\in(\{0,1\}^5)^n}\left[
\left(\enc(x), \cC_n(x)\right)\in \telepR
\right]\geq 0.988\ .
\end{align}
Then the circuit has size at least
\begin{align}
    \csize(\cC_n)>e^{n^{1/(20 \cdepth(\cC_n))}}\ .
\end{align}
\end{theorem}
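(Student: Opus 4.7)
The plan is a contradiction argument that combines the two main results of the preceding sections: the multi-switching-based reduction of $\AC^0$ to $\NC^0$ under random block-restrictions (Theorem~\ref{thm:ac0tonc0}), and the average-case $\NC^0$ hardness for the bit-encoded restricted gate-teleportation problem (Theorem~\ref{thm:maintheoremrestrictednc0bitstring}). Suppose for contradiction that $\cC_n:(\{0,1\}^5)^n\rightarrow\pauli^n$ is an $\AC^0$ circuit of depth $d=\cdepth(\cC_n)$ and size $\csize(\cC_n)\leq e^{n^{1/(20d)}}$ whose success probability on the bit-encoded problem is at least~$0.988$ for the given $n$-tuple $\enc\in\cE^n$.

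First I would invoke Theorem~\ref{thm:ac0tonc0} to obtain a distribution $P$ over block restrictions $\xi^\block:\mathbb{Z}_n\rightarrow\{0,1\}^5\cup\{*^\block\}$ such that with probability at least $1-\exp(-c_2(d)n^{1/2})$ over $\xi^\block\sim P$, both $N(\xi^\block)\geq c(d)n^{3/4}$ and the restricted circuit $\cC_n\restriction_{\xi^\block}$ is implementable by an $\NC^0$ circuit of depth $20d$. The crucial structural input is property~(iii) of $P$: conditioned on the set of active blocks, the bits fixed by $\xi^\block$ are uniformly distributed. This implies the coupling statement that sampling $\xi^\block\sim P$ and then independently sampling the $N(\xi^\block)$ free input blocks uniformly produces an input $x\in(\{0,1\}^5)^n$ distributed exactly uniformly at random. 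This coupling lets me rewrite the overall failure probability as a mixture
\begin{align}
\Pr_{x\in(\{0,1\}^5)^n}\left[(\enc(x),\cC_n(x))\notin\telepR\right]
=\ExpE_{\xi^\block\sim P}\Pr_{y\in(\{0,1\}^5)^{N(\xi^\block)}}\left[(\widehat{\enc}(y),\cC_n\restriction_{\xi^\block}(y))\notin \telepR\restriction_{\enc(\xi^\block)}\right]\ .
\end{align}

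Second, for every $\xi^\block$ in the good event, Theorem~\ref{thm:maintheoremrestrictednc0bitstring} applied with $\delta=3/4$ and $\gamma=c(d)$ ensures that the inner conditional failure probability is strictly greater than $1/81$. Dropping the bad-event contribution (which is exponentially small in $n^{1/2}$) then yields
\begin{align}
\Pr_{x\in(\{0,1\}^5)^n}\left[(\enc(x),\cC_n(x))\notin\telepR\right] > \left(1-\exp(-c_2(d)n^{1/2})\right)\cdot\frac{1}{81}\ .
\end{align}
Since $1/81 > 0.01234 > 0.012 = 1-0.988$, the right-hand side exceeds $0.012$ for every sufficiently large $n$, contradicting the assumed success probability of at least~$0.988$ and thus forcing $\csize(\cC_n)>e^{n^{1/(20d)}}$.

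The main obstacle in turning this into a rigorous proof is the coupling step: one must verify that the distribution $P$ produced by the three-stage random-restriction construction $\xi=\tau\restriction\eta\restriction\rho$ in the proof of Theorem~\ref{thm:ac0tonc0} indeed has the ``fix-then-fill-in-uniformly-yields-uniform'' property. This reduces to checking that every bit-fixing step (the initial $\mathsf{R}_{p_*}$-sample $\rho$, the additional $\eta$-fixings from the multi-switching lemma, and the block-completion $\tau$) uses an independent uniform bit at each fixed coordinate, which is exactly the content of property~(iii) in Theorem~\ref{thm:ac0tonc0}. Once this is in place, the remainder of the argument is a clean union bound, and the numerical margin between~$1/81$ and $1-0.988=0.012$ closes the contradiction.
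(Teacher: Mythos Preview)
Your proposal is correct and follows essentially the same route as the paper's proof: assume the size bound fails, apply Theorem~\ref{thm:ac0tonc0} to obtain the block-restriction distribution~$P$, use property~(iii) to couple $(\xi^\block,y)$ with a uniform~$x$, and then invoke Theorem~\ref{thm:maintheoremrestrictednc0bitstring} on the good event to bound the success probability below~$80/81+o(1)<0.988$. The only cosmetic difference is that you phrase the final step in terms of failure probability exceeding $(1-o(1))/81>0.012$, which is algebraically equivalent to the paper's bound on success probability.
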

\begin{proof} For the sake of contradiction, assume there exists an $n$-tuple~$\enc\in\cE^n$ of valid encoding maps and an $\AC^0$ circuit $\cC_n:(\{0,1\}^5)^n\rightarrow \pauli^n$ having depth $d=\cdepth(\cC_n)$ and $\csize(\cC_n) = s \le e^{n^{1/(20 d)}}$ such that 
\begin{align}
\Pr_{x\in(\{0,1\}^5)^n}\left[
\left(\enc(x), \cC_n(x)\right)\in \telepR
\right]\ge 0.988\ .
\end{align}
By Theorem~\ref{thm:ac0tonc0}  there is a probability distribution $P$ over block restrictions $\xi^\block:\mathbb{Z}_n\to\{0,1\}^5\cup\{*^\block\}$ such  the distribution~$P$ satisfies the condition~\eqref{it:propthirddes} and a randomly chosen~$\xi^\block\sim P$ satisfies~\eqref{it:propfirstdes} and~\eqref{it:propseconddes} with probability at least
\begin{align}
1-\exp(-c_2(d) n^{1/2})\ ,
\end{align}
where $c_2(d)$ only depends on~$d$. Let us restate these conditions here for convenience:
\begin{enumerate}[(i)]
    \item The number of active blocks satisfies $N(\xi^\block)\ge c(d)n^{3/4}$ for $c(d)$ depending only on the circuit depth~$d$.
    \item The block restricted circuit $\cC_n\restriction_{\xi^\block}$ can be implemented by an $\NC^0$ circuit $\cC'_n$ of depth~$20d$.
    \item For a restriction $\xi^\block\sim P$ drawn according to the distribution~$P$ the following holds. For any subset $S\subset\mathbb{Z}_n$, the distribution of $\xi^\block$ conditioned on $S=(\xi^\block)^{-1}(\{*^\block\})$ is uniform over all block restrictions $\tilde{\xi}^\block:\mathbb{Z}_n\to \{0,1\}^5\cup\{*^\block\}$ having $S$ as the set of active blocks, i.e.,  $S=(\tilde{\xi}^\block)^{-1}(\{*^\block\})$.\label{it:ac0third}
\end{enumerate}
Let us use $(\xi^\block)^{-1}(\{*^\block\})=\{i_0< i_1<\ldots< i_{N(\xi^\block)-1}\}$ to denote the active blocks of $\xi^\block$. 
We omit the superscript in~$\xi^\block$ and~$*^\block$ in the following, and simply write~$\xi:\mathbb{Z}_{5n}\rightarrow (\{0,1\}^5)^n\times \{*\}$. 
Let us define the bit-encoding
\begin{align}
    \widehat{\enc}_\xi(y)=(\enc^{(i_0)}(y^{(0)}),\ldots,\enc^{(i_{N(\xi)-1})}(y^{(N(\xi)-1)}))\in\cliff^{N(\xi)}\qquad\textrm{ for }\qquad y\in (\{0,1\}^5)^{N(\xi)}\ .
\end{align}
for input instances of the $\telepR\restriction_{\enc(\xi)}$. From Condition~\eqref{it:ac0third} we have
\begin{align}
   \Pr_{x\in(\{0,1\}^5)^n}\left[
\left(\enc(x), \cC_n(x)\right)\in \telepR
\right] 
&= \Pr_{\substack{
        \xi\sim P\\
        y\in (\{0,1\}^5)^{N(\xi)}
        }} 
        \left[
\left(\widehat{\enc}_\xi(y), \cC_n\restriction_\xi(y)\right)\in \telepR\restriction_{\enc(\xi)}
\right] \ .\label{eq:proboverrestrictionandinputs}
\end{align}
Recall that with probability at least~$1-\exp(-c_2(d)n^{1/2})$, the block-restriction~$\xi$ is such that there exists an~$\NC^0$ circuit~$\cC'_n$ implementing the restricted circuit~$\cC_n\restriction\xi$, and this circuit solves instances of size at least~$c(d)n^{3/4}$. By Theorem~\ref{thm:maintheoremrestrictednc0bitstring} we have
\begin{align}
    \Pr_{y\in(\{0,1\}^5)^{N(\xi)}}\left[
\left(\widehat{\enc}_{\xi}(y), \cC'_n(y)\right)\in \telepR\restriction_{\enc(\xi)}
\right]< \frac{80}{81}\ ,
\end{align}
over the uniform choice of instances from $(\{0,1\}^5)^{N(\xi)}$.

Therefore we can bound~\eqref{eq:proboverrestrictionandinputs} by using the law of total probability, obtaining
\begin{align}
   \Pr_{x\in(\{0,1\}^5)^n}\left[
\left(\enc(x), \cC_n(x)\right)\in \telepR
\right] 
&< \exp(-c_2(d) n^{1/2})+\left(1-\exp(-c_2(d) n^{1/2})\right)\frac{80}{81}\\
&< 0.988\ ,
\end{align}
for sufficiently large $n$. This is a contradiction.
\end{proof}

By using Lemma~\ref{lem:clifftobitprob}, we obtain the following corollary.
\begin{corollary}\label{cor:ac0main}
Suppose $\cC_n:\cliff^n\rightarrow\pauli^n$ is an $\AC^0$-circuit  which solves the  single-qubit gate-teleportation problem with probability at least $0.9888$ on average over uniformly randomly chosen instances, i.e., 
    \begin{align}
\Pr_{C\in\cliff^n}\left[
\left(C, \cC_n(C)\right)\in R_{\Utelep_n}
\right]\geq 0.9888\ .
\end{align}
Then the circuit size of~$\cC_n$ is at least
\begin{align}
\csize(\cC_n) >e^{n^{1/(20 \cdepth(\cC_n))}}\ .
\end{align}
\end{corollary}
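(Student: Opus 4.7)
The plan is to reduce Corollary~\ref{cor:ac0main} to Theorem~\ref{thm:ac0boundbits} by using Lemma~\ref{lem:clifftobitprob} to translate the uniform-Clifford success guarantee into a uniform-bitstring success guarantee. I would start from the given $\AC^0$ circuit $\cC_n:\cliff^n\to\pauli^n$ of depth $d:=\cdepth(\cC_n)$ with average success probability at least $0.9888$, and derive a contradiction from the hypothesis $\csize(\cC_n)\le e^{n^{1/(20d)}}$.

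First I would invoke Lemma~\ref{lem:clifftobitprob} with $F:=\cC_n$, $\cR:=R_{\Utelep_n}$, and a constant $\epsilon>0$ chosen so that $(1-\epsilon)\cdot 0.9888 \ge 0.988$ (for instance $\epsilon=0.0008$). This yields an $n$-tuple $\enc=(\enc^{(0)},\ldots,\enc^{(n-1)})\in\cE^n$ of valid encoding maps such that
\begin{align}
    \Pr_{x\in(\{0,1\}^5)^n}\left[(\enc(x),(\cC_n\circ\enc)(x))\in R_{\Utelep_n}\right]\ge 0.988.
\end{align}
Next I would realize $\cC_n\circ\enc$ as an $\AC^0$ circuit of comparable complexity: each map $\enc^{(j)}$ is a function on a constant number of input bits (whose output is the bit representation of a Clifford), and thus admits a constant-depth, constant-size $\AC^0$ implementation that can be prepended to the input layer of $\cC_n$. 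In particular, $\csize(\cC_n\circ\enc)\le\csize(\cC_n)+O(n)$ and $\cdepth(\cC_n\circ\enc)\le\cdepth(\cC_n)+O(1)$. Applying Theorem~\ref{thm:ac0boundbits} to the circuit $\cC_n\circ\enc$ with this encoding then yields $\csize(\cC_n\circ\enc)>e^{n^{1/(20\cdepth(\cC_n\circ\enc))}}$, which contradicts the assumption on $\csize(\cC_n)$ and thereby establishes the claimed lower bound.

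I do not foresee any substantive obstacle: the plan is a direct transfer via Lemma~\ref{lem:clifftobitprob}. The only subtlety is bookkeeping the constant-depth overhead from the encoding layer, which is absorbed into the implicit constants in the asymptotic statement; the gap $0.9888 - 0.988 = 8\times 10^{-4}$ between the two success thresholds is precisely what accommodates the slack $\epsilon$ required by Lemma~\ref{lem:clifftobitprob}.
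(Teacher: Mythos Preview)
Your proposal is correct and matches the paper's approach exactly: the paper states only that the corollary follows from Theorem~\ref{thm:ac0boundbits} ``by using Lemma~\ref{lem:clifftobitprob}'', which is precisely the reduction you outline. The minor bookkeeping around the $O(1)$ depth added by prepending the constant-size encoding maps is treated with the same informality in the paper; it does not affect the qualitative subexponential bound.
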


This together with Theorem~\ref{thm:mainquantumcircuit} implies the following corollary. 
\begin{corollary}The relation problem~$R_{\Utelep_n}$ separates the (relational) complexity classes~$\AC^0$ and $\QNC^0$ of constant-depth classical circuit with unbounded AND and OR gates as well as NOT gates and constant-depth quantum circuits, i.e., we have~$\QNC^0\not\subseteq \AC^0$.
\end{corollary}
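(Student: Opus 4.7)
The plan is to assemble the separation directly from the two results cited in the paragraph preceding the statement, namely Theorem~\ref{thm:mainquantumcircuit} (for the $\QNC^0$ upper bound) and Corollary~\ref{cor:ac0main} (for the $\AC^0$ lower bound). No new combinatorial argument is required; the work has already been done in the preceding sections, and the corollary is essentially a statement that these two bounds are quantitatively incompatible for any polynomial-size $\AC^0$ family.

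First, I would invoke Theorem~\ref{thm:mainquantumcircuit}: it exhibits a constant-depth, $1D$-local, classically controlled Clifford circuit $\Utelep_n$ that, on every input $C=(C_0,\ldots,C_{n-1})\in\cliff^n$, produces with certainty an output $P\in\pauli^n$ with $(C,P)\in R_{\Utelep_n}$. Encoding Clifford inputs and Pauli outputs by constant-length bitstrings (as in Section~\ref{sec:FromCliffordsToBits}), this circuit family witnesses that $R_{\Utelep_n}\in\QNC^0$ under any standard formalization of $\QNC^0$ as a relational class.

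Next, I would apply Corollary~\ref{cor:ac0main} to rule out $R_{\Utelep_n}\in\AC^0$. Suppose for contradiction that some $\AC^0$ family $\{\cC_n\}_n$ solves $R_{\Utelep_n}$; then in particular each $\cC_n$ would achieve average success probability $1$ (and certainly at least $0.9888$) on uniformly random Clifford inputs. By Corollary~\ref{cor:ac0main} this forces $\csize(\cC_n)>e^{n^{1/(20\cdepth(\cC_n))}}$. Since $\AC^0$ requires the depth to be bounded by some absolute constant $d$ and the size to be polynomial in $n$, while $e^{n^{1/(20d)}}$ grows faster than any polynomial for each fixed $d$, this is a contradiction for all sufficiently large $n$. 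Hence $R_{\Utelep_n}\notin\AC^0$, and combined with the $\QNC^0$ containment above we conclude $\QNC^0\not\subseteq\AC^0$.

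The only point that merits care is matching the definition of ``solving'' a relation problem across the classical and quantum models, which is unproblematic here because the gap between the two bounds is enormous: Theorem~\ref{thm:mainquantumcircuit} gives probability-$1$ quantum solvability on \emph{every} input, while Corollary~\ref{cor:ac0main} rules out even constant-bounded-below-$1$ average-case success for $\AC^0$. Thus the separation is robust to whether one defines the $\AC^0$ (resp.\ $\QNC^0$) relational class by worst-case or average-case success, and no further argument is needed.
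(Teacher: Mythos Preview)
Your proposal is correct and matches the paper's approach: the paper gives no separate proof for this corollary, stating only that it follows from Theorem~\ref{thm:mainquantumcircuit} together with Corollary~\ref{cor:ac0main}, which is exactly the combination you spell out. Your added observation that the subexponential lower bound $e^{n^{1/(20d)}}$ eventually exceeds any fixed polynomial, and your remark about robustness to worst-case versus average-case definitions, simply make explicit what the paper leaves implicit.
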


\section{Quantum advantage with noisy shallow 3D-local circuits against $\AC^0$}
In~\cite{BGKT}, an advantage of ideal shallow  quantum circuits against $\NC^0$-circuits (based on the so-called 1$D$ magic-square problem) was lifted to an advantage of noisy $3D$-local shallow quantum circuits.  This was achieved by incorporating fault tolerance in a non-standard manner into the considered computational problems. We give a brief overview of the general method developed in~\cite{BGKT} in Section~\ref{sec:liftingmethods}. This technique was subsequently applied in~\cite{grier2021interactiveNoisy} to establish a number of complexity-theoretic separations (in both non-interactive and interactive settings) between noisy shallow quantum circuits and classical shallow circuits, but without locality considerations. 

\subsection{Noisy quantum circuits: Local stochastic noise\label{sec:localstochasticnoise}}
To state our result in detail, we use the following notions formalizing (certain) noisy quantum circuits. These  were first proposed in pioneering work by Gottesman~\cite{gottesmanoverhead}, see also e.g.,~\cite{fawzi2018constant}. A Pauli error~$E$ on $n$~qubits (in the following indexed by~$\mathbb{Z}_n=\{1,\ldots,n\}$) is a random variable taking values in~$\pauli(n):=\{I,X,Y,Z\}^{\otimes n}$. 
We denote by~$\supp(E)\subseteq [n]$ the (random) subset of qubits acted on non-trivially by~$E$ and call this the support of~$E$. The Pauli error~$E$ is called {\em $p$-local stochastic noise} for $p\in [0,1]$ if 
\begin{align}
\Pr\left[F\subseteq \supp(E)\right] &\leq p^{|F|}\qquad\textrm{ for every subset }\qquad F\subseteq [n]\ .
\end{align}
That is, $p$-local stochastic noise non-trivially affects any subset of qubits with a probability exponentially suppressed in the size of the subset. The parameter~$p$ will be called the {\em noise strength}, and we will use the notation~$E\sim\cN(p)$ to express that~$E$ is $p$-local stochastic noise. 

In the following, we consider noisy implementations of  processes where a Clifford circuit~$U=C_D\cdots C_1$ on $n$~qubits of constant depth $D=O(1)$ is applied to an initial product state~$\ket{0^n}$, and the resulting final state is measured in the computational basis. Here each $C_j$ is a gate layer consisting of one- and two-qubit Clifford gates acting on disjoint subsets of qubits. We assume that local stochastic noise $E_j$ acts after each gate layer~$C_j$. For $j\in [D-1]$ this models errors occurring during execution of the gates inside~$C_j$. The error~$E_D$ accounts 
for both errors in~$C_D$ as well as readout  (measurement) errors. To 
model  errors in the state preparation, we additionally include local stochastic noise~$E_0$  before the application of the first gate layer~$C_1$. Overall, this noisy process results in a sample drawn from the distribution
\begin{align}
\tilde{p}^U(z)&=|\langle z| E_DC_D\cdots E_2C_2E_1C_1E_0\ket{0^n}|^2\cdot P(E_0,\ldots,E_D)\label{eq:noisysamplingprocess}
\end{align}
by Born's rule. We constrain the distribution~$P(E_0,\ldots,E_D)$ of errors as follows: We make the assumption each of the errors $E_j$, $j=0,\ldots,D$ is local stochastic with parameter~$p$, i.e., $E_j\sim\cN(p)$. Importantly, this is an assumption on the marginal distribution $P_{E_j}$ of each error only, and the joint distribution $P(E_0,\ldots,E_D)$ may include dependencies. We refer to this process, i.e., sampling from the distribution~\eqref{eq:noisysamplingprocess}, as an noisy implementation of~$U$ with error strength~$p$.

\subsection{Fault-tolerance building blocks\label{sec:faulttolerancebuildingblocks}}
The construction of~\cite{BGKT} makes use of a  CSS-type~\cite{calderbank1996good,steane1996multiple} quantum code~$\cQ_m$ encoding a single logical qubit into $m$~physical qubits. This code needs to have the following properties, where we use the term single-shot to refer to fault-tolerance  protocols that do not rely on repeated syndrome measurements, following seminal work by Bombin~\cite{bombin2015single}. 
\begin{enumerate}[(1)]
\item {\bf \label{it:conditionone}Condition 1: Transversal Clifford gates.} Both the logical Hadamard gate~$H$ and the logical phase gate~$S=\mathsf{diag}(1,i)$ are realized ``transversally,'' i.e., by a depth-$1$ circuit consisting of one- and two-qubit Clifford gates.

By the CSS-nature of the code~$\cQ_m$ and the fact that the $n$-qubit Clifford group is generated by $H$, $S$ and~$\mathsf{CNOT}$-gates, Condition~$1$ ensures that the entire logical Clifford group can be realized transversally when dealing with~$n$ qubits each individually encoded in~$\cQ_m$.

\item {\bf \label{it:conditiontwo} Condition 2: Single-shot state preparation.} There is a constant-depth Clifford circuit~$W$ which uses~$\manc$~auxiliary qubits with the following property: If~$W$ is applied to the state~$\ket{0^m}\otimes\ket{0^\manc}$ and the auxiliary qubits are measured in the computational basis, the post-measurement state  is the logical state~$\ket{\overline{0}}$ (the~$+1$ eigenstate of the logical Pauli operator~$\overline{Z}$) up to a Pauli correction~$\rec(s)$ determined by the measurement outcome~$s\in\{0,1\}^\manc$. That is, there is a function~$\rec:\{0,1\}^\manc\rightarrow\pauli(m)$ such that 
\begin{align}
(\rec(s)\otimes \proj{s})W(\ket{0^m}\otimes\ket{0^{\manc}}) \propto \ket{\overline{0}}\otimes\ket{s}\qquad\textrm{ for all }\qquad s\in \{0,1\}^{\manc}\ .
\label{eq:recoverymapconditionone}
\end{align}
In other words, this can be seen as a low-depth state preparation algorithm (involving measurements) that produces the desired logical state~$\ket{\overline{0}}$ up to a known Pauli correction~$\rec(s)$.

Importantly, this state preparation needs to be fault-tolerant. To express this, consider an error~$E\in\pauli(m+\manc)$ occurring before the measurement. (The consideration of such errors is sufficient because Pauli errors can be propagated to this circuit location.) Then the post-measurement state of the~$m$ qubits after applying the correction~$\rec(s)$ is no longer proportional to~$\ket{\overline{0}}$, but a Pauli-corrupted version thereof. That is, there is a function~$\rep:\pauli(m+\manc)\rightarrow\pauli(m)$ such that 
\begin{align}
(\rec(s)\otimes \proj{s})EW(\ket{0^m}\otimes\ket{0^{\manc}}) \propto \rep(E)\ket{\overline{0}}\otimes\ket{s}\qquad\textrm{ for all }\qquad s\in \{0,1\}^{\manc}\ . \label{eq:recoverymapconditiontwo}
\end{align}
``Single-shot'' fault tolerance of the procedure is now expressed by the following requirement, which states that local stochastic noise before the measurement results in the residual error~$\rep(E)$ being local stochastic: There are constants $c'$ and $c''$ such that 
\begin{align}
E\sim\cN(p)\qquad\textrm{ implies that }\qquad \rep(E)\sim \cN(c'p^{c''})\qquad\textrm{ for all noise rates }p\in [0,1]\ .
\end{align}

\item {\bf \label{it:conditionthree}Condition 3: Single-shot readout.} Logical information can be read out fault-tolerantly in a single-shot manner by measuring every qubit in the computational basis. To formalize this, first consider the case of an uncorrupted encoded state~$\ket{\overline{\Psi}}\in\cQ_m$. In this case, a (destructive) measurement of the logical Pauli-$\overline{Z}$-observable can be realized by first measuring every qubit in the computational basis, obtaining an outcome~$x\in \{0,1\}^m$, and computing $(-1)^{\parity(x)}$ for a certain function~$\parity:\{0,1\}^m\rightarrow \{0,1\}$. Indeed, this is achieved by the function
\begin{align}
\parity(x)=\left(\sum_{j\in\supp(\overline{Z})} x_j\pmod 2\right)
\end{align} which 
 takes the parity of outcome bits belonging to the support of (one realization of) the logical operator~$\overline{Z}$. We note that for a code state~$\ket{\overline{\Psi}}\in\cQ_m$, the measurement outcome~$x\in\{0,1\}^m$ belongs with certainty to a subspace~$\cL\subset \{0,1\}^n$ of strings which obey all~$Z$-stabilizers. 
 
If instead, a state~$E\ket{\overline{\Psi}}$ corrupted by a Pauli error~$E=X(v)Z(w)$, $v,w\in \{0,1\}^m$ is measured in the computational basis, the measurement outcome is of the form~$x\oplus v$ for $x\in\cL$, where $\oplus$ denotes bitwise addition modulo two. Determination of the expection of~$\langle \overline{\Psi},\overline{Z}\overline{\Psi}\rangle$ from the measurement outcome is possible if~$\parity(x)$ can be computed from~$x\oplus v$. Condition~\eqref{it:conditionthree} results from requiring that this is the case with high probability when~$E$ is local stochastic with a sufficient small noise strength.  This leads to the following condition: There are constants~$c,c',q_{\textrm{th}}$ and  a function~$\dec:\{0,1\}^m\rightarrow\{0,1\}$ such that 
\begin{align}
\Pr\left[\dec(x\oplus v)=\parity(x)\right] &\geq 1-\exp(-c'm^c)\qquad\textrm{ for all }\qquad x\in \cL
\end{align}
whenever~$v$ is a random variable taking values in~$\{0,1\}^m$ such that $X(v)\sim\cN(q)$.
\end{enumerate}
{\bf Fault-tolerant implementation of a classicaly controlled Clifford circuit.} 
Consider a classically controlled Clifford circuit~$U$ acting on~$n$ qubits (and classical control bits). Assume that~$U$ has constant depth~$D$. Following~\cite{BGKT}, we write describe the action of such a circuit in terms of a family~$\{C_b\}_{b\in\{0,1\}^v}$ of $n$-qubit Clifford circuits indexed by the classical input~$b\in \{0,1\}^v$. Each Clifford circuit~$C_b$ is composed of Clifford gates that act non-trivially on at most $k=O(1)$ qubits and has depth at most~$D$. The action of~$U$ is then given by
\begin{align}
U(\ket{\Psi}\otimes \ket{b})&=(C_b\ket{\Psi})\otimes\ket{b}\qquad\textrm{ for all }\qquad \Psi\in(\mathbb{C}^2)^{\otimes n}\textrm{ and }b\in \{0,1\}^v\ .
\end{align}
We consider the situation where on input~$b\in \{0,1\}^v$, the circuit is applied to the initial state~$\ket{0^n}\otimes\ket{b}$, and the output state of the $n$~qubits is measured in the computational basis. This results in a sample~$z$ drawn from the distribution
\begin{align}
p^U_{b}(z)&=|\langle z|C_b0^n\rangle|^2\qquad\textrm{ for }\qquad z\in \{0,1\}^n\ .\label{eq:originaldistributionideal}
\end{align}
A quantum code~$\cQ_m$ satisfying Conditions~\eqref{it:conditionone}-\eqref{it:conditionthree}  gives rise to an implementation of this  process
which is fault-tolerant against local stochastic noise below some threshold strength.
This fault-tolerant implementation relies on an  extended classically controlled circuit~$\Uext$ on~$n\cdot (m+\manc)$ qubits and $v$~classical control bits, as well as classical post-processing of the measurement outcome.    
 The circuit~$\Uext$ (cf.~\cite[Fig.~8]{BGKT}) 
 is applied to an initial state of the form~$(\ket{0^m}\otimes\ket{0}^{\manc})^{\otimes n})\otimes\ket{b}$. Here in each of the~$n$ factors, the first~$m$ qubits correspond to an encoded logical qubit of the initial circuit~$U$ (excluding classical controls), whereas the latter $\manc$~qubits are auxiliary qubits for single-shot logical state preparation. The remaining~$v$ qubits in the state~$\ket{b}$ correspond to the classical control. Let us describe the action of~$\Uext$ for a classical input~$b\in \{0,1\}^v$. It  proceeds by  applying the state preparation unitary~$W$ to each of the~$m$ factors. Subsequently, a logical implementation~$\overline{C_b}$ of the Clifford~$C_b$ is applied to the~$\cQ_m$-encoded logical qubits. This is realized by replacing each (classically controlled) Clifford gate in~$C_b$ by the corresponding transversal implementation. 
 
 Now consider the process of 
 \begin{enumerate}[(i)]
 \item applying~$\Uext$ to the initial state~$(\ket{0^m}\otimes\ket{0}^{\manc})^{\otimes n})\otimes\ket{b}$,
 \item subsequently measuring each auxiliary qubit in the computational basis obtaining outcomes~$s=(s^1,\ldots,s^n)\in (\{0,1\}^{\manc})^n$, and
 \item measuring  each of the remaining $n\cdot m$ qubits obtaining outcomes~$y=(y^1,\ldots,y^n)\in (\{0,1\}^m)^n$.
 \end{enumerate}
 This results in a sample $(s,y)$ drawn from the distribution
 \begin{align}
p^{\Uext}_b(s,y)&= |(\langle y|\otimes\langle s|)(\overline{C}_b\otimes I)W^{\otimes n} (\ket{0^{mn}}\otimes\ket{0^{\manc n}}|^2\ .
 \end{align}
The process described here involving the extended unitary~$\Uext$ can be used in a black-box manner to generate a sample from the original distribution~\eqref{eq:originaldistributionideal} by post-processing the measurement result~$(s,y)$. Furthermore, even if its realization is affected by local stochastic noise acting in between gate layers, the resulting distribution is close to the ideal distribution~\eqref{eq:originaldistributionideal}, see Theorem~\ref{thm:noisetolerance} below.  The correct post-processing map can be described as follows. It is obtained by propagating the Pauli operator~$\rec(s^1)\otimes \cdots \otimes \rec(s^n)$ associated with the syndrome $s=(s^1,\ldots,s^n)$ through the logical gate~$\overline{C_b}$. More precisely, only the Pauli-$X$-part of the corresponding operators matters. That is, consider the functions $f,h:\{0,1\}^{n\manc}\times \{0,1\}^v\rightarrow (\{0,1\}^m)^n$ defined  by
\begin{align}
\overline{C_b} (\rec(s^1)\otimes\cdots\otimes\rec(s^n))\overline{C_b}\propto X(f(s,b))Z(h(s,b))\label{eq:fdefinitioneq}
\end{align}
 for all  $s=(s^1,\ldots,s^n)\in (\{0,1\}^{\manc})^n$ and $b\in \{0,1\}^v$. Where we used the notation $X,Z:\{0,1\}^n\to\pauli^n$ to denote the strings of respective Paulis $X(s):=X_1^{s^1}\cdots X_n^{s^n}$ and $Z(s):=Z_1^{s^1}\cdots Z_n^{s^n}$.
 Denoting by~$f^i$ the restriction of the output of~$f$ to the $i$-th codeblock, a sample~$(s,y)$
 produced by a potentially noisy implementation of~$\Uext$ on input~$b\in \{0,1\}^v$ should be then postprocessed as follows: Compute
 \begin{align}
 z_i&=\dec(y^i\oplus f^i(s,b))\qquad\textrm{ for }\qquad i\in [n]\label{eq:postprocessingdefinition}
 \end{align}
 and output $z=(z_1,\ldots,z_n)\in \{0,1\}^n$.

 This post-processing leads to an outcome as described in the following Theorem~\ref{thm:noisetolerance}. This statement is a slight reformulation of
~\cite[Theorem 17]{BGKT} to which we refer to for the proof. It is a consequence of Conditions~\ref{it:conditionone}--\ref{it:conditionthree} and the fact that propagating local stochastic errors past through low-depth Clifford circuits composed of one- and two-qubit gates preserves local stochasticity (albeit increasing the noise strength in a controlled manner, see e.g.,~\cite[Lemma~11]{BGKT}).
\begin{theorem}\label{thm:noisetolerance}
Let $U$ be a constant-depth, classically controlled Clifford circuit on $n$~qubits, with~$v$ classical control bits. 
Consider a noisy implementation of the extended circuit~$\Uext$ on input~$b\in \{0,1\}^v$ resulting in an output~$(s,y)\in (\{0,1\}^{\manc})^n\times (\{0,1\}^m)^n$. We assume that the noise strength is~$p\in [0,1]$.  Let $z\in \{0,1\}^n$ be the output of the post-processing map defined by~\eqref{eq:postprocessingdefinition}. Then there is a choice of $m=O(\mathsf{poly}(\log n))$ and a threshold value~$\pthres$ (depending only on the depth of~$U$) such that the following holds for any $p<\pthres$. The sample~$z$ produced by this noisy process is drawn from a distribution~$q_b$ that is close in $L^1$-norm to the 
the distribution~\eqref{eq:originaldistributionideal}, i.e.,
\begin{align}
\left\|q_b-p^U_b\right\|_1&<0.01\ .\label{eq:faultycircuitsimulationweak}
\end{align}
\end{theorem}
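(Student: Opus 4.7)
The plan is to use the standard fault-tolerance argument in three stages. First, I would propagate every interlayer Pauli error through the subsequent Clifford gates to the final layer; because $\Uext$ is constant-depth and built from $O(1)$-local gates (the state-preparation circuit $W^{\otimes n}$ of Condition~\eqref{it:conditiontwo} and the transversal logical Clifford $\overline{C_b}$ of Condition~\eqref{it:conditionone}), each initial local stochastic error $E_j\sim\cN(p)$ spreads onto a bounded multiple of its original support. The propagation lemma invoked in~\cite[Lemma~11]{BGKT} then implies that the aggregate propagated error acting just before the computational-basis measurement is itself local stochastic with parameter $p_1\le c_1 p^{c_2}$ for constants depending only on the depth and locality of~$\Uext$.

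Second, I would apply Condition~\eqref{it:conditiontwo} to the state-preparation portion: conditioned on the observed syndromes $s=(s^1,\ldots,s^n)$ and after absorbing the classical corrections $\rec(s^i)$ into the Pauli frame, each codeblock is left in the logical state $\ket{\overline{0}}$ corrupted by a residual Pauli $\rep(E^i)\sim\cN(c_3 p_1^{c_4})$. Commuting the known Pauli frame $\bigotimes_i\rec(s^i)$ through $\overline{C_b}$ and invoking definition~\eqref{eq:fdefinitioneq} of $f$ and $h$ shows that the state just before measurement is the encoded target $\overline{C_b}\ket{\overline{0^n}}$, corrupted by an overall local stochastic Pauli error whose $X$-part acts on block~$i$ as $v^i$ (with $(v^1,\ldots,v^n)$ jointly local stochastic at some rate $q=O(p^\alpha)$), together with the classical shift $f^i(s,b)$.

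Third, I would note that an ideal measurement of $\overline{C_b}\ket{\overline{0^n}}$ returns a blockwise string $(x^1,\ldots,x^n)\in\cL^n$ with $(\parity(x^1),\ldots,\parity(x^n))$ distributed as the ideal $p^U_b$. The noisy measurement therefore returns $y^i=x^i\oplus v^i\oplus f^i(s,b)$, and the postprocessing~\eqref{eq:postprocessingdefinition} outputs $\dec(x^i\oplus v^i)$, which by Condition~\eqref{it:conditionthree} equals $\parity(x^i)$ except with probability $\exp(-c'm^c)$, provided $q$ is below the decoder threshold. Choosing $m=\Theta(\poly(\log n))$ large enough that $n\exp(-c'm^c)<0.005$, and taking $\pthres$ small enough that the accumulated rate $q$ falls below the decoder threshold, a union bound over the $n$~codeblocks yields~\eqref{eq:faultycircuitsimulationweak}.

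The principal obstacle is the careful book-keeping of the noise propagation: one must check that the aggregate of the $D+1$ propagated errors together with the state-preparation residuals supplied by Condition~\eqref{it:conditiontwo} is jointly local stochastic with a parameter that is still polynomial in~$p$, and in particular that conditioning on the random syndromes~$s$ (which appear inside the classical correction $f(s,b)$) does not spoil the local stochastic property of $(v^1,\ldots,v^n)$. Both points are precisely what is handled in the proof of~\cite[Theorem~17]{BGKT}; since Theorem~\ref{thm:noisetolerance} is only a reformulation of that result, I would reuse the argument there essentially without modification, checking only that the slightly reorganized definitions of $f$, $h$, and the postprocessing~\eqref{eq:postprocessingdefinition} given here agree with those in~\cite{BGKT}.
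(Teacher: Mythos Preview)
Your proposal is correct and matches the paper's own treatment: the paper does not give an independent proof but simply refers to~\cite[Theorem~17]{BGKT}, noting that the result follows from Conditions~\ref{it:conditionone}--\ref{it:conditionthree} together with the error-propagation lemma~\cite[Lemma~11]{BGKT}. Your three-stage outline (propagate local stochastic errors to the final layer, apply the single-shot preparation guarantee of Condition~\ref{it:conditiontwo}, then the single-shot readout guarantee of Condition~\ref{it:conditionthree} with a union bound over the $n$ codeblocks) is exactly the argument of~\cite[Theorem~17]{BGKT} that the paper invokes, and you correctly identify the only delicate point---the joint local stochasticity after conditioning on the syndromes---as handled there.
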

The construction discussed in Theorem~\ref{thm:noisetolerance} is not exactly what is needed to show a fault-tolerant quantum advantage, for two reasons: First, the given construction does not involve locality considerations: We have made no assumptions, e.g., about a spatial arrangement of the qubits. In particular, we have not considered the question of whether or not the gates in the single-shot preparation circuit~$W$ are (geometrically) local.  Even if this is the case (for some geometry), this does not necessarily mean that the extended circuit~$\Uext$ is also geometrically local because the latter involves $n$~logical qubits.  As discussed in the introduction, non-local circuits mean that the local stochastic noise model is not physically well-motivated. We address this issue in Section~\ref{sec:3dlocalbuilding}. There we argue that the circuits relevant to our work are geometrically local in~$3D$. 

The second reason for which Theorem~\ref{thm:noisetolerance} is not directly applicable to quantum advantage considerations is the post-processing involved. As stated, Theorem~\ref{thm:noisetolerance} constitutes what can be considered a ``standard'' use of fault tolerance: Here a noisy implementation is used to weakly (approximately) simulate an ideal quantum circuit as expressed by Eq.~\eqref{eq:faultycircuitsimulationweak}. In addition to the overhead in the number of qubits and gates, this comes at the cost of having to classically compute the ``correction map'' given by the post-processing operation~\eqref{eq:postprocessingdefinition}. The functions $f^i:(\{0,1\}^{\manc})^n\rightarrow\{0,1\}^m$ used here generally fall outside the confines of constant-depth (classical) circuits, and should therefore not appear in a quantum circuit for demonstrating an advantage of shallow circuits.  We address how this issue is resolved  (following~\cite{BGKT})  in Section~\ref{sec:liftingmethods}: The post-processing can be incorporated into a new computational problem while preserving the hardness for classical circuits. This is what we mean by ``non-standard'' use of fault-tolerance techniques.

\subsection{3D-local fault-tolerant building blocks\label{sec:3dlocalbuilding}}
 In~\cite{BGKT}, it was shown that the so-called folded surface code~\cite{moussa2016transversal} (a variant of the original surface code~\cite{bravyi1998quantum}) is a quantum code~$\cQ_m$ for which Conditions~\ref{it:conditionone}--\ref{it:conditionthree} can be satisfied: Transversality of the relevant single-qubit gates (i.e., condition~\ref{it:conditionone}) was shown by Moussa in earlier work~\cite{moussa2016transversal}. A proof that single-shot decoding is possible below a threshold error strength was given in~\cite{BGKT}  following ideas from Refs.~\cite{fowler2012proof}. Condition~\ref{it:conditiontwo} is the most non-trivial aspect of the analysis of~\cite{BGKT}. Here a  slight variant of the condition was shown: namely, the given circuit prepares a logical Bell state (up to Pauli errors) encoded in two copies of the folded surface code. It is possible to obtain a single-shot state preparation procedure for the logical   state~$\ket{\overline{0}}$, i.e.,  Condition~\ref{it:conditiontwo}, from this construction, but we will not need this here. In fact, this single-shot logical Bell state preparation is key to making our circuit (as well as the one considered in~\cite{BGKT}) local.

 The  relevant variant of Condition~\ref{it:conditiontwo} is obtained by replacing the logical state~$\ket{\overline{0}}\in(\mathbb{C}^2)^{\otimes m}$ by
 the logical Bell state~$\ket{\overline{\Phi}}=\frac{1}{\sqrt{2}}(\ket{\overline{00}}+\ket{\overline{11}})\in (\mathbb{C}^2)^{\otimes m}\otimes (\mathbb{C}^2)^{\otimes m}$ encoded in two copies of the code~$\cQ_m$. It reads as follows:

\begin{description}
\item[{\bf Condition 2':}] There is a constant-depth Clifford circuit~$W$ which uses~$\manc$~auxiliary qubits such and a function~$\rec:\{0,1\}^\manc\rightarrow\pauli(2m)$ such that 
\begin{align}
(\rec(s)\otimes \proj{s})W(\ket{0^{2m}}\otimes\ket{0^{\manc}}) \propto \ket{\overline{\Phi}}\otimes\ket{s}\qquad\textrm{ for all }\qquad s\in \{0,1\}^{\manc}\ .
\label{eq:recoverymapconditiononebell}
\end{align}
Furthermore, 
there is a function $\rep:\pauli(2m+\manc)\rightarrow \pauli(2m)$ such that for any Pauli error~$E\in\pauli(2m+\manc)$
we have \begin{align}
(\rec(s)\otimes \proj{s})EW(\ket{0^m}\otimes\ket{0^{\manc}}) \propto \rep(E)\ket{\overline{\Phi}}\otimes\ket{s}\qquad\textrm{ for all }\qquad s\in \{0,1\}^{\manc}\ . \label{eq:recoverymapconditiontwobell}
\end{align}
Finally, there are constants  $c'$ and $c''$ such that 
\begin{align}
E\sim\cN(p)\qquad\textrm{ implies that }\qquad \rep(E)\sim \cN(c'p^{c''})\qquad\textrm{ for all noise rates }p\in [0,1]\ .
\end{align}
 \end{description}
 {\bf A realization by a local circuit.} 
 The construction given  in~\cite{BGKT} satisfies Conditions~\ref{it:conditionone},~\ref{it:conditiontwo}' and~\ref{it:conditionthree}, with the additional property that the circuit~$W$ is local in~$3D$. Here we only give a brief overview of the geometric arrangement of qubits, see~\cite{BGKT} for details.  Locality of the circuit is  defined in terms of a corresponding interaction graph (lattice): Qubits are associated with vertices of the graph, and a two-qubit gate is a nearest-neighbor gate if it acts on two qubits associated with adjacent vertices. We say that a circuit is $3D$-local if it consists of single-qubit and nearest-neighbor gates, and if the graph is embedded in~$\mathbb{R}^3$ in such a way that there is a constant~$\kappa>0$ such that  the Euclidean distance between any two adjacent vertices is upper bounded by~$\kappa$, and any ball of radius~$\kappa$ contains only~$O(1)$ qubits. For example, this is the case when the interaction graph is a regular lattice.

The construction of~\cite{BGKT} has polylogarithmic dependence
\begin{align}
    m,\manc\in \Theta(\poly(\log n))\label{eq:mmancscaling}
\end{align}
of the parameters~$m,\manc$ on~$n$. 
 The~$2m$ qubits (corresponding to two copies of the code~$\cQ_m$) in the definition of~$W$ are located on the ``left'' and ``right'' faces of a wedge (prism) in~$3D$, see Fig.~\ref{fig:wedge}.  Each of these two faces is tesselated by a square lattice and has~$m$ qubits sitting on a sublattice. The remaining~$\manc$ auxiliary qubits constitute the ``bulk'' of the prism; they are arranged on a regular lattice embedded in the wedge. If $d$~is 
 the linear size of the prism (which is proportial to the distance of the underlying surface code), then  $m\in \Theta(d^2)$ and $\manc\in \Theta(d^3)$. 
 The scaling~\eqref{eq:mmancscaling}
 is a consequence of the choice $d=\Theta(\poly(\log n))$.

 To describe the action of the circuit~$W$, it is helpful to recall how the wedge-like  geometric arrangement arises. The construction is based on the fact that the cluster state on a 3D~cubic lattice of size~$L\times L\times L$ has localizable entanglement on two opposing faces even if corrupted by noise. This was first established in seminal work by Raussendorf, Bravyi, and Harrington~\cite{raussendorf2005long}: they showed that measuring bulk qubits in the cluster state results in a surface-code encoded Bell state on the two opposing faces, up to a Pauli correction determined by the measurement outcomes. This was shown to be the case even for a certain thermal noise model. The analysis of~\cite{BGKT} generalizes this result to local stochastic noise by establishing Condition~\ref{it:conditiontwo}' for the surface code. The corresponding cluster state can be generated from a product state by a depth-$6$ $3D$-local Clifford circuit. In the context of quantum advantage experiments, one actually  needs the folded surface code instead in order to satisfy  Condition~\ref{it:conditionone}. This  is obtained by folding the cubic lattice along a diagonal plane, resulting in the described wedge. Thus~$W$ is a depth-$6$ Clifford circuit generating a folded version of the cluster state.

 \begin{figure}
 \centering
 \includegraphics[width=0.23\textwidth]{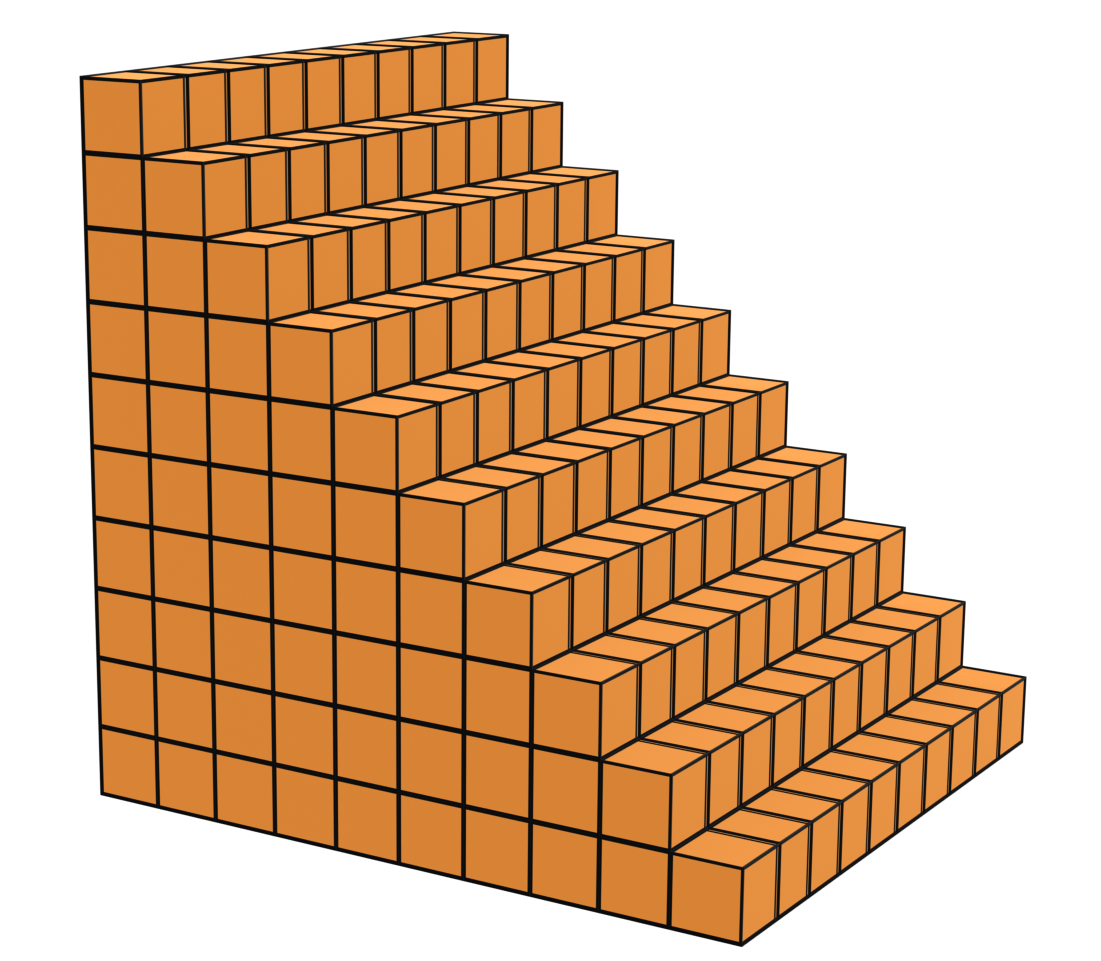}
 \caption{Geometric arrangement of qubits in the definition of the single-shot Bell state preparation procedure.\label{fig:wedge}}
 \end{figure}

 {\bf Geometric locality of the extend circuit.}
 Consider 
 a classically controlled Clifford circuit~$U$ specified by a family~$\{C_b\}_{b\in \{0,1\}^k}$ of Clifford circuits. As explained in~\cite[p.~32]{BGKT}, the associated fault-tolerant extension~$\Uext$ built using a quantum code~$\cQ_m$ satisfying Conditions~\ref{it:conditionone}--\ref{it:conditionthree} is generally not geometrically local. Non-local operations may be required as a consequence of any of the following:
\begin{enumerate}[(i)]
 \item\label{it:statepreparationWm}
The state preparation unitary~$W$ involves non-local gates.
\item\label{it:originalcliffordcircuitscliffgates}
The Clifford circuit~$C_b$ involves non-local (two-qubit) Clifford gates.
\item
The transversal implementation of a   logical one- or two-qubit Clifford gate may act  on all qubits of one or two codeblocks, respectively. If this is the case and the gate is classically controlled, the classical control may need to be geometrically non-local.\label{it:thirdissuenonlocality}
 \end{enumerate} 
 
 Non-locality in classical controls as in~\eqref{it:thirdissuenonlocality} are the least problematic here because they can be addressed by copying corresponding classical input (control) bits. This is achieved by suitably modifying the computational problem, see Section~\ref{sec:liftingmethods}.

 Fortunately, for certain circuits of interest including the single-qubit gate-teleportation circuit considered in our work, the potential non-locality arising from~\eqref{it:statepreparationWm} can be avoided by using the $3D$-local Bell state preparation circuit~$W$. Furthermore, the issue~\eqref{it:originalcliffordcircuitscliffgates} does not arise in our setup because the single-qubit gate-teleportation circuit is geometrically $1D$-local (with qubits arranged on a ring). This  is similar to the circuit for the  extended magic-square game considered in~\cite{BGKT}, which was geometrically $1D$-local (with qubits arranged on a line). In particular, this means that  the fault-tolerant extension~$\Uext$ associated with the gate-teleportation circuit~$\Utelep_n$ 
 can be made $3D$-local following the arguments of~\cite[p.~52]{BGKT}. We state this as follows, where we identify the single-qubit Clifford group (corresponding to inputs) with a subset of~$\{0,1\}^5$ (and similarly, identify Paulis (outputs) with a subset of~$\{0,1\}^2$.  
 To achieve locality also of the classical controls, each $5$-tuple  of the input~$b\in (\{0,1\}^5)^n$ needs to be provided at $m$~locations, i.e., copies of the corresponding bits are provided in several locations as input.
  \begin{theorem}\label{thm:faulttolerantegateteleportation}
 Let $\Utelep_n$ be the gate-teleportation circuit on~$n$ qubits. Then an associated fault-tolerant circuit~$\Uext$ can be realized in a $3D$-local manner using a total of~$O(n\mathsf{\poly}(\log n))$ qubits.  Assuming that copies of the input bits~$b\in (\{0,1\}^5)^n$ are provided in the relevant locations, the associated interaction graph is obtained by 
 gluing together wedges along a circle, resulting in a graph that tesselates a Colosseum-shaped $3D$~manifold that is homeomorphic to a solid torus in~$\mathbb{R}^3$, see Fig.~\ref{fig:torus}.
   There is a constant threshold~$p_{\textrm{th}}>0$ such that for all $p<p_{\textrm{th}}$, the input-output pair~$(b,z)$ obtained by post-processing
   the measurement result~$(s,y)$ for input~$b\in (\{0,1\}^5)^n$ according to~\eqref{eq:postprocessingdefinition} satisfies
   \begin{align}
   \Pr\left[(b,z)\in R_{\Utelep_n}\right]>0.99
   \end{align}
   for any noisy implementation of~$\Uext$ with error strength~$p$.
 \end{theorem}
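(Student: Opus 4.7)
The plan is to apply the general fault-tolerance recipe of Section~\ref{sec:faulttolerancebuildingblocks} to the classically controlled Clifford circuit $U=\Utelep_n$ using the folded surface code $\cQ_m$ of polylogarithmic distance from Section~\ref{sec:3dlocalbuilding}, and then to leverage the $1D$-local structure of $\Utelep_n$ (a ring of $2n$ qubits with nearest-neighbor Bell-pair preparations, classically controlled single-qubit Cliffords on one half of each pair, and Bell measurements between neighboring halves) in order to argue geometric locality in~$3D$. Concretely, I choose $m,\manc\in\Theta(\poly\log n)$ so that Conditions~\ref{it:conditionone}, Condition~$2'$, and~\ref{it:conditionthree} all hold by the results recalled in Section~\ref{sec:3dlocalbuilding}, giving a total qubit count of $n(2m+\manc)\in O(n\,\poly\log n)$ as required.

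For geometric locality, I replace each of the $n$ physical Bell pairs in $\Utelep_n$ by a logical Bell pair prepared by the $3D$-local wedge circuit $W$ from Fig.~\ref{fig:wedge}. The $n$ wedges are glued cyclically: each ``right'' face of one wedge shares its $m$-qubit logical boundary with the ``left'' face of the next, and the transversal logical Bell measurement between neighboring pairs is performed across that shared face. This produces exactly the Colosseum-shaped solid-torus region of Fig.~\ref{fig:torus}. Within each wedge, the transversal logical Clifford $\overline{C_j}$ is nearest-neighbor by Condition~\ref{it:conditionone}, and I make the classical control $3D$-local by duplicating each of the $5$~input bits specifying $C_j$ at every one of the $m$ physical-qubit sites where the transversal gate acts --- this is precisely the issue~\eqref{it:thirdissuenonlocality} flagged in Section~\ref{sec:3dlocalbuilding}, and it is harmless because duplication preserves both constant depth and $3D$-locality.

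Correctness then follows by combining two ingredients already in the paper. By Theorem~\ref{thm:mainquantumcircuit}, the ideal output distribution $p^{\Utelep_n}_b$ is supported on valid solutions of $R_{\Utelep_n}$ for every input $b$. By Theorem~\ref{thm:noisetolerance} applied to the extended circuit $\Uext$ constructed above, the distribution $q_b$ of the post-processed output of any noisy implementation of $\Uext$ satisfies $\|q_b-p^{\Utelep_n}_b\|_1<0.01$ whenever the noise strength is below the threshold $p_{\mathrm{th}}>0$ provided there. Consequently
\begin{align}
\Pr[(b,z)\in R_{\Utelep_n}]\ge 1-\|q_b-p^{\Utelep_n}_b\|_1>0.99\ .
\end{align}

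The principal technical obstacle is the geometric bookkeeping in the second paragraph: one must verify that gluing the wedges along a circle preserves $3D$-locality of the transversal Bell measurements across the shared faces of adjacent wedges, and that the classical-control wires can indeed be laid out locally after the $m$-fold duplication. This parallels the corresponding construction in~\cite{BGKT} where a line (rather than a ring) of wedges was used, and the argument there adapts by embedding the ring in $\mathbb{R}^3$ with a constant radius-to-cell-size ratio, so that every pair of adjacent wedges shares a $3D$-local boundary while the overall qubit density remains bounded by a constant.
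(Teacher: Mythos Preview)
Your proposal is correct and follows essentially the same approach as the paper: instantiate the general construction of Section~\ref{sec:faulttolerancebuildingblocks} with the folded-surface-code wedges of Section~\ref{sec:3dlocalbuilding}, glue the $n$~wedges cyclically, duplicate the classical control bits to handle issue~\eqref{it:thirdissuenonlocality}, and obtain the $0.99$~bound by combining Theorem~\ref{thm:mainquantumcircuit} with Theorem~\ref{thm:noisetolerance}. The only point where the paper adds detail beyond your sketch is the geometric bookkeeping you flag at the end: it carries out an explicit inner/outer-radius computation showing that with $L=\poly(\log n)$ sites per wedge edge and a suitable choice of radial spacing, the Euclidean distance between neighboring lattice sites in the Colosseum is both upper- and lower-bounded by absolute constants, confirming $3D$-locality quantitatively rather than by analogy to the line-of-wedges argument in~\cite{BGKT}.
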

 
 \begin{figure}
 \centering
 \includegraphics[width=0.5\textwidth]{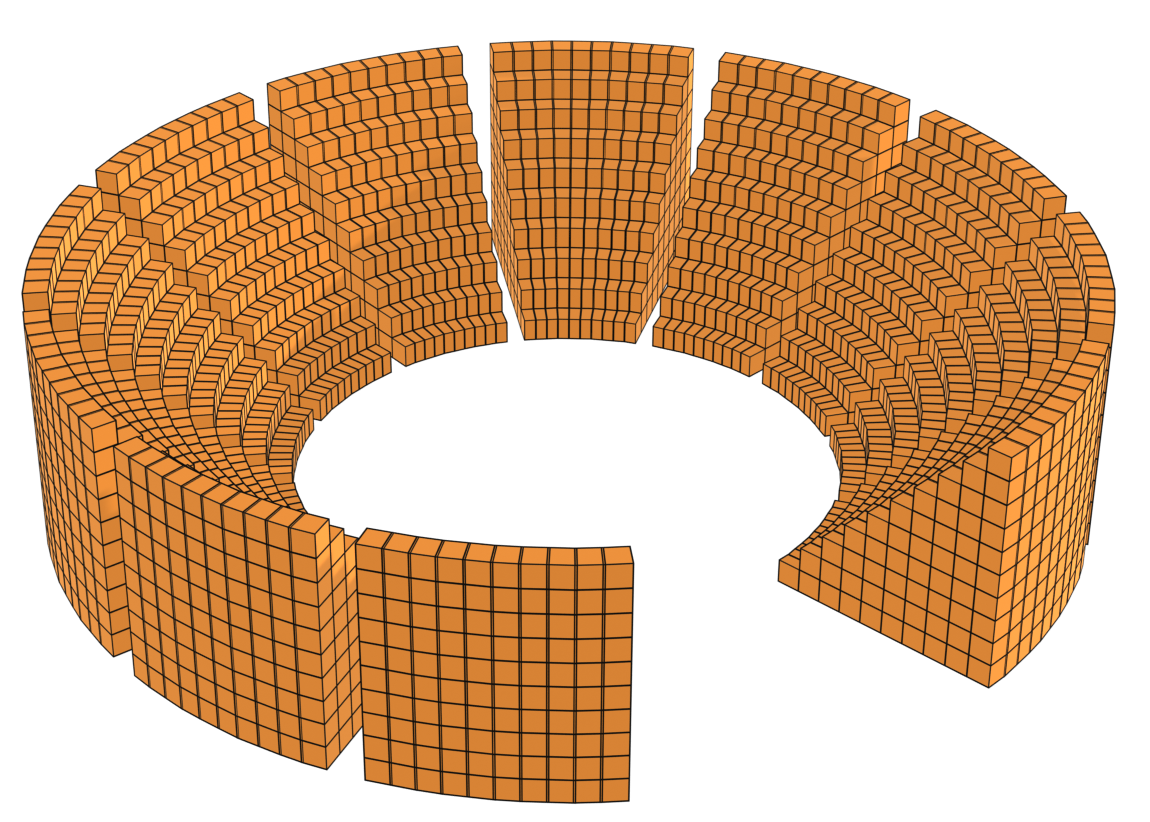}
  \caption{The Colosseum-shaped embedding in~$\mathbb{R}^3$ of the  
  interaction graph associated with the $3D$-local circuit realizing the fault-tolerant circuit~$\Uext$. By suitable post-processing, the circuit's output can be used to fault-tolerantly compute a valid solution to the relation problem~$R_{\Utelep_n}$ defined by the single-qubit gate-teleportation circuit~$\Utelep_n$.  
  \label{fig:torus}}
 \end{figure}

$3D$-locality of the circuit is a consequence of the fact 
that there are $n$~wedges of polylogarithmic size: Each wedge has~$\Theta(L)$  sites 
on each of its~$9$ bounding edges (when considered as a prism), with $L=\poly(\log n)$, see Eq.~\eqref{eq:mmancscaling}. Let $r_{in}$ and $r_{out}$ be the inner and outer radius of the Colosseum, i.e., $r_{in}$ is the radius of the arena, whereas~$r_{out}$ is the radius of the outer wall.
Let $\Delta_{out}$ be the geodesic distance between two neighboring sites at the base of the outer wall, i.e., on a circle of radius~$r_{out}$. We similarly define~$\Delta_{in}$. Let $\Delta_{R}$ be the ``radial'' (Euclidean) distance between two neighboring sites on the ground as one moves radially outwards from the arena.
Then 
\begin{align}
    r_{in}=r_{out}-L\cdot \Delta_R\ \label{eq:rinrout}
\end{align}
by definition. 
Since there are 
\begin{align}
    N_{out}=N_{in}=L\cdot n
\end{align}
points on the outer, as well as inner circle, we have 
\begin{align}
\Delta_{out}&=\frac{2\pi r_{out}}{N_{out}}=\frac{2\pi}{L\cdot n}r_{out}\\
\Delta_{in}&=\frac{2\pi r_{in}}{N_{in}}=\frac{2\pi}{L\cdot n}r_{in}\ .
\end{align}
Inserting~\eqref{eq:rinrout}
it follows that
\begin{align}
 \Delta_{in}&=\Delta_{out}-\frac{2\pi \Delta_R}{n}\ .   
\end{align}
In particular, choosing e.g., $\Delta_R=\Delta_{out}=:\Delta$ constant, it follows $\Delta_{in}$ is lower bounded by a constant (for any sufficiently large~$n$). Since the geodesic distance 
on a circle of radius~$r$ is equivalent to the Euclidean distance in~$\mathbb{R}^2$ (with constants independent of~$r$), it follows that
for this choice, the Euclidean distance between any pair of neighboring sites is both upper and lower bounded by a constant.

\subsection{Rendering quantum advantages noise-resilient~\label{sec:liftingmethods}}
Here we discuss how a quantum advantage can be made fault-tolerant. Our arguments follow~\cite{BGKT}. Consider a classically controlled Clifford circuit~$U$ that demonstrates a quantum advantage against a classical circuit class, e.g., consider the single-qubit gate-teleportation circuit~$\Utelep$ and $\AC^0$. As discussed in Section~\ref{sec:faulttolerancebuildingblocks}, any  classically controlled Clifford circuit~$U$ such as $\Utelep$ gives rise to a circuit~$\Uext$ with the property that the output of the latter can be post-processed to emulate the action of~$U$. Even though this provides a fault-tolerant solution of the problem~$R_{U}$ of possibilistically simulating~$U$, one cannot directly conclude that~$\Uext$ demonstrates a quantum advantage against the same classical circuit class. This is because of the additional classical post-processing that is required in the process. Instead, the idea of~\cite{BGKT} is to argue that the circuit~$\Uext$ solves a computational problem~$\cR_U$ different from the original problem~$R_U$, which is also computationally intractable for the classical circuit class considered. 

As with the original problem~$R_U$, the computational problem~$\cR_U$ is a relation problem, and we can specify it as a subset of input/output-pairs. Concretely, it is the subset
\begin{align}
\cR_U\subset \{0,1\}^v \times \left(\{0,1\}^{n\manc}\times \{0,1\}^{nm}\right)
\end{align}
defined as follows: We have $(b,(s,y))\in\cR_U$ if and only if 
\begin{align}
(b,z)\in R_U\qquad\textrm{ where }\qquad z_i=\dec(y^i\oplus f^i(s,b))\qquad \textrm{ for }\qquad i\in [n]\ .
\end{align}
The definition of this problem and Theorem~\ref{thm:noisetolerance} immediately imply that even a noisy implementation of~$\Uext$ provides a solution to the problem~$\cR_U$ with high probability, for any input. That is, we have 
\begin{theorem}(\cite[Theorem~17]{BGKT})
There is a constant threshold~$p_{\textrm{thres}}$ such that any noisy implementation of the circuit~$\Uext$ with error strength~$p<p_{\textrm{thres}}$ provides, on an arbitrary input~$b\in \{0,1\}^v$, an output~$(s,y)$ such that
\begin{align}
\Pr\left[(b,(s,y))\in\cR_U\right]>0.99\ .
\end{align}
\end{theorem}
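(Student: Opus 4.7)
The plan is to derive this statement as an almost immediate corollary of Theorem~\ref{thm:noisetolerance} combined with the definition of the relation~$\cR_U$. The key observation is that the post-processing rule~\eqref{eq:postprocessingdefinition} is built precisely so that a pair~$(s,y)$ produced by a noisy run of~$\Uext$ lies in~$\cR_U$ if and only if the classically computed string~$z$ with $z_i=\dec(y^i\oplus f^i(s,b))$ satisfies $(b,z)\in R_U$. Thus, by construction, the probability appearing in the statement equals $\Pr_{z\sim q_b}[(b,z)\in R_U]$, where $q_b$ denotes the distribution of~$z$ produced by the noisy implementation followed by~\eqref{eq:postprocessingdefinition}.

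Next I would invoke Theorem~\ref{thm:noisetolerance}: there is a threshold~$\pthres$ (depending only on the constant depth of~$U$) and a choice $m\in O(\poly(\log n))$ such that, for any noise strength $p<\pthres$, the distribution~$q_b$ is close in total variation to the ideal output distribution~$p^U_b$ of~$U$, i.e.,
\begin{align}
\left\|q_b-p^U_b\right\|_1<0.01\ .
\end{align}
This is the only non-trivial ingredient; its proof in~\cite{BGKT} relies on Conditions~\ref{it:conditionone}--\ref{it:conditionthree} (or~\ref{it:conditiontwo}') together with the propagation of local stochastic noise past constant-depth Clifford circuits.

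To conclude, observe that by the very definition of~$R_U$, namely that $(b,z)\in R_U$ if and only if $z$ lies in the support of~$p^U_b$, we trivially have $\Pr_{z\sim p^U_b}[(b,z)\in R_U]=1$. Combining this with the total variation bound yields
\begin{align}
\Pr\left[(b,(s,y))\in\cR_U\right]=\Pr_{z\sim q_b}[(b,z)\in R_U]\geq 1-\left\|q_b-p^U_b\right\|_1>0.99\ ,
\end{align}
which is the desired inequality, uniformly in~$b\in\{0,1\}^v$.

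The main (and only real) obstacle here is Theorem~\ref{thm:noisetolerance} itself: establishing the $L^1$-closeness of~$q_b$ to~$p^U_b$ is the content of the single-shot fault-tolerance machinery of Sections~\ref{sec:faulttolerancebuildingblocks}--\ref{sec:3dlocalbuilding} and is imported from~\cite{BGKT}. Once that closeness is available, the passage from the ``standard'' use of fault tolerance (weak simulation of~$p^U_b$) to membership in the modified relation~$\cR_U$ is purely a bookkeeping step enabled by the definition of~$\cR_U$, since the classical post-processing map~$(s,y)\mapsto z$ is absorbed into the relation itself.
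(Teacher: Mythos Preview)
Your proposal is correct and matches the paper's own treatment: the paper does not give a detailed proof but simply states that ``the definition of this problem and Theorem~\ref{thm:noisetolerance} immediately imply'' the result, which is precisely the argument you spell out. Your observation that $\Pr_{z\sim p^U_b}[(b,z)\in R_U]=1$ by the possibilistic definition of~$R_U$, together with the $L^1$-bound from Theorem~\ref{thm:noisetolerance}, is exactly the intended reasoning.
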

\noindent It remains to argue that the problem~$\cR_U$ is hard for classical circuits if this is the case for~$R_U$.  This is a consequence of the following result, where we specialize~\cite[Theorem~18]{BGKT} to the setting of interest here. It shows that a solution to the computational problem~$\cR_U$ can be ``decoded'' by an $\AC^0$-circuit to one for the original problem~$R_U$. We note that the observation that this post-processing function can be realized by an $\AC^0$-circuit of quasiexponential size was first made in~\cite[Lemma 16]{grier2021interactiveNoisy}.
\begin{theorem}(\cite[Theorem~18]{BGKT} and \cite[Lemma 16]{grier2021interactiveNoisy}) 
\label{thm:relatingextendedandoriginalproblem} 
There is an $\AC^0$-circuit~$\cC_{\mathsf{post}}$
of depth
\begin{align}
    \cdepth(\cC_{\mathsf{post}})\in O(\cdepth(U))
\end{align} and size
\begin{align}
    \csize(\cC_{\mathsf{post}})\in O(\exp(\poly( \log n)))
\end{align}
such that the following holds. 
Given any valid solution~$(b,(s,y))\in\cR_U$ of the relation problem~$\cR_U$, the output
$(b,z)=\cC_{\mathsf{post}}((b,(s,y)))$ of the circuit~$\cC_{\mathsf{post}}$ is a valid solution of  the problem~$R_U$, i.e., $(b,z)\in R_U$.
\end{theorem}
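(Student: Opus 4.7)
The plan is to construct $\cC_{\mathsf{post}}$ explicitly by unrolling the post-processing map defined in~\eqref{eq:postprocessingdefinition} as an $\AC^0$-circuit, and to verify component by component that the claimed depth and size bounds hold. I would decompose the target circuit into three stages corresponding to $z_i = \dec(y^i \oplus f^i(s,b))$: (a) a subcircuit computing the propagated $X$-correction $f(s,b) \in \{0,1\}^{nm}$ from $(s,b)$, as defined in~\eqref{eq:fdefinitioneq}; (b) a bitwise XOR with $y$, which is trivially realizable in depth~$1$ and size $O(nm)$; (c) $n$~parallel copies of the decoder $\dec\colon \{0,1\}^m \to \{0,1\}$, one per logical codeblock, followed by concatenating the bits into $z$ and appending a copy of $b$.

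The key technical step is stage~(a). Because $U$ is a constant-depth classically controlled Clifford circuit, the logical implementation $\overline{C_b}$ obtained from Condition~\eqref{it:conditionone} also has constant depth~$D$ in transversal gate layers. In each such layer, propagating a Pauli string forward is a $\ZZ_2$-linear map on $O(m)$ bits whose coefficients are determined by the identities of the individual transversal gates; those identities in turn are selected by a bounded number of control bits of~$b$. Each update can therefore be computed by an $\AC^0$ look-up on $O(1)$ control bits applied coordinate-wise. Composing $D=O(1)$ such layers, every output bit of $f(s,b)$ depends on a $\poly(m)=\poly(\log n)$-size window of $(s,b)$; the corresponding Boolean function can be realized brute-force in depth~$2$ and size $\poly(\log n)$, so stage~(a) has size $O(n\,\poly(\log n))$ and constant depth.

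Stage~(c) is the source of the quasi-polynomial size: $\dec$ is a Boolean function on $m=\poly(\log n)$ bits (see~\eqref{eq:mmancscaling}), so implementing it by its DNF truth table costs $O(m \cdot 2^{m}) = \exp(\poly(\log n))$ gates in depth~$2$. Running one such decoder per block gives size $n\cdot \exp(\poly(\log n)) = \exp(\poly(\log n))$ and constant depth. Adding the three stages, we obtain total depth $O(\cdepth(U))$ and total size $\exp(\poly(\log n))$. Correctness is immediate: by definition of $\cR_U$, $(b,(s,y)) \in \cR_U$ precisely when the output $z$ produced by the post-processing map~\eqref{eq:postprocessingdefinition} satisfies $(b,z)\in R_U$, and $\cC_{\mathsf{post}}$ simply evaluates this map.

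The main obstacle I anticipate is the careful bookkeeping in stage~(a): one must argue that the classical control bits of $U$ can be ``distributed'' to $\AC^0$ look-up gadgets at every transversal layer of $\overline{C_b}$ while keeping both depth and fan-out budgets under control, and verify that the resulting bitwise dependencies of $f(s,b)$ on $(s,b)$ really do remain of $\poly(\log n)$ locality across all $D = O(1)$ layers. A minor complication is handling the signs/phases in~\eqref{eq:fdefinitioneq}, but since only the $X$-part enters~\eqref{eq:postprocessingdefinition}, those can safely be discarded.
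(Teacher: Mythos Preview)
Your proposal is correct and follows essentially the same route as the paper: both implement the post-processing map~\eqref{eq:postprocessingdefinition} stage by stage---compute $f(s,b)$, XOR with~$y$, then apply $\dec$ per codeblock via its truth table---and both identify the truth-table realizations on $\poly(\log n)$-bit inputs as the source of the quasi-polynomial size. The paper organizes your stage~(a) slightly differently, first computing each $\rec(s^j)$ by a truth table of size $O(\exp(\manc))$ and then propagating the resulting Pauli layer by layer through $\overline{C_b}$ with a bounded-fan-in circuit of depth~$O(\cdepth(U))$, whereas you fold both steps into a single brute-force per output bit; either decomposition works.

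One arithmetical slip to fix: a Boolean function on a $\poly(\log n)$-size window cannot in general be realized in depth~$2$ and size $\poly(\log n)$; the brute-force DNF costs $2^{\poly(\log n)}=\exp(\poly(\log n))$ gates. This is harmless for the theorem since the overall size bound is $\exp(\poly(\log n))$ anyway, but the stated size of stage~(a) should be corrected accordingly.
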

\begin{proof}
As argued in~\cite[Proof of Theorem~18]{BGKT},
the output~$z\in \{0,1\}^n$ can be computed from~$(s,y)$ from the post-processing defined by~\eqref{eq:postprocessingdefinition} by a classical circuit~$\cC_{\mathsf{post}}$. This circuit first computes the usual $2n$-bit representation of the $n$-qubit Pauli operator~$\rec(s)=\bigoplus_{j=1}^n \rec(s^j)$. Here $s^j$ consists of $\manc$ bits of~$s$ associated with the $j$-th logical qubit.  This is achieved by considering each of the $n$~copies of the function~$\rec:\{0,1\}^{\manc}\rightarrow\pauli(m)$. As noted in \cite[Lemma 16]{grier2021interactiveNoisy}, this function can be implemented by an $\AC^0$-circuit of depth $4$ and size~$O(\exp(\manc))$ using a truth-table.
 
Once~$\rec(s)\in \pauli(m)^{\otimes n}\subset \pauli(mn)$ has been computed, we can obtain~$f^i(s,b)$ for any fixed~$i\in [n]$ as follows.  By relying on the standard stabilizer formalism and working in a gate-by-gate fashion, the result~$P\mapsto CPC^\dagger$ of propagating a single Pauli operator~$P$ through a Clifford circuit~$C$ composed of $1$- and $2$-qubit gates can be computed by a depth-$1$ classical circuit of size~$D$ whose gates have fan-in at most~$4$. Recalling the definition of the logical circuit~$\overline{C}_b$ in terms of transversal Clifford gates, we conclude that for any Pauli operator~$P\in\pauli(n\cdot m)$, the function~$b\mapsto \overline{C}_b P\overline{C}_b^\dagger$ can be computed by a circuit of depth~$\cdepth(U)$ and fan-in~$4$.  By definition of~$f$ (see Eq.~\eqref{eq:fdefinitioneq}), this means that $f^i(s,b)$ can be computed from~$(s,b)$ by an $\AC^0$-circuit of depth~$O(\cdepth(U))+O(1)$  and size~$O(\exp(\manc))$.

 Once~$f^i(s,b)$ has been computed, the output~$z_i$ (see Eq.~\eqref{eq:postprocessingdefinition}) can be obtained by first taking the bitwise XOR ~$y^i\oplus f^i(s,b)$ of two~$\manc$-bits strings. This can be achieved by a depth-$1$ circuit  of $\manc$ XOR-gates of fan-in~$2$. Finally, we apply the function~$\dec:\{0,1\}^m\rightarrow \{0,1\}$. We again note that $\dec$ can be computed by an $\AC^0$-circuit of size $O(\exp(m))$ by using a truth-table.
 
 In summary, for each~$i\in [n]$, the bit~$z_i$ is computed by an $\AC^0$-circuit
  of constant depth and size at most $O(\exp(\max\{\manc, m\}))$.
 Recalling that in our case, both $\max\{\manc,m\}=\manc$ has a polylogarithmic dependence  on~$n$ and the claim follows.

\end{proof}

\subsection{Noise-resilient quantum advantage against $AC^0$ with a $3D$-local architecture}
We can now show our main result:

\begin{theorem}[Colossal quantum advantage]\label{thm:mainresult}
There  are functions $n_i(n), n_o(n)\in \Theta(n\poly(\log n))$, a
relation
\begin{align}
\tilde{\cR}_n\subset \{0,1\}^{n_i(n)}\times 
\{0,1\}^{n_o(n)}\ ,
\end{align}
 a set of instances $\tilde{S}_n\subset \{0,1\}^{n_i(n)}$, a threshold~$p_{\textrm{thres}}>0$ and a constant $\delta>0$ 
 such that the following holds for all sufficiently large~$n$.
\begin{enumerate}[(i)]
\item\label{it:firstclaimquantumb}
There is a classically controlled, Colosseum shaped (Fig.~\ref{fig:torus}), $3D$-local constant-depth quantum circuit~$\tilde{U}$ such that for any noisy implementation of~$\tilde{U}$ with error strength~$p<p_{\textrm{thres}}$, the output~$o\in \{0,1\}^{n_o(n)}$ produced by measurement on an arbitrary input~$i\in \tilde{S}_n$ satisfies $(i,o)\in\tilde{\cR}_n$ with probability at least~$0.99$.
\item\label{it:aczeroclaimmain} Suppose~$\cC$ is an $\AC^0$-circuit which
--- with an average probability greater than~$0.9888$ over a uniformly chosen instance~$i\in\tilde{S}_n$ ---
produces an output $o\in \{0,1\}^{n_o(n)}$ such that $(i,o)\in\tilde{\cR}_n$. Then its size is at least $\csize(\cC)> e^{n^{1/(\delta \cdepth(\cC))}}$.
\end{enumerate}
\end{theorem}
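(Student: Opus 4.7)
The plan is to instantiate the lifting framework of Section~\ref{sec:liftingmethods} with the single-qubit gate-teleportation circuit~$\Utelep_n$ as the underlying classically controlled Clifford circuit. Specifically, I will take $\tilde{U} := \Uext$ to be the $3D$-local, Colosseum-shaped, constant-depth, fault-tolerant extension of~$\Utelep_n$ provided by Theorem~\ref{thm:faulttolerantegateteleportation}; set $\tilde{\cR}_n := \cR_{\Utelep_n}$ in the sense of Section~\ref{sec:liftingmethods}; and take~$\tilde{S}_n \subset \{0,1\}^{n_i(n)}$ to be the set of input strings that encode some tuple $b \in (\{0,1\}^5)^n$ of single-qubit Cliffords via the bit-encoding maps of Section~\ref{sec:FromCliffordsToBits}, together with the copies of each $5$-bit block needed to make the classical controls geometrically local in the Colosseum layout of Fig.~\ref{fig:torus}. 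This choice is engineered so that the uniform distribution on~$\tilde{S}_n$ is in bijection with the uniform distribution on~$(\{0,1\}^5)^n$.

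Part~\ref{it:firstclaimquantumb} then follows essentially by unpacking definitions: Theorem~\ref{thm:faulttolerantegateteleportation} already guarantees that for any input~$b \in \tilde{S}_n$ and any noisy implementation of~$\Uext$ with error strength~$p < p_{\textrm{thres}}$, the measurement output~$(s,y)$ satisfies $(b,(s,y)) \in \cR_{\Utelep_n}$ with probability at least~$0.99$.

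For part~\ref{it:aczeroclaimmain}, I would argue by contradiction. Suppose an $\AC^0$-circuit~$\cC$ of depth~$d = \cdepth(\cC)$ achieves average success probability greater than~$0.9888$ on~$\tilde{\cR}_n$ over uniform inputs from~$\tilde{S}_n$. Composing~$\cC$ with the $\AC^0$-decoder~$\cC_{\mathsf{post}}$ of Theorem~\ref{thm:relatingextendedandoriginalproblem}, which has constant depth~$d_0$ and size $\exp(\poly(\log n))$ and which maps any valid solution of~$\cR_{\Utelep_n}$ to a valid solution of~$R_{\Utelep_n}$, yields an $\AC^0$-circuit that solves the unencoded gate-teleportation relation~$R_{\Utelep_n}$ on bit-encoded instances with the same average success probability. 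Pulling back to random Clifford tuples via Lemma~\ref{lem:clifftobitprob} costs an arbitrarily small loss in success probability, so one still exceeds the~$0.9888$ threshold of Corollary~\ref{cor:ac0main}. The corollary then forces the composed circuit to have size at least~$\exp(n^{1/(20(d+d_0))})$; since $\csize(\cC_{\mathsf{post}}) = \exp(\poly(\log n))$ is dominated by this quantity for~$n$ sufficiently large, we conclude $\csize(\cC) \geq \tfrac{1}{2}\exp(n^{1/(20(d+d_0))})$, which exceeds $\exp(n^{1/(\delta d)})$ for any universal constant~$\delta \geq 20(1+d_0)$.

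The two subtleties I expect to require the most care are: (a) the bookkeeping of depth and size across the composition with~$\cC_{\mathsf{post}}$, namely verifying that~$d_0$ is an absolute constant independent of~$\cC$ so that~$\delta$ is genuinely universal, and (b) verifying that the small non-uniformity introduced by mapping bit-strings to Cliffords (Lemma~\ref{lem:clifftobitprob}) can be absorbed without dropping below the~$0.9888$ average-probability hypothesis required by Corollary~\ref{cor:ac0main}. Neither step is conceptually deep, but both are the places where the constants must be tracked carefully; the rest of the argument is essentially a modular plug-in of Theorems~\ref{thm:faulttolerantegateteleportation} and~\ref{thm:relatingextendedandoriginalproblem} on the quantum/decoding side and Corollary~\ref{cor:ac0main} on the classical hardness side.
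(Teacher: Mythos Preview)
Your overall strategy matches the paper's: instantiate the lifting framework with~$\Utelep_n$, invoke Theorem~\ref{thm:faulttolerantegateteleportation} for part~\eqref{it:firstclaimquantumb}, and for part~\eqref{it:aczeroclaimmain} compose any putative $\AC^0$-solver with the decoder~$\cC_{\mathsf{post}}$ of Theorem~\ref{thm:relatingextendedandoriginalproblem} and appeal to the $\AC^0$ size lower bound for the unencoded problem.

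There is, however, one confused step: your invocation of Lemma~\ref{lem:clifftobitprob} is both unnecessary and in the wrong direction. That lemma shows that high average success on uniform \emph{Clifford} tuples transfers (via some encoding) to high success on uniform \emph{bitstrings}, not the converse you want. The paper sidesteps this entirely by taking $\tilde{S}_n := \mathsf{copy}(\cliff^n)$, i.e.\ in bijection with~$\cliff^n$ rather than with~$(\{0,1\}^5)^n$ as you write. The uniform distribution on~$\tilde{S}_n$ then corresponds directly to the uniform distribution on~$\cliff^n$, so after composing with~$\cC_{\mathsf{post}}$ (and undoing the trivial $\mathsf{copy}$ fan-out) one lands exactly in the hypothesis of Corollary~\ref{cor:ac0main} with no probability loss whatsoever. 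If instead you keep $\tilde{S}_n$ in bijection with all of~$(\{0,1\}^5)^n$ via a fixed surjective encoding, then Theorem~\ref{thm:ac0boundbits} applies directly (its threshold~$0.988$ is weaker than your assumed~$0.9888$). Either way your subtlety~(b) dissolves; Lemma~\ref{lem:clifftobitprob} plays no role at this stage.

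On the constants: the paper takes~$\delta = 40(1+\cdepth(\cC_{\mathsf{post}}))$ rather than your~$20(1+d_0)$. The extra factor of~$2$ is just bookkeeping to absorb the additive $\csize(\cC_{\mathsf{post}})=\exp(\poly\log n)$ into a single exponent bound; your route of subtracting it after invoking the corollary also works, though you need strict inequality $\delta>20(1+d_0)$ for the $d=1$ case.
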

We remark that the constant~$0.99$ can be replaced by any other constant by modifying the parameters of the construction, and is merely chosen for concreteness. 

\begin{proof}
Recall that the circuit~$\Utelep_n$ takes  elements of~$\cliff^n$  as input  and produces (upon measurement in the computational basis), an output from~$\pauli^n$.  Representing each output Pauli by $2$~bits, the fault-tolerance construction
of interest is essentially that given by a relation
\begin{align}
\cR_{\Utelep_n}\subset \cliff^n\times 
\left(\{0,1\}^{(2n)\manc}\times \{0,1\}^{(2n)m}
\right)
\end{align}
where $m,\manc=\poly(\log n)$, see Eq.~\eqref{eq:mmancscaling}.  The only modification we need to make to~$\cR_{\Utelep_n}$
is to provide the relevant input  entry (Clifford)  at each of the~$n$ pairs of the $2m$~pairwise neighboring physical qubits (associated with two boundaries of two neighboring wedges) encoding neighboring (logical)  qubits. This  allows for a local classical  control of the transversal Clifford gates in the associated quantum circuit. 
We also embed the set~$\cliff$ into~$\{0,1\}^5$ with an arbitrary (but fixed) injective map. This leads to a trivially  modified relation
\begin{align}
\tilde{\cR}_n&\subset \{0,1\}^{5n m}\times \left(
\{0,1\}^{(2n)\manc}\times \{0,1\}^{(2n)m}
\right)\\
&\subset \{0,1\}^{\Theta(n\poly(\log(n)))}\times \left(
\{0,1\}^{\Theta(n\poly(\log n))}\times \{0,1\}^{\Theta(n \poly(\log n))}
\right)\\
&=\{0,1\}^{\Theta(n\poly(\log(n)))}\times \{0,1\}^{\Theta(n\poly(\log n))}
\end{align}
as claimed. The described way of copying information about input Cliffords naturally defines an injective map~$\mathsf{copy}:\cliff^n\rightarrow \{0,1\}^{5nm}$. We define the set~$\tilde{S}_n$ of instances as the image~$\tilde{S}_n:=\mathsf{copy}(\cliff^n)$ of the set~$\cliff^n$ of all instances of the original problem~$R_{\Utelep_n}$.

The claim~\eqref{it:firstclaimquantumb} is shown in  Theorem~\ref{thm:relatingextendedandoriginalproblem}, see the arguments before Theorem~\ref{thm:faulttolerantegateteleportation} for the $3D$-locality of the corresponding circuit.

Let us now turn to the proof of Claim~\eqref{it:aczeroclaimmain}.
Theorem~\ref{thm:relatingextendedandoriginalproblem} tells us that there is an $\AC^0$-circuit $\cC_{\mathsf{post}}$ that, given a pair~$(b,(s,y))\in\cR_{U}$, produces a pair~$(b,z)\in R_U$. Furthermore, $\cC_{\mathsf{post}}$ is of constant depth~$\cdepth(\cC_{\mathsf{post}})\in O(1)$
(since this is the case for~$\Utelep$) and its size is at most $O(\exp\poly(\log(n)))$. 
Define 
\begin{align}
    \delta&:=40(1+\cdepth(\cC_{\mathsf{post}}))\ .
\end{align}
The proof of Claim~\eqref{it:aczeroclaimmain} is now by contradiction:
Assume that there is an $\AC^0$-circuit $\cC$ 
satisfying
\begin{align}
    \csize(\cC)<e^{n^{1/(\delta \cdepth(\cC))}}\label{eq:contrdictingassumpt}
\end{align}
such that (for sufficiently large $n$)
\begin{align}
    \Pr_{b\in \tilde{S}_n}\left[ (b,\cC(b))\in\mathcal{R}_n \right] \ge 0.9888\ .
\end{align}
Concatenating both circuits we obtain an $\AC^0$-circuit $\cC_{\mathsf{post}}\circ\cC$ of depth 
\begin{align}
\cdepth(\cC_{\mathsf{post}}\circ\cC)&=\cdepth(\cC)+\cdepth(\cC_{\mathsf{post}})\\
&\leq (1+\cdepth(\cC_{\mathsf{post}}))\cdot \cdepth(\cC)\\
&\leq \frac{\delta}{40}\cdepth(\cC) \label{eq:CpostCDepth}
\end{align}
and size
\begin{align}
    \csize(\cC_{\mathsf{post}}\circ\cC)&
    =\csize(\cC)+\csize(\cC_{\mathsf{post}}) \\
    &<e^{n^{1/(\delta \cdepth(\cC))}}+\exp(\poly(\log n))\\
    &<
    e^{n^{2/(\delta \cdepth(\cC))}} \\
    &\le e^{n^{1/(20 \cdepth(\cC_{\mathsf{post}}\circ\cC))}} && \textrm{ by Eq.~\eqref{eq:CpostCDepth}}
\end{align}
which satisfies 
\begin{align}
    \Pr_{s\in \tilde{S}_n}\left[ (\mathsf{copy}^{-1}(s),\cC_{\mathsf{post}}\circ\cC(s))\in\telepR \right] \ge 0.9888\ .
\end{align}
Therefore we can use the concatenated circuit $\cC_{\mathsf{post}}\circ\cC$, which satisfies
\begin{align}
    \csize(\cC_{\mathsf{post}}\circ\cC) < e^{n^{1/(20\cdepth(\cC_{\mathsf{post}}\circ\cC))}}
\end{align}
to solve the original problem $\telepR$ with average probability at least $0.9888$ over uniformly random instances from $\tilde{S}_n$. By definition, $\tilde{S}_n$ is a set of all images of $n$-tuples~$C\in \cliff^n$ obtained by the $\mathsf{copy}$ function. This means that
\begin{align}
    \Pr_{s\in \tilde{S}_n}\left[ (\mathsf{copy}^{-1}(s),\cC_{\mathsf{post}}\circ\cC(s))\in\telepR \right] = \Pr_{C\in\cliff^n}\left[ (C,\cC_{\mathsf{post}}\circ\cC\circ \mathsf{copy}(C))\in\telepR \right]\ ,
\end{align}
where we slightly abused the notation and allowed the function $\mathsf{copy}$ to accept inputs in the form of a Clifford gate as well as binary representation of Clifford gate. This gives us a circuit $\cC'=\cC_{\mathsf{post}}\circ\cC\circ \mathsf{copy}$ that solves $\telepR$ with average probability at least $0.9888$ over uniformly random instances from $\cliff^n$. This circuit has size at most 
\begin{align}
    \csize(\cC')<e^{n^{1/(20\cdepth(\cC'))}}\ ,
\end{align}
since $\mathsf{copy}$ can be realised as fan-out of the input nodes and does not contribute to the standard definition of circuit size.

On the other hand, Corollary~\ref{cor:ac0main} tells us that any $\AC^0$ circuit $\cC''_n$  that solves the $\telepR$ problem with average success probability
\begin{align}
    \Pr_{C\in\cliff^n}\left[
\left(C, \cC''_n(C)\right)\in \telepR
\right]\geq 0.9888\
\end{align}
has size at least
\begin{align}
    \csize(\cC''_n)>e^{n^{1/(20 \cdepth(\cC''_n))}}\ .
\end{align}
This is a contradiction to~\eqref{eq:contrdictingassumpt}.

\end{proof}

\subsection*{Acknowledgments} We thank Hjalmar Rall for help in making Fig.~\ref{fig:wedge} and Fig.~\ref{fig:torus}, and Michael de Oliveira for useful discussions. LC and RK gratefully acknowledge support by the European Research Council under grant agreement no.~101001976 (project EQUIPTNT),
as well as the Munich Quantum
Valley, which is supported by the Bavarian state government
with funds from the Hightech Agenda Bayern Plus.  XCR thanks the Swiss National Science Foundation (SNSF) for their support.


\begin{thebibliography}{10}

\bibitem{Aharonov1997}
Dorit~Aharonov and Michael~Ben-Or.
\newblock Fault-tolerant quantum computation with constant error.
\newblock In {\em Proceedings of the twenty-ninth annual {ACM} symposium on
  Theory of computing - {STOC} {\textquotesingle}97}. {ACM} Press, 1997.

\bibitem{Ajtai1983}
Mikl{\'o}s Ajtai.
\newblock {$\Sigma^1_1$-Formulae on finite structures}.
\newblock {\em Ann. Pure Appl. Log.}, 24:1--48, 1983.

\bibitem{Beame1994ASL}
Paul Beame.
\newblock A switching lemma primer.
\newblock Technical report, Technical Report UW-CSE-95-07-01, Department of
  Computer Science and Engineering, University of Washington, 1994.

\bibitem{BeneWattsKothariSchaefferTalAC0}
Adam Bene~Watts, Robin Kothari, Luke Schaeffer, and Avishay Tal.
\newblock Exponential separation between shallow quantum circuits and unbounded
  fan-in shallow classical circuits.
\newblock In {\em Proceedings of the 51st Annual ACM SIGACT Symposium on Theory
  of Computing}, STOC 2019, page 515–526, New York, NY, USA, 2019.
  Association for Computing Machinery.

\bibitem{bombin2015single}
H{\'e}ctor Bomb{\'\i}n.
\newblock Single-shot fault-tolerant quantum error correction.
\newblock {\em Physical Review X}, 5(3):031043, 2015.

\bibitem{BGKT}
Sergey Bravyi, David Gosset, Robert K{\"o}nig, and Marco Tomamichel.
\newblock Quantum advantage with noisy shallow circuits.
\newblock {\em Nature Physics}, 16(10):1040--1045, 2020.

\bibitem{BGK}
Sergey Bravyi, David Gosset, and Robert König.
\newblock Quantum advantage with shallow circuits.
\newblock {\em Science}, 362(6412):308--311, 2018.

\bibitem{bravyi1998quantum}
Sergey~B. Bravyi and Alexei~Yu. Kitaev.
\newblock Quantum codes on a lattice with boundary, 1998.
\newblock arXiv:9811052.

\bibitem{teleppaper}
Libor Caha, Xavier Coiteux-Roy, and Robert Koenig.
\newblock Single-qubit gate teleportation provides a quantum advantage, 2022.
\newblock arXiv preprint arXiv:2209.14158.

\bibitem{calderbank1996good}
A~Robert Calderbank and Peter~W Shor.
\newblock Good quantum error-correcting codes exist.
\newblock {\em Physical Review A}, 54(2):1098, 1996.

\bibitem{fawzi2018constant}
Omar Fawzi, Antoine Grospellier, and Anthony Leverrier.
\newblock Constant overhead quantum fault-tolerance with quantum expander
  codes.
\newblock In {\em 2018 IEEE 59th Annual Symposium on Foundations of Computer
  Science (FOCS)}, pages 743--754. IEEE, 2018.

\bibitem{fowler2012proof}
Austin~G Fowler.
\newblock Proof of finite surface code threshold for matching.
\newblock {\em Physical review letters}, 109(18):180502, 2012.

\bibitem{furst_parity_1984}
Merrick Furst, James~B. Saxe, and Michael Sipser.
\newblock Parity, circuits, and the polynomial-time hierarchy.
\newblock {\em Mathematical systems theory}, 17(1):13--27, December 1984.

\bibitem{LeGallAveragecase}
Fran{\c{c}}ois~Le Gall.
\newblock {Average-Case Quantum Advantage with Shallow Circuits}.
\newblock In Amir Shpilka, editor, {\em 34th Computational Complexity
  Conference (CCC 2019)}, volume 137 of {\em Leibniz International Proceedings
  in Informatics (LIPIcs)}, pages 21:1--21:20, Dagstuhl, Germany, 2019. Schloss
  Dagstuhl--Leibniz-Zentrum fuer Informatik.

\bibitem{gottesmanoverhead}
Daniel Gottesman.
\newblock Fault-tolerant quantum computation with constant overhead.
\newblock {\em Quantum Info. Comput.}, 14(15–16):1338–1372, nov 2014.

\bibitem{GottesmanChuangNature}
Daniel Gottesman and Isaac~L. Chuang.
\newblock Demonstrating the viability of universal quantum computation using
  teleportation and single-qubit operations.
\newblock {\em Nature}, 402:390--393, 1999.

\bibitem{grier2021interactiveNoisy}
Daniel Grier, Nathan Ju, and Luke Schaeffer.
\newblock Interactive quantum advantage with noisy, shallow clifford circuits,
  2021.
\newblock arXiv:2102.06833.

\bibitem{grier2020interactive}
Daniel Grier and Luke Schaeffer.
\newblock {Interactive shallow Clifford circuits: Quantum advantage against
  NC$^1$ and beyond}.
\newblock In {\em Proceedings of the 52nd Annual ACM SIGACT Symposium on Theory
  of Computing}, pages 875--888, 2020.

\bibitem{hasegawalegallArbitrarycorruption}
Atsuya Hasegawa and Fran\c{c}ois Le~Gall.
\newblock {Quantum Advantage with Shallow Circuits Under Arbitrary Corruption}.
\newblock In Hee-Kap Ahn and Kunihiko Sadakane, editors, {\em 32nd
  International Symposium on Algorithms and Computation (ISAAC 2021)}, volume
  212 of {\em Leibniz International Proceedings in Informatics (LIPIcs)}, pages
  74:1--74:16, Dagstuhl, Germany, 2021. Schloss Dagstuhl -- Leibniz-Zentrum
  f{\"u}r Informatik.

\bibitem{Hastad86}
Johan H{\aa}stad.
\newblock {Almost Optimal Lower Bounds for Small Depth Circuits}.
\newblock In {\em Proceedings of the Eighteenth Annual ACM Symposium on Theory
  of Computing}, STOC '86, page 6–20, New York, NY, USA, 1986. Association
  for Computing Machinery.

\bibitem{hoyerspalek}
Peter H{\o}yer and Robert {\v S}palek.
\newblock Quantum fan-out is powerful.
\newblock {\em Theory of Computing}, 1(5):81--103, 2005.

\bibitem{mermin}
David Mermin.
\newblock Extreme quantum entanglement in a superposition of macroscopically
  distinct states.
\newblock {\em Physical Review Letters}, 65(15):1838, 1990.

\bibitem{moussa2016transversal}
Jonathan~E. Moussa.
\newblock Transversal clifford gates on folded surface codes.
\newblock {\em Physical Review A}, 94(4):042316, 2016.

\bibitem{PERES1990107}
Asher Peres.
\newblock Incompatible results of quantum measurements.
\newblock {\em Physics Letters A}, 151(3):107--108, 1990.

\bibitem{raussendorf2005long}
Robert Raussendorf, Sergey Bravyi, and Jim Harrington.
\newblock Long-range quantum entanglement in noisy cluster states.
\newblock {\em Physical Review A}, 71(6):062313, 2005.

\bibitem{Rossman2017AnEP}
Benjamin Rossman.
\newblock {An entropy proof of the switching lemma and tight bounds on the
  decision-tree size of AC0}.
\newblock 2017.
\newblock \url{http://www.math.toronto.edu/rossman/logsize.pdf}.

\bibitem{HastadThesis}
Johan~H\aa stad.
\newblock {\em {Computational limitations for small-depth circuits}}.
\newblock PhD thesis, MIT, 1987.

\bibitem{steane1996multiple}
Andrew Steane.
\newblock Multiple-particle interference and quantum error correction.
\newblock {\em Proceedings of the Royal Society of London. Series A:
  Mathematical, Physical and Engineering Sciences}, 452(1954):2551--2577, 1996.

\bibitem{terhal2002adaptive}
Barbara~M. Terhal and David~P. DiVincenzo.
\newblock Adaptive quantum computation, constant depth quantum circuits and
  {A}rthur-{M}erlin games.
\newblock {\em Quant. Inf. Comp.}, 4(2):134--145, 2004.

\bibitem{wang2021possibilistic}
Daochen Wang.
\newblock {Possibilistic simulation of quantum circuits by classical circuits}.
\newblock {\em Phys. Rev. A}, 106:062430, Dec 2022.

\bibitem{Yao}
Andrew Chi-Chih Yao.
\newblock {Separating the Polynomial-time Hierarchy by Oracles}.
\newblock In {\em {26th Annual Symposium on Foundations of Computer Science
  (SFCS 1985)}}, pages 1--10, 1985.

\end{thebibliography}
\end{document}